\PassOptionsToPackage{unicode}{hyperref}
\PassOptionsToPackage{hyphens}{url}
\PassOptionsToPackage{dvipsnames,svgnames,x11names}{xcolor}

\documentclass[
  12pt]{article}
\usepackage{authblk}
\usepackage{enumitem}
\usepackage{tikz-cd}
\usepackage{tikz}
\usepackage{amsthm} 
\usepackage{tabularx}
\newtheorem{coro}{Corollary}
\newtheorem{theorem}{Theorem}

\newtheorem{lemma}{Lemma}

\newtheorem{prop}{Proposition}
\newtheorem{assumption}{Assumption}

\newtheorem{remark}{Remark}
\newtheorem*{remark*}{Remark}

\newcommand{\proofpart}[1]{%
  \par\medskip
  \noindent\textbf{#1}\par
  \smallskip
}

\def\E{\mathbb E}
\def\p{\mathbb P}
\def\var{\mathrm{var}}
\def\cov{\mathrm{cov}}
\def\logit{\mathrm{logit}}
\def\tauate{\tau_{\textup{ATE}}}
\def\tauatt{\tau_{\textup{ATT}}}
\def\tauatc{\tau_{\textup{ATC}}}
\def\tauow{\tau_{\textup{ATO}}}

\def\nk{n_{[k]}}

\def\TSLS{\tau_{\mathrm{TSLS}}}

\newtheorem{example}{Example}

\usepackage{amsmath,amssymb}
\usepackage{iftex}
\ifPDFTeX
  \usepackage[T1]{fontenc}
  \usepackage[utf8]{inputenc}
  \usepackage{textcomp} % provide euro and other symbols
\else % if luatex or xetex
  \usepackage{unicode-math}
  \defaultfontfeatures{Scale=MatchLowercase}
  \defaultfontfeatures[\rmfamily]{Ligatures=TeX,Scale=1}
\fi
\usepackage{lmodern}
\ifPDFTeX\else  
\fi
\IfFileExists{upquote.sty}{\usepackage{upquote}}{}
\IfFileExists{microtype.sty}{% use microtype if available
  \usepackage[]{microtype}
  \UseMicrotypeSet[protrusion]{basicmath} % disable protrusion for tt fonts
}{}
\makeatletter
\@ifundefined{KOMAClassName}{% if non-KOMA class
  \IfFileExists{parskip.sty}{%
    \usepackage{parskip}
  }{% else
    \setlength{\parindent}{0pt}
    \setlength{\parskip}{6pt plus 2pt minus 1pt}}
}{% if KOMA class
  \KOMAoptions{parskip=half}}
\makeatother
\usepackage{xcolor}
\makeatletter
\ifx\paragraph\undefined\else
  \let\oldparagraph\paragraph
  \renewcommand{\paragraph}{
    \@ifstar
      \xxxParagraphStar
      \xxxParagraphNoStar
  }
  \newcommand{\xxxParagraphStar}[1]{\oldparagraph*{#1}\mbox{}}
  \newcommand{\xxxParagraphNoStar}[1]{\oldparagraph{#1}\mbox{}}
\fi
\ifx\subparagraph\undefined\else
  \let\oldsubparagraph\subparagraph
  \renewcommand{\subparagraph}{
    \@ifstar
      \xxxSubParagraphStar
      \xxxSubParagraphNoStar
  }
  \newcommand{\xxxSubParagraphStar}[1]{\oldsubparagraph*{#1}\mbox{}}
  \newcommand{\xxxSubParagraphNoStar}[1]{\oldsubparagraph{#1}\mbox{}}
\fi
\makeatother

\usepackage{longtable,booktabs,array}
\usepackage{calc} % for calculating minipage widths
\usepackage{etoolbox}
\makeatletter
\patchcmd\longtable{\par}{\if@noskipsec\mbox{}\fi\par}{}{}
\makeatother
\IfFileExists{footnotehyper.sty}{\usepackage{footnotehyper}}{\usepackage{footnote}}
\makesavenoteenv{longtable}
\usepackage{graphicx}
\makeatletter
\def\maxwidth{\ifdim\Gin@nat@width>\linewidth\linewidth\else\Gin@nat@width\fi}
\def\maxheight{\ifdim\Gin@nat@height>\textheight\textheight\else\Gin@nat@height\fi}
\makeatother
\setkeys{Gin}{width=\maxwidth,height=\maxheight,keepaspectratio}
\makeatletter
\def\fps@figure{htbp}
\makeatother

\makeatletter
\@ifpackageloaded{caption}{}{\usepackage{caption}}
\AtBeginDocument{%
\ifdefined\contentsname
  \renewcommand*\contentsname{Table of contents}
\else
  \newcommand\contentsname{Table of contents}
\fi
\ifdefined\listfigurename
  \renewcommand*\listfigurename{List of Figures}
\else
  \newcommand\listfigurename{List of Figures}
\fi
\ifdefined\listtablename
  \renewcommand*\listtablename{List of Tables}
\else
  \newcommand\listtablename{List of Tables}
\fi
\ifdefined\figurename
  \renewcommand*\figurename{Figure}
\else
  \newcommand\figurename{Figure}
\fi
\ifdefined\tablename
  \renewcommand*\tablename{Table}
\else
  \newcommand\tablename{Table}
\fi
}
\@ifpackageloaded{float}{}{\usepackage{float}}
\floatstyle{ruled}
\@ifundefined{c@chapter}{\newfloat{codelisting}{h}{lop}}{\newfloat{codelisting}{h}{lop}[chapter]}
\floatname{codelisting}{Listing}

\makeatother
\makeatletter
\@ifpackageloaded{caption}{}{\usepackage{caption}}
\@ifpackageloaded{subcaption}{}{\usepackage{subcaption}}
\makeatother

\makeatletter
\let\equality@label\label

\newcounter{equality}
\renewcommand{\theequality}{\roman{equality}}

\newcommand{\eqlabel}[1]{%
  \refstepcounter{equality}%
  \equality@label{#1}%
  \mathrel{\overset{\text{\normalfont(\theequality)}}{=}}%
}
\makeatother

\ifLuaTeX
  \usepackage{selnolig}  % disable illegal ligatures
\fi
\usepackage[]{natbib}
\usepackage{bookmark}

\IfFileExists{xurl.sty}{\usepackage{xurl}}{} % add URL line breaks if available
\hypersetup{
  pdftitle={Introducing the CP-plot Based on Covariance Representations of Weighted Average Treatment Effects},
  pdfauthor={Pengfei Tian; Fan Yang; Peng Ding},
  pdfkeywords={beta weight, instrumental variable, observational study, overlap weight, potential outcome, propensity score},
  colorlinks=true,
  linkcolor={blue},
  filecolor={Maroon},
  citecolor={Blue},
  urlcolor={Blue},
  pdfcreator={LaTeX via pandoc}}

\newcommand{\blind}{1}

\begin{document}

\def\spacingset#1{\renewcommand{\baselinestretch}%
{#1}\small\normalsize} \spacingset{1}

\if1\blind
{
  \title{\bf
  Introducing the CP-plot for Causal Inference with Observational Studies}

  \author[1]{Pengfei Tian}
  \author[2]{Fan Yang\textsuperscript{*}}
  \author[3]{Peng Ding\textsuperscript{*}}

  \affil[1]{\small Qiuzhen College, Tsinghua University, \texttt{\href{mailto:tpf24@mails.tsinghua.edu.cn}{tpf24@mails.tsinghua.edu.cn}}}
  \affil[2]{\small Yau Mathematical Sciences Center, Tsinghua University, \texttt{\href{mailto:yangfan1987@tsinghua.edu.cn}{yangfan1987@tsinghua.edu.cn}}}
  \affil[3]{\small Department of Statistics, University of California, Berkeley, \texttt{\href{mailto:pengdingpku@berkeley.edu}{pengdingpku@berkeley.edu}}}

  \begingroup
  \renewcommand{\thefootnote}{*}
  \footnotetext[1]{Correspondence should be addressed to Fan Yang or Peng Ding.}
  \endgroup
  \date{}
  \maketitle
}
\fi

\if0\blind
{
  \title{\bf
  Introducing the CP-plot for Causal Inference with Observational Studies}
  \date{}
  \maketitle
}
\fi

\bigskip
\begin{abstract}
Under the canonical setting of observational studies for causal inference, we derive a set of exact representations for pairwise differences among weighted average treatment effects as covariances between the conditional average treatment effect and the propensity score, up to positive scaling factors. These covariance representations bridge the two core concepts in causal inference with observational studies. They immediately imply that (i) the average treatment effect
is bracketed by the average treatment effects on the treated and on the
controls, with the direction determined by the sign of the covariance between
the conditional average treatment effect and the propensity score, and (ii) the
average treatment effect under the overlap weight, the weight that is
proportional to the conditional variance of the treatment given the covariates,
is bracketed by the average treatment effects on the treated and controls when
the corresponding covariances have a common sign within both the treated and
control groups.
 We further extend these results to weighted local average treatment effects in the instrumental variable framework. Building on this theory, we recommend the ``CP-plot'' of the estimated conditional average treatment effect against the estimated propensity score, and implement it in the R package \texttt{CPplot}.
\end{abstract}

\noindent%
{\it Keywords:} Beta weight; Instrumental variable; Observational study; Overlap weight; Propensity score.
\vfill

\newpage
\spacingset{1.7} % DON'T change the spacing!

\section{Introduction}

Causal inference with observational studies is critical in many disciplines. Under the potential outcomes framework, the average treatment effect (ATE) can be identified under two assumptions. The first is unconfoundedness, which requires the treatment to be independent of the potential outcomes given the observed covariates. The second is overlap, which requires the propensity score, the conditional probability of treatment given the observed covariates, to be bounded away from zero and one \citep{rosenbaum1983central}. Although most of the literature focuses on ATE, other weighted average treatment effects are also of scientific and policy relevance. For instance, the average treatment effect on the treated (ATT) measures the effect on the treated population, the average treatment effect on the controls (ATC) measures the effect on the control population, and the average treatment effect under the overlap weight (ATO) measures the effect on the population weighted by the conditional variance of the treatment given the observed covariates \citep{li2018balancing}. ATO has attracted growing interest for its robustness to the violation of the overlap assumption \citep{li2019addressing} and its connection to various regression-based estimators \citep{angrist1998estimating,wallace2015doubly,ding2021frisch,lee2018simple}. 
These estimands differ only in how they weight the conditional average treatment effect (CATE), the average treatment effect among units sharing the same covariate values, across the propensity-score distribution. This observation raises a natural question: how can the differences between weighted average treatment effects be represented and interpreted? We answer this question by deriving exact difference representations that express the gaps among weighted estimands through covariances between the CATE and the propensity score under suitable target distributions. These representations yield immediate ordering results and motivate a graphical diagnostic based on the relationship between the CATE and the propensity score.

First, we begin with a known covariance representation for the differences between ATT, ATE, and ATC in \citet[Chapter 13.5, Problem 13.1]{ding2024first}. These identities refine the convex-combination fact that ATE lies between ATT and ATC by showing that the directions of the gaps are governed by the covariance between the CATE and the propensity score. This covariance links two central objects in causal inference: the CATE characterizes treatment-effect heterogeneity on the outcome side, whereas the propensity score characterizes the treatment-assignment mechanism. Under ignorability, it also admits an equivalent interpretation in terms of the association between individual treatment gains and treatment receipt, or selection on treatment gains. Extending this benchmark, we derive new analogous covariance representations for the gaps involving ATO: the differences between ATT and ATO and between ATO and ATC can be expressed through covariances evaluated under the treated and control target distributions, respectively. These new representations give necessary and sufficient covariance conditions for the bracketing relation between ATO, ATT, and ATC. We then introduce the CP function, defined as the conditional mean of the CATE given the propensity score, and show that its monotonicity provides a transparent sufficient condition for signing these covariances. In the finite-strata fixed-effects setting studied by \citet{humphreys2025bounds}, the fixed-effects estimand coincides with the overlap-weighted estimand, so our ATO representation recovers the corresponding bracketing result as a special case. The main focus of this paper is therefore to extend the covariance-representation logic to the overlap-weighted estimand and to use it as the basis for the CATE--propensity-score plot (CP-plot) diagnostic.

Second, we develop analogous difference representations in the instrumental variable (IV) setting with a binary IV and a binary treatment \citep{angrist1996identification,imbens1994identification}, known as the local average treatment effect (LATE) framework. Most of this literature focuses on the LATE, defined as the ATE among compliers whose treatment status is changed by the IV. We consider weighted local estimands, including local analogs of ATT, ATC, and ATO. 
Recently, \citet{choi2024instrumental} introduced a general class of weighted LATEs; in particular, the overlap-weighted LATE has appeared frequently in the recent IV literature \citep[Section~4]{bugni2023inference}; see also \citet{borusyak2024negative}, \citet{zhao2026two} and \citet{jiang2026improving}. The same covariance logic applies after replacing the population distribution by the complier distribution and the propensity score by the IV propensity score. Our representations then imply that the overlap-weighted LATE lies between the local analogs of ATT and ATC when the relevant complier covariances share a common sign, with monotonicity of the local CP function serving as a simple sufficient condition.

Third, we extend the results to the beta-weight family  \citep{matsouaka2024causal}, a two-parameter family that includes ATE, ATT, ATC, and ATO as special cases.

Fourth, building on the theory, we recommend a practical diagnostic, the ``CP-plot'', which plots the estimated CATE against the estimated propensity score. We implement it in the R package \texttt{CPplot} and use it to revisit several observational studies and an IV study.

The rest of the paper proceeds as follows. 
Section~\ref{sec:obs} studies propensity-score weighted causal estimands, presents exact difference representations among ATE, ATT, ATO, and ATC, and establishes the bracketing relation among ATO, ATT, and ATC.
 It then discusses identification under ignorability and connects the overlap-weighted estimand to regression-based estimands. 
 Section~\ref{sec:iv} extends the difference representations and bracketing results to propensity-score weighted local causal effects in a binary IV setting and connects the overlap-weighted local estimand to the population two-stage least-squares (TSLS) coefficient. 
 Section~\ref{sec:cpplot} introduces the CP-plot and the local CP-plot.
 Section~\ref{sec:app} provides empirical illustrations based on several observational studies and an IV study.
 Section~\ref{sec:ext} develops extensions to beta weights. Section~\ref{sec:disc} concludes with a summary and practical recommendations.
 The online Supplementary Material contains additional results and proofs.

\section{Weighted average treatment effects and their relationships}\label{sec:obs}
We work in the standard potential-outcomes framework for causal inference with observational studies. For each unit, \(Z\in\{0,1\}\) indicates whether the unit receives treatment (\(Z=1\)) or control (\(Z=0\)). Let \(Y(1)\) and \(Y(0)\) denote the potential outcomes under treatment and control, respectively. The vector \(X\) collects the observed pre-treatment covariates. The conditional average treatment effect (CATE) is
\[
\tau(X)=\E\{Y(1)-Y(0)\mid X\},
\]
the average treatment effect among units sharing the same covariate value \(X\). Population-level causal estimands are obtained by averaging \(\tau(X)\) under various weighting schemes, each weight corresponding to a different target population.

Let \(e(X)=\p(Z=1\mid X)\) denote the propensity score
\citep{rosenbaum1983central}, and let
\(p=\p(Z=1)=\E\{e(X)\}\).
We impose the standard overlap and ignorability conditions.
\begin{assumption}[overlap and ignorability]\label{assume:ignor}
    \(e(X)\in(0,1)\) and $
(Y(1),Y(0))\perp\!\!\!\perp Z\mid X
$.
\end{assumption}
Under Assumption~\ref{assume:ignor}, ATE, ATT, and ATC admit the following weighted-CATE representations:
\[
\begin{aligned}
\tauate
&= \E\{Y(1)-Y(0)\}
= \E\{\tau(X)\},\\
\tauatt
&= \E\{Y(1)-Y(0)\mid Z=1\}\eqlabel{eql:att-ignor} \E\{ \tau(X)\mid Z=1 \}
= {\E\{e(X)\tau(X)\}}/p,\\
\tauatc &= \E\{Y(1)-Y(0)\mid Z=0\} \eqlabel{eql:atc-ignor} \E\{ \tau(X)\mid Z=0 \}
= {\E[\{1-e(X)\}\tau(X)]}/(1-p).
\end{aligned}
\]
Thus \(\tauatt\) and \(\tauatc\) are averages of the CATE over the
covariate distributions of the treated and control groups, respectively. 
In particular, ATE is always bracketed by ATT and ATC because it is a convex combination of the two:
\begin{align}\label{eq:ate_combine}
\tauate
=
\p(Z=1)\cdot \tauatt+\p(Z=0)\cdot\tauatc
=
p\cdot\tauatt+(1-p)\cdot\tauatc.
\end{align}

\citet{li2018balancing} proposed the overlap weight, which is proportional to the conditional variance of the treatment given the covariates, $\var(Z \mid X) = e(X)\{1-e(X)\}$. This yields ATO,
\[
\tauow = \frac{\E[e(X)\{1-e(X)\}\tau(X)]}{\E[e(X)\{1-e(X)\}]}.
\]
Unlike $e(X)$ or $1-e(X)$, the overlap weight $e(X)\{1-e(X)\}$ vanishes at both endpoints of $[0,1]$. By definition, $\tauow$ emphasizes covariate regions where both treatment arms are well represented and downweights regions with extreme propensity scores. This makes ATO a stable target estimand, especially when the treated and control groups overlap poorly. The following theorem gives exact covariance representations for four key pairwise differences involving ATE and ATO.

\begin{theorem}
    \label{thm:diff-rep}
Assume Assumption \ref{assume:ignor}.
 \begin{enumerate}[label=(\roman*),leftmargin=*, align=left]
\item
We have
\begin{align}\label{eq:ATE_diff_re}
    \tauatt-\tauate
=
\frac{\cov\{\tau(X),e(X)\}}{p},
\qquad
\tauate-\tauatc
=
\frac{\cov\{\tau(X),e(X)\}}{1-p}.
\end{align}
\item We have
\begin{align}\label{eq:ATO_diff_re}
\tauatt-\tauow
=
\frac{\cov\{\tau(X),e(X)\mid Z=1\}}
{\E\{1-e(X)\mid Z=1\}},
\qquad
\tauow-\tauatc
=
\frac{\cov\{\tau(X),e(X)\mid Z=0\}}
{\E\{e(X)\mid Z=0\}}.
\end{align}
\end{enumerate}
\end{theorem}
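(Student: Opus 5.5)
The plan is to reduce everything to weighted expectations of \(\tau(X)\) over the marginal law of \(X\), using the weighted-CATE representations displayed just before the theorem: \(\tauatt=\E\{e(X)\tau(X)\}/p\), \(\tauatc=\E[\{1-e(X)\}\tau(X)]/(1-p)\), and \(\tauate=\E\{\tau(X)\}\), which hold under Assumption~\ref{assume:ignor}. The key device is a change of measure. For any function \(g\),
\[
\E\{g(X)\mid Z=1\}=\E\{e(X)g(X)\}/p
\qquad\text{and}\qquad
\E\{g(X)\mid Z=0\}=\E[\{1-e(X)\}g(X)]/(1-p).
\]
This follows from the tower property, since \(\E\{Z g(X)\}=\E\{e(X)g(X)\}\).

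For part (i), I would expand \(\cov\{\tau(X),e(X)\}=\E\{e(X)\tau(X)\}-p\,\tauate\). Dividing by \(p\) gives \(\tauatt-\tauate\) directly. For the second identity, I would use \(\E[\{1-e(X)\}\tau(X)]=\tauate-\E\{e(X)\tau(X)\}\), which gives
\[
\tauate-\tauatc=\frac{\E\{e(X)\tau(X)\}-p\,\tauate}{1-p}.
\]
Alternatively, this follows from the convex-combination identity \eqref{eq:ate_combine}: \(\tauate-\tauatc=p(\tauatt-\tauatc)\) and \(\tauatt-\tauate=(1-p)(\tauatt-\tauatc)\).

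For part (ii), I would write all ATO quantities under the treated distribution. Setting \(g=(1-e)\tau\) and \(g=1-e\) in the change-of-measure identity gives
\[
\tauow=\frac{\E[\{1-e(X)\}\tau(X)\mid Z=1]}{\E\{1-e(X)\mid Z=1\}},
\]
because the factors of \(p\) cancel between numerator and denominator. Then
\[
\tauatt-\tauow=\frac{\E\{\tau\mid Z=1\}\,\E\{1-e\mid Z=1\}-\E\{(1-e)\tau\mid Z=1\}}{\E\{1-e\mid Z=1\}}.
\]
Expanding \(1-e\), the numerator becomes \(\E\{e\tau\mid Z=1\}-\E\{\tau\mid Z=1\}\E\{e\mid Z=1\}=\cov\{\tau(X),e(X)\mid Z=1\}\).

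The second identity in part (ii) is symmetric. Under the control distribution, \(\tauow=\E\{e\tau\mid Z=0\}/\E\{e\mid Z=0\}\), so
\[
\tauow-\tauatc=\frac{\E\{e\tau\mid Z=0\}-\E\{e\mid Z=0\}\E\{\tau\mid Z=0\}}{\E\{e\mid Z=0\}}=\frac{\cov\{\tau(X),e(X)\mid Z=0\}}{\E\{e(X)\mid Z=0\}}.
\]

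The only step that needs care is recognising that the overlap weight \(e(1-e)\) factors as the density ratio \(e/p\) (respectively \((1-e)/(1-p)\)) times the remaining factor \(1-e\) (respectively \(e\)). This turns \(\tauow\) into an \(\{1-e\}\)-weighted mean under \(Z=1\), and an \(e\)-weighted mean under \(Z=0\). The other requirement is that the denominators be positive, which holds because \(e(X)\in(0,1)\) by overlap. Otherwise the argument is routine algebra.
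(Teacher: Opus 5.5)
Your proposal is correct and matches the paper's proof step for step. Part (i) uses the same direct covariance algebra; your convex-combination shortcut via \eqref{eq:ate_combine} is a valid alternative for the second identity. Part (ii) uses the same change-of-measure identity as the paper's Lemma~\ref{lemma:z1ex}, writing $\tauow$ as a $\{1-e(X)\}$-tilted mean under $Z=1$ and as an $e(X)$-tilted mean under $Z=0$, and then expands the covariance as you did.
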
Equation~\eqref{eq:ATE_diff_re} refines the convex-combination identity in
\eqref{eq:ate_combine}: it shows that the directions of the two ATE gaps are
governed by the marginal covariance between the CATE and the propensity score.
This representation is already known \citep[Problem 13.1]{ding2024first}.
Equation~\eqref{eq:ATO_diff_re} gives the corresponding representations for
ATO under the treated and control distributions. Because all denominators in
Theorem~\ref{thm:diff-rep} are positive, each pairwise difference has the sign
of its corresponding covariance.

To further interpret these covariances, for $z=0,1$, define the
potential-outcome selection bias
\(
B(z)
=
\E\{Y(z)\mid Z=1\}-\E\{Y(z)\mid Z=0\}
\)
\citep{heckman1997making}. The following proposition provides two useful
representations.

\begin{prop}\label{prop:covariance-interpretation}
Assume Assumption~\ref{assume:ignor}.
\begin{enumerate}[label=(\roman*),leftmargin=*,align=left]

\item
The marginal CATE--propensity covariance admits the equivalent
representations
\begin{align}\label{eq:selection-on-gains}
\cov\{\tau(X),e(X)\}
=
\cov\{Y(1)-Y(0),Z\}
=
\var(Z)\cdot \{B(1)-B(0)\}.
\end{align}

\item
The marginal covariance and the corresponding covariances within the
treated and control groups satisfy
\begin{align}\label{eq:marginal-conditional-covariance}
\cov\{\tau(X),e(X)\}
={}&
\frac{\var(Z)}{\E\{\var(Z\mid X)\}}
\Big[
p\cdot \cov\{\tau(X),e(X)\mid Z=1\}\notag\\
&\qquad\qquad+
(1-p)\cdot \cov\{\tau(X),e(X)\mid Z=0\}
\Big].
\end{align}

\end{enumerate}
\end{prop}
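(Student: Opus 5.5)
For part (i), I will compute $\cov\{Y(1)-Y(0),Z\}$ by iterated expectations conditional on $X$. Under Assumption~\ref{assume:ignor},
\[
\E[\{Y(1)-Y(0)\}Z\mid X]=e(X)\tau(X),
\]
so $\E[\{Y(1)-Y(0)\}Z]=\E\{e(X)\tau(X)\}$. Together with $\E\{Y(1)-Y(0)\}=\E\{\tau(X)\}$ and $\E(Z)=\E\{e(X)\}=p$, this gives
\[
\cov\{Y(1)-Y(0),Z\}=\E\{e(X)\tau(X)\}-\E\{\tau(X)\}\E\{e(X)\}=\cov\{\tau(X),e(X)\}.
\]

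For the second equality in (i), I will use the elementary identity valid for binary $Z$ and any integrable $W$:
\[
\cov(W,Z)=p(1-p)\{\E(W\mid Z=1)-\E(W\mid Z=0)\}.
\]
To verify it, write $\cov(W,Z)=p\,\E(W\mid Z=1)-p\,\E(W)$ and expand $\E(W)=p\,\E(W\mid Z=1)+(1-p)\,\E(W\mid Z=0)$. Applying this with $W=Y(1)-Y(0)$ and using linearity gives $\var(Z)\{B(1)-B(0)\}$, since $\var(Z)=p(1-p)$. Ignorability is needed only for the first equality.

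For part (ii), the cleanest route is to combine Theorem~\ref{thm:diff-rep}. Adding the two identities in \eqref{eq:ATO_diff_re} gives
\[
\tauatt-\tauatc=\frac{C_1}{a_1}+\frac{C_0}{a_0},
\]
where $C_z=\cov\{\tau(X),e(X)\mid Z=z\}$, $a_1=\E\{1-e(X)\mid Z=1\}$ and $a_0=\E\{e(X)\mid Z=0\}$. Adding the two identities in \eqref{eq:ATE_diff_re} gives
\[
\tauatt-\tauatc=\cov\{\tau(X),e(X)\}\Big\{\frac1p+\frac1{1-p}\Big\}=\frac{\cov\{\tau(X),e(X)\}}{p(1-p)}.
\]

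Next I compute $a_1$ and $a_0$ in terms of $\E\{\var(Z\mid X)\}$. By Bayes, $\E\{f(X)\mid Z=1\}=\E\{e(X)f(X)\}/p$, so
\[
a_1=\frac{\E[e(X)\{1-e(X)\}]}{p}=\frac{\E\{\var(Z\mid X)\}}{p}.
\]
Symmetrically, $\E\{f(X)\mid Z=0\}=\E[\{1-e(X)\}f(X)]/(1-p)$, so $a_0=\E\{\var(Z\mid X)\}/(1-p)$.

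Substituting these into the two expressions for $\tauatt-\tauatc$ and multiplying through by $p(1-p)=\var(Z)$ yields
\[
\cov\{\tau(X),e(X)\}=\frac{\var(Z)}{\E\{\var(Z\mid X)\}}\bigl\{p\,C_1+(1-p)\,C_0\bigr\},
\]
which is \eqref{eq:marginal-conditional-covariance}. Overlap ensures $\E\{\var(Z\mid X)\}>0$, so the division is legitimate.

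The only real step is the Bayes reweighting that expresses the conditional expectations given $Z=z$ as $e$- or $(1-e)$-weighted marginal expectations. If I instead prove (ii) directly, without invoking Theorem~\ref{thm:diff-rep}, the same reweighting expresses $C_1$ and $C_0$ in terms of the moments $\E(e\tau)$, $\E(e^2\tau)$, $\E(e)$, $\E(e^2)$ and $\E(\tau)$. The main obstacle there would be the algebra of matching these against the left side, which is routine but messy. That is why I prefer the route through Theorem~\ref{thm:diff-rep}.
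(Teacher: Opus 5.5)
Your proof is correct. Part (i) matches the paper's argument. The paper phrases your iterated-expectation step as the law of total covariance given $X$, whose within-$X$ term vanishes by ignorability, and then applies the same binary-$Z$ identity to $Y(1)$ and $Y(0)$.

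Part (ii) takes a genuinely different route. The paper does not invoke Theorem~\ref{thm:diff-rep} there. It applies the law of total covariance conditioning on $Z$:
\[
\cov\{\tau(X),e(X)\}=pC_1+(1-p)C_0+\cov[\E\{\tau(X)\mid Z\},\E\{e(X)\mid Z\}].
\]
Each group-mean difference is a covariance with $Z$ divided by $\var(Z)$, so the between-group term equals $\var\{e(X)\}\cov\{\tau(X),e(X)\}/\var(Z)$. The paper then solves for $\cov\{\tau(X),e(X)\}$ using $\var(Z)=\E\{\var(Z\mid X)\}+\var\{e(X)\}$.

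You instead compute $\tauatt-\tauatc$ two ways, telescoping through $\tauate$ and through $\tauow$. You then identify $a_1=\E\{\var(Z\mid X)\}/p$ and $a_0=\E\{\var(Z\mid X)\}/(1-p)$ by reweighting. The algebra closes exactly, and there is no circularity, because Theorem~\ref{thm:diff-rep} is proved directly from the weighting identities.

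Your route is shorter once Theorem~\ref{thm:diff-rep} is available, and it gives a clean reading of (ii): it says exactly that the two telescoping decompositions of the ATT--ATC gap agree.

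The paper's route is self-contained and never mentions the estimands. As a pure within/between decomposition, it shows why the marginal covariance reappears on the right-hand side: the between-group term is proportional to it. The paper reuses the same calculation, with $\tau(X)$ replaced by $e(X)$, for the variance identity in the proof of Proposition~\ref{prop:cp-slope-convex}. It also carries over verbatim to Proposition~\ref{prop:covariance-interpretation-local}.

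Your argument generalizes equally easily, since the identities of Theorem~\ref{thm:diff-rep} hold with any integrable function of $X$ in place of $\tau(X)$.
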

Proposition \ref{prop:covariance-interpretation}(i) gives three equivalent interpretations of the same covariance.
The CATE $\tau(X)$ characterizes treatment-effect heterogeneity on the outcome
side, whereas the propensity score $e(X)$ characterizes treatment assignment.
Thus, $\cov\{\tau(X),e(X)\}$ measures the alignment between treatment gains and treatment propensity: whether units more likely to receive treatment
also tend to benefit more from it. Under ignorability, it is equivalently the
covariance between the individual treatment gain $Y(1)-Y(0)$ and treatment
receipt $Z$. After normalization by $\var(Z)$, it further equals
$B(1)-B(0)$, corresponding to selection on treatment gains
\citep{heckman1997making}. Under a complete randomized experiment, this
covariance is zero; in observational studies, a nonzero value reflects
systematic selection with respect to treatment gains.

Proposition \ref{prop:covariance-interpretation}(ii) connects this marginal association with its treated-specific counterparts. The two conditional covariances characterize
the association between treatment gains and treatment propensity separately
within the treated and control populations.
Equation~\eqref{eq:marginal-conditional-covariance} shows that their weighted
combination determines the marginal covariance. Hence, common conditional
signs imply the same marginal sign, whereas the converse need not hold.

Theorem~\ref{thm:diff-rep} yields the following exact bracketing
characterization.
\begin{coro}\label{coro:exact-bracketing}
Assume Assumption \ref{assume:ignor}.
\begin{enumerate}[label=(\roman*),leftmargin=*, align=left]
\item We have
$
\tauate\in[\tauatc,\tauatt]
$
if and only if
$
\cov\{\tau(X),e(X)\}\ge 0
$.
Similarly,
$
\tauate\in[\tauatt,\tauatc]
$
if and only if
$
\cov\{\tau(X),e(X)\}\le 0$. 
\item We have
$
\tauow\in[\tauatc,\tauatt]
$
if and only if
$
\cov\{\tau(X),e(X)\mid Z=z\}\ge 0
$ for $z\in \{0,1\}$.
Similarly,
$
\tauow\in[\tauatt,\tauatc]
$
if and only if
$
\cov\{\tau(X),e(X)\mid Z=z\}\le 0$ for $z\in \{0,1\}$. 
\end{enumerate}
\end{coro}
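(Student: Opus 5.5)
The corollary follows from Theorem~\ref{thm:diff-rep} by reading off signs, since every denominator there is strictly positive. For part (i), Assumption~\ref{assume:ignor} gives $e(X)\in(0,1)$, so $p=\E\{e(X)\}\in(0,1)$. Hence $1/p>0$ and $1/(1-p)>0$. By \eqref{eq:ATE_diff_re}, $\tauatt-\tauate$ and $\tauate-\tauatc$ are positive multiples of the same number $c=\cov\{\tau(X),e(X)\}$.

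If $c\ge0$, both differences are nonnegative, so $\tauatc\le\tauate\le\tauatt$. Conversely, suppose $\tauate\in[\tauatc,\tauatt]$. Then $\tauatt-\tauate\ge0$, so $c/p\ge0$ and therefore $c\ge0$. The case $c\le0$ is symmetric. The degenerate case $c=0$ makes all three estimands equal, which is consistent with both intervals and needs no separate treatment.

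For part (ii), I would first confirm the denominators in \eqref{eq:ATO_diff_re} are positive. The quantity $\E\{1-e(X)\mid Z=1\}$ is the conditional mean of a strictly positive random variable, hence positive. The same holds for $\E\{e(X)\mid Z=0\}$. Both conditionings are well defined because $p\in(0,1)$.

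Write $c_1=\cov\{\tau(X),e(X)\mid Z=1\}$ and $c_0=\cov\{\tau(X),e(X)\mid Z=0\}$. Then $\tauatt-\tauow$ has the sign of $c_1$, and $\tauow-\tauatc$ has the sign of $c_0$. The event $\tauow\in[\tauatc,\tauatt]$ means exactly that $\tauatt-\tauow\ge0$ and $\tauow-\tauatc\ge0$. This is equivalent to $c_1\ge0$ and $c_0\ge0$, giving the first claim of (ii). The reversed interval corresponds to both differences being $\le0$, that is, $c_1\le0$ and $c_0\le0$.

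There is no substantive obstacle; the argument is bookkeeping. The only point needing care is to state the interval convention: $[a,b]$ denotes $\{x:a\le x\le b\}$, taken literally even when $a>b$, in which case it is empty. Under this reading, membership is exactly the conjunction of two inequalities, so each ``if and only if'' holds as a literal equivalence.

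In (i), the two differences cannot have opposite signs, since they share the factor $c$. So the first interval and the condition $c\ge 0$ match exactly. In (ii), the two conditional covariances may differ in sign. In that case $\tauow$ lies outside both intervals, and this is precisely what the ``for $z\in\{0,1\}$'' condition rules out.
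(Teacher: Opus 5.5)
Your proposal is correct and follows the paper's own proof: every denominator in Theorem~\ref{thm:diff-rep} is positive, so each estimand difference has the sign of its covariance, and both equivalences follow. Your remark that $[a,b]$ must be read literally, and hence empty when $a>b$, is a useful clarification the paper leaves implicit. Under the unordered reading, part~(i) would fail, because $\tauate$ always lies between $\tauatt$ and $\tauatc$.
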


Corollary~\ref{coro:exact-bracketing} gives necessary and sufficient conditions for these bracketing relations, expressed entirely through covariances between the CATE and the propensity score. 

We next give a simple sufficient
condition for signing these covariances.
Define the CATE--propensity-score (CP) function
\begin{align}\label{eq:def_ge}
g(e)=\E\{\tau(X)\mid e(X)=e\},
\end{align}
the conditional mean of the CATE among units with propensity score \(e\).
Because \(\tauate\), \(\tauatt\), \(\tauatc\), and \(\tauow\) are all
averages of \(\tau(X)\) weighted by functions of \(e(X)\), the CP function \(g\)
determines how these estimands relate. Moreover, for $z\in \{0,1\}$, 
\begin{align}\label{eq:balance_first}
\cov\{\tau(X),e(X)\mid Z=z\}
&=
\cov\!\left[
\E\{\tau(X)\mid e(X),Z=z\},
e(X)
\mid Z=z
\right] \\
&=
\cov\{g(e(X)),e(X)\mid Z=z\},\notag
\end{align}
where the second equality holds without ignorability because $\E\{\tau(X) \mid e(X), Z=z \} = \E\{\tau(X) \mid e(X) \}  = g(e(X))$, by the balancing property of the propensity score $
Z\perp\!\!\!\perp X\mid e(X)$ \citep{rosenbaum1983central} and the definition of $g(\cdot)$ in \eqref{eq:def_ge}.

Thus, the monotonicity of the CP function provides a transparent sufficient
condition for the signs of the tilted covariances in
Corollary~\ref{coro:exact-bracketing}. We impose the following monotonicity
condition on \(g\).

\begin{assumption}\label{assump:mono}
Suppose that
\begin{enumerate}[label=(\roman*),leftmargin=*, align=left]
    \item \(g(e)\) is a non-decreasing function, or
    \item \(g(e)\) is a non-increasing function.
\end{enumerate}
\end{assumption}

Assumption~\ref{assump:mono} holds whenever units with a higher (or lower)
propensity to receive treatment also tend to benefit more from it. For example,
doctors may be more likely to assign surgery to patients in worse condition,
who may also be expected to benefit more from treatment. The assumption is
weaker than a pointwise monotone relationship between \(\tau(X)\) and
\(e(X)\), since it constrains only the conditional mean
\(\E\{\tau(X)\mid e(X)\}\). The following example verifies Assumption~\ref{assump:mono} in a standard parametric model; we do not rely on the parametric assumptions in practice, but use the model only to illustrate the condition.

\begin{example}
Consider a parametric model in which \(\tau(X)=X^\top\beta\) and
\(e(X)=\logit^{-1}(X^\top\gamma)\), with
\(X\sim \mathcal N(\mu,\Sigma)\). Since
\((X^\top \beta,X^\top \gamma)^\top\) is bivariate normal, the
conditional expectation formula gives
\[
g(e)=\E[X^\top \beta\mid X^\top \gamma=\logit(e)]
=\beta^\top \mu
+
(\beta^\top \Sigma \gamma)(\gamma^\top \Sigma \gamma)^{-1}
\{\logit(e)-\gamma^\top \mu\}.
\]
Clearly, \(g(e)\) is monotonic in \(e\).
\end{example}

Monotonicity conditions of this type have been used elsewhere to compare causal quantities, including in bounds on fixed-effects estimands
\citep{humphreys2025bounds}, analyses of bias direction
\citep{vanderweele2008sign,chiba2009sign}, effect ordering in signed causal graphs
\citep{vanderweele2010signed}, and bias amplification by instrumental variables
\citep{ding2017instrumental}.

Under monotonicity, we have the following result, which extends \citet{humphreys2025bounds}.

\begin{coro}\label{coro:ow in atc att}
Assume Assumption \ref{assume:ignor}.
 Under Assumption~\ref{assump:mono}(i), we have
$
\tauow\in[\tauatc,\tauatt].
$
Under Assumption~\ref{assump:mono}(ii), we have
$
\tauow\in[\tauatt,\tauatc].
$
\end{coro}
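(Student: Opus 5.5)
The plan is to reduce the claim to Corollary~\ref{coro:exact-bracketing}(ii). That corollary says $\tauow\in[\tauatc,\tauatt]$ exactly when $\cov\{\tau(X),e(X)\mid Z=z\}\ge 0$ for both $z\in\{0,1\}$, and the reverse bracketing holds when both covariances are $\le 0$. So it suffices to show that Assumption~\ref{assump:mono}(i) makes both conditional covariances nonnegative, and that Assumption~\ref{assump:mono}(ii) makes them nonpositive. By \eqref{eq:balance_first}, each of these covariances equals $\cov\{g(e(X)),e(X)\mid Z=z\}$. The problem therefore reduces to signing the covariance between a monotone function of a scalar random variable and that variable itself, under the conditional law of $e(X)$ given $Z=z$.

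The key step is the classical Chebyshev (or Harris) covariance inequality. Let $E_z$ be a random variable distributed as $e(X)$ given $Z=z$, and let $E_z'$ be an independent copy. If $g$ is non-decreasing, then $\{g(E_z)-g(E_z')\}(E_z-E_z')\ge 0$ pointwise. Taking expectations gives
\[
2\cov\{g(E_z),E_z\}=\E[\{g(E_z)-g(E_z')\}(E_z-E_z')]\ge 0 .
\]
For non-increasing $g$ the pointwise product is $\le 0$, which gives the opposite sign. Applying this with $z=1$ and with $z=0$, and then invoking Corollary~\ref{coro:exact-bracketing}(ii), yields both conclusions.

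The step that needs care is integrability, so that the covariances are well defined. Since $e(X)\in(0,1)$ is bounded, it suffices that $\E|\tau(X)|<\infty$ under each arm, which is implicit in the finiteness of $\tauatt$ and $\tauatc$. Then $g(e(X))$ is integrable conditional on $Z=z$. The product $g(e(X))e(X)$ is integrable because $e$ is bounded. The independent-copy identity needs only these first moments, since every term is of the form $g(E_z)E_z'$ or $g(E_z)E_z$. A second point is that $g$ is defined only up to null sets of the law of $e(X)$. The monotonicity in Assumption~\ref{assump:mono} should be read for a version of $g$ on the support of $e(X)$. The conditional laws given $Z=z$ are absolutely continuous with respect to the marginal law of $e(X)$, with densities proportional to $e$ and $1-e$, so the same version works for both $z$. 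With these checks in place, the argument is immediate.
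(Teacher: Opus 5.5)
Your proposal is correct and follows the paper's own proof. Both rewrite $\cov\{\tau(X),e(X)\mid Z=z\}$ as $\cov\{g(e(X)),e(X)\mid Z=z\}$ using the balancing property, sign that covariance through the monotonicity of $g$, and conclude from Corollary~\ref{coro:exact-bracketing}(ii). The paper only asserts the sign of the covariance, whereas you also spell out the independent-copy (Chebyshev) argument and the integrability and version-of-$g$ caveats.
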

We end this section with the following remarks.
\begin{remark}[Connection with \citet{humphreys2025bounds}]
In a saturated finite-strata setting, the coefficient on treatment in the
regression with stratum fixed effects equals \(\tauow\). Hence,
Corollary~\ref{coro:ow in atc att} yields a bracketing result for the
fixed-effects estimand. Our condition requires monotonicity only of the
coarsened CP function and therefore permits strata with the same propensity
score to have different treatment effects. Section~\ref{subsec:compareHumphrey}
of the Supplementary Material gives the detailed comparison.
\end{remark}

\begin{remark}[Connection with OLS]
ATO is closely related to the coefficient on \(Z\)
in the population OLS projection of \(Y\) on \((1,Z,X)\). Let \(\beta_1\) denote this coefficient. When \(e(X)\) lies in the linear span of \((1,X)\), \(\beta_1=\tauow\); without this linearity, the OLS coefficient
generally contains an additional term induced by the discrepancy between
\(e(X)\) and its linear projection on \((1,X)\). This connection complements the implied-weight perspective of
\citet{chattopadhyay2023implied}, who characterize linear regression
estimators through the weights they implicitly assign and the corresponding
target populations. Section~\ref{subsec:regression_obs}
of the Supplementary Material gives the full decomposition.
\end{remark}

\begin{remark}
Under Assumption \ref{assume:ignor}, \(\tau(X)\) admits the identification formula
\[
\tau(X)
=
\E(Y\mid Z=1,X)-\E(Y\mid Z=0,X),
\]
so \(\tauow\), \(\tauatt\), and \(\tauatc\) are functionals of the
observable data distribution, and the same bracketing relations hold for
the corresponding statistical estimands.

For the covariance algebra, ignorability is used only in
equalities~\eqref{eql:att-ignor} and~\eqref{eql:atc-ignor}. All subsequent
weighting identities follow from the propensity-score identity
\begin{align}\label{eq:exax}
\E\{a(X)\mid Z=1\}
=
\frac{\E\{e(X)a(X)\}}{p},
\qquad
\E\{a(X)\mid Z=0\}
=
\frac{\E[\{1-e(X)\}a(X)]}{1-p},
\end{align}
which holds for any integrable \(a(X)\) without ignorability.
Consequently, the covariance representations remain valid without
ignorability when ATT and ATC are interpreted as the conditional-CATE
averages appearing on the right-hand sides of
equalities~\eqref{eql:att-ignor} and~\eqref{eql:atc-ignor}.
\end{remark}

\section{Extension to local average treatment effects}\label{sec:iv}
We next develop the IV counterpart of the difference representations in
Section~\ref{sec:obs}.
 IV allows us to study causal effects when treatment receipt is not ignorable, by exploiting an external instrument that induces exogenous variation in treatment. Under the standard IV assumptions, this variation identifies
local causal effects for compliers, whose treatment status is affected by the
IV.
Consider a binary
IV \(Z\in\{0,1\}\), a binary treatment \(D\in\{0,1\}\), an outcome \(Y\),
and pre-IV covariates \(X\). For \(z\in\{0,1\}\), let \(D(z)\) denote the
potential treatment under IV value \(z\), and for \(z,d\in\{0,1\}\), let
\(Y(z,d)\) denote the potential outcome under IV value \(z\) and treatment
value \(d\). The observed treatment and outcome satisfy
\[
D=ZD(1)+(1-Z)D(0),
\qquad
Y=ZY(1,D(1))+(1-Z)Y(0,D(0)).
\]

Let $e(X)=\p(Z=1\mid X)$ denote the IV propensity score. We adopt the standard IV framework of \citet{imbens1994identification} and
\citet{angrist1996identification}, formalized in Assumption \ref{assump:iv}. 
\begin{assumption}\label{assump:iv}
Suppose that:
\begin{enumerate}[label=(\roman*),leftmargin=*, align=left]
    \item \textbf{Conditional IV exogeneity:}
    \(
    Z \perp\!\!\!\perp \{D(1),D(0),Y(1,1),Y(1,0),Y(0,1),Y(0,0)\}\mid X.
\)
    \item \textbf{Exclusion restriction:}
    \(
    Y(1,d)=Y(0,d)\) for \(d=0,1.\)
    \item \textbf{Relevance:}
    \(\E\{D(1)-D(0)\mid X\}>0\) almost surely.
    \item \textbf{Monotonicity:}
    \(D(1)\ge D(0).\)
    \item \textbf{Overlap:}
\(e(X)\in(0,1).\)
\end{enumerate}
\end{assumption}
Under the exclusion restriction in Assumption \ref{assump:iv}(ii), let \(Y(d)\) denote the common value of \(Y(1,d)\)
and \(Y(0,d)\). The IV framework induces a natural target subpopulation, namely
the compliers. To define
this population formally, let
\[
U=\bigl(D(1),D(0)\bigr)\in\{(1,1),(1,0),(0,0),(0,1)\},
\]
corresponding respectively to always-takers \((\textup{a})\), compliers
\((\textup{c})\), never-takers \((\textup{n})\), and defiers \((\textup{d})\).
Under Assumption~\ref{assump:iv}(iv), the defier stratum is absent. Define the CATE among compliers as
\[
\tau^{\textup{c}}(X)=\E\{Y(1)-Y(0)\mid U=\textup{c},X\}.
\]
Let
$$\pi^{\textup{c}}(X)=\p(U=\textup{c}\mid X)$$
denote the principal score for the complier stratum \citep{ding2017principal}. Under Assumption \ref{assump:iv}(iii) and (iv),
$
\pi^{\textup c}(X)
=
\E\{D(1)-D(0)\mid X\}>0,
$ almost surely, so the complier CATE $\tau^{\textup{c}}(X)$ is well defined.
We consider four local causal estimands, the complier analogs of ATE, ATT, ATC, and ATO, which are special cases of the weighted local average treatment
effects \citep{choi2024instrumental}:
\begin{eqnarray*}
\tau^{\textup{c}}_{\textup{ATE}}
&=&
\E\left\{Y(1)-Y(0)\mid U=\textup{c}\right\}
=
\frac{
\E\left\{\pi^{\textup{c}}(X)\tau^{\textup{c}}(X)\right\}
}{
\E\left\{\pi^{\textup{c}}(X)\right\}
},
\\
\tau^{\textup{c}}_{\textup{ATT}}
&=&
\E\left\{Y(1)-Y(0)\mid Z=1,\,U=\textup{c}\right\}
=
\frac{
\E\left\{e(X)\pi^{\textup{c}}(X)\tau^{\textup{c}}(X)\right\}
}{
\E\left\{e(X)\pi^{\textup{c}}(X)\right\}
},
\\
\tau^{\textup{c}}_{\textup{ATC}}
&=&
\E\left\{Y(1)-Y(0)\mid Z=0,\,U=\textup{c}\right\}
=
\frac{
\E\left[\{1-e(X)\}\pi^{\textup{c}}(X)\tau^{\textup{c}}(X)\right]
}{
\E\left[\{1-e(X)\}\pi^{\textup{c}}(X)\right]
},
\\
\tau^{\textup{c}}_{\textup{ATO}}
&=&
\frac{
\E\left[
e(X)\{1-e(X)\}\pi^{\textup{c}}(X)\tau^{\textup{c}}(X)
\right]
}{
\E\left[
e(X)\{1-e(X)\}\pi^{\textup{c}}(X)
\right]
}.
\end{eqnarray*}
Here \(\tau^{\textup{c}}_{\textup{ATE}}\) is the local average treatment effect (LATE) introduced by \citet{imbens1994identification}, while
\(\tau^{\textup{c}}_{\textup{ATT}}\) and \(\tau^{\textup{c}}_{\textup{ATC}}\) are local analogs of ATT and ATC. As in the observational study, \(\tau^{\textup{c}}_{\textup{ATE}}\) is always bracketed by \(\tau^{\textup{c}}_{\textup{ATT}}\) and \(\tau^{\textup{c}}_{\textup{ATC}}\),  since it is a convex combination of the two:
\[
\tau^{\textup{c}}_{\textup{ATE}}
=\p(Z=1\mid U=\textup c)\cdot 
\tau^{\textup{c}}_{\textup{ATT}}
+
\p(Z=0\mid U=\textup c)\cdot 
\tau^{\textup{c}}_{\textup{ATC}} .
\]
The quantity \(\tau^{\textup{c}}_{\textup{ATO}}\) is the overlap-weighted LATE. The following theorem gives the local counterpart of Theorem~\ref{thm:diff-rep}.

\begin{theorem}\label{thm:diff-rep-local}
Assume Assumption~\ref{assump:iv}. 
\begin{enumerate}[label=(\roman*),leftmargin=*, align=left]
\item We have
\[
\tau^{\textup c}_{\textup{ATT}}
-
\tau^{\textup c}_{\textup{ATE}}
=
\frac{
\cov\{\tau^{\textup c}(X),e(X)\mid U=\textup c\}
}{
\E\{e(X)\mid U=\textup c\}
},
\qquad
\tau^{\textup c}_{\textup{ATE}}
-
\tau^{\textup c}_{\textup{ATC}}
=
\frac{
\cov\{\tau^{\textup c}(X),e(X)\mid U=\textup c\}
}{
\E\{1-e(X)\mid U=\textup c\}
}.
\]
\item We have
\begin{eqnarray*}
\tau^{\textup{c}}_{\textup{ATT}}
-
\tau^{\textup{c}}_{\textup{ATO}}
&=&
\frac{
\cov\left\{
\tau^{\textup{c}}(X),e(X)
\mid U=\textup{c},\,Z=1
\right\}
}{
\E\left\{
1-e(X)
\mid U=\textup{c},\,Z=1
\right\}
},
\\
\tau^{\textup{c}}_{\textup{ATO}}
-
\tau^{\textup{c}}_{\textup{ATC}}
&=&
\frac{
\cov\left\{
\tau^{\textup{c}}(X),e(X)
\mid U=\textup{c},\,Z=0
\right\}
}{
\E\left\{
e(X)
\mid U=\textup{c},\,Z=0
\right\}
}.
\end{eqnarray*}
\end{enumerate}
\end{theorem}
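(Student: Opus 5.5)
The plan is to reduce Theorem~\ref{thm:diff-rep-local} to the algebra behind Theorem~\ref{thm:diff-rep}, with the population distribution replaced by the complier distribution. The key fact is a complier analog of the propensity-score identity \eqref{eq:exax}. For any integrable $a(X)$, Bayes' rule gives
\[
\E\{a(X)\mid U=\textup c\}=\frac{\E\{\pi^{\textup c}(X)a(X)\}}{\E\{\pi^{\textup c}(X)\}}.
\]
Conditional IV exogeneity (Assumption~\ref{assump:iv}(i)) gives $Z\perp\!\!\!\perp U\mid X$, so $\p(Z=1\mid U=\textup c,X)=e(X)$. It follows that
\[
\E\{a(X)\mid U=\textup c,Z=1\}=\frac{\E\{e(X)a(X)\mid U=\textup c\}}{\E\{e(X)\mid U=\textup c\}},
\]
and the same holds with $1-e(X)$ for $Z=0$. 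Combining these identities with the displayed weighted formulas, we get
\[
\tau^{\textup c}_{\textup{ATE}}=\E^{\textup c}\{\tau^{\textup c}(X)\},\qquad
\tau^{\textup c}_{\textup{ATT}}=\frac{\E^{\textup c}\{e\tau^{\textup c}\}}{\E^{\textup c}(e)},\qquad
\tau^{\textup c}_{\textup{ATC}}=\frac{\E^{\textup c}\{(1-e)\tau^{\textup c}\}}{\E^{\textup c}(1-e)},\qquad
\tau^{\textup c}_{\textup{ATO}}=\frac{\E^{\textup c}\{e(1-e)\tau^{\textup c}\}}{\E^{\textup c}\{e(1-e)\}}.
\]
Here $\E^{\textup c}$ denotes expectation under the complier distribution of $X$. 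These are exactly the forms in Section~\ref{sec:obs}, with $p$ replaced by $p^{\textup c}=\E^{\textup c}(e)=\p(Z=1\mid U=\textup c)$.

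For part (i), I compute
\[
\tau^{\textup c}_{\textup{ATT}}-\tau^{\textup c}_{\textup{ATE}}
=\frac{\E^{\textup c}(e\tau^{\textup c})-\E^{\textup c}(e)\E^{\textup c}(\tau^{\textup c})}{\E^{\textup c}(e)}
=\frac{\cov\{\tau^{\textup c}(X),e(X)\mid U=\textup c\}}{\E\{e(X)\mid U=\textup c\}}.
\]
The ATE--ATC gap follows symmetrically, since
\[
\E^{\textup c}\{(1-e)\tau^{\textup c}\}-\E^{\textup c}(1-e)\E^{\textup c}(\tau^{\textup c})=-\cov^{\textup c}(\tau^{\textup c},e).
\]

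For part (ii), I work under the treated-complier distribution $\E^{\textup c,1}\{\cdot\}=\E\{\cdot\mid U=\textup c,Z=1\}$. By the identity above, $\tau^{\textup c}_{\textup{ATT}}=\E^{\textup c,1}(\tau^{\textup c})$. Dividing the numerator and denominator of $\tau^{\textup c}_{\textup{ATO}}$ by $\E^{\textup c}(e)$ gives
\[
\tau^{\textup c}_{\textup{ATO}}=\frac{\E^{\textup c,1}\{(1-e)\tau^{\textup c}\}}{\E^{\textup c,1}(1-e)}.
\]
So $\tau^{\textup c}_{\textup{ATO}}$ is a $(1-e)$-reweighting of the treated-complier distribution. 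Then
\[
\tau^{\textup c}_{\textup{ATT}}-\tau^{\textup c}_{\textup{ATO}}
=\frac{\E^{\textup c,1}(1-e)\E^{\textup c,1}(\tau^{\textup c})-\E^{\textup c,1}\{(1-e)\tau^{\textup c}\}}{\E^{\textup c,1}(1-e)}
=\frac{\cov^{\textup c,1}(\tau^{\textup c},e)}{\E^{\textup c,1}(1-e)}.
\]
The second formula is analogous. Under $U=\textup c,Z=0$ we have $\tau^{\textup c}_{\textup{ATC}}=\E^{\textup c,0}(\tau^{\textup c})$ and $\tau^{\textup c}_{\textup{ATO}}=\E^{\textup c,0}(e\tau^{\textup c})/\E^{\textup c,0}(e)$, whose difference is $\cov^{\textup c,0}(\tau^{\textup c},e)/\E^{\textup c,0}(e)$.

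The only nonroutine step is justifying $\p(Z=1\mid U=\textup c,X)=e(X)$, which makes the complier analog of \eqref{eq:exax} hold. This is where Assumption~\ref{assump:iv}(i) is used. Relevance, monotonicity and overlap (Assumption~\ref{assump:iv}(iii)--(v)) ensure that $\pi^{\textup c}(X)>0$ and that every denominator is positive. After that step, the rest is the same covariance algebra as in Theorem~\ref{thm:diff-rep}.
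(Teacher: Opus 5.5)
Your proposal is correct and follows essentially the same route as the paper. The paper also first uses $Z\perp\!\!\!\perp U\mid X$ to obtain the complier and $(U=\textup c,Z=z)$ analogs of the propensity-score identity, rewrites the local ATT, ATC, and ATO as $e(X)$-, $\{1-e(X)\}$-, or $e(X)\{1-e(X)\}$-tilted complier means, and then carries out the same covariance algebra in the complier, encouraged-complier, and unencouraged-complier distributions.
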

The identities in Theorem~\ref{thm:diff-rep-local}(i) show that the
ordering of the local ATT, ATE, and ATC is governed by the covariance between the complier CATE and the IV propensity score within the complier population.
The identities in Theorem~\ref{thm:diff-rep-local}(ii) show that the
bracketing relation for \(\tau^{\textup{c}}_{\textup{ATO}}\) is governed by the
signs of two conditional covariances. Parallel to
Proposition~\ref{prop:covariance-interpretation}, the marginal complier
covariance in part~(i) admits a selection-on-treatment-gains interpretation
and an exact decomposition in terms of the two conditional complier
covariances in part~(ii); see
Proposition~\ref{prop:covariance-interpretation-local} in the Supplementary
Material. Since the denominators are positive, these
identities immediately yield the following necessary and sufficient
characterization.

\begin{coro}\label{coro:exact-bracketing-local}
Assume Assumption~\ref{assump:iv}. 
\begin{enumerate}[label=(\roman*),leftmargin=*, align=left]
\item We have $ \tau^{\textup c}_{\textup{ATE}} \in  [ \tau^{\textup c}_{\textup{ATC}}, \tau^{\textup c}_{\textup{ATT}} ] $ if and only if $ \cov\{\tau^{\textup c}(X),e(X)\mid U=\textup c\}\ge 0$.
Similarly, $\tau^{\textup c}_{\textup{ATE}} \in  [ \tau^{\textup c}_{\textup{ATT}}, \tau^{\textup c}_{\textup{ATC}} ] $
if and only if $ \cov\{\tau^{\textup c}(X),e(X)\mid U=\textup c\}\le 0$.
\item We have $ \tau^{\textup c}_{\textup{ATO}} \in  [ \tau^{\textup c}_{\textup{ATC}}, \tau^{\textup c}_{\textup{ATT}} ] $ if and only if $ \cov\{\tau^{\textup c}(X),e(X)\mid U=\textup c,Z=z\}\ge 0 $ for $z\in\{0,1\}$.
Similarly, $\tau^{\textup c}_{\textup{ATO}} \in [ \tau^{\textup c}_{\textup{ATT}}, \tau^{\textup c}_{\textup{ATC}} ] $
if and only if $ \cov\{\tau^{\textup c}(X),e(X)\mid U=\textup c,Z=z\}\le 0$ for $z\in \{0,1\}$.
\end{enumerate}
\end{coro}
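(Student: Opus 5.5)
The corollary follows directly from Theorem~\ref{thm:diff-rep-local}, in the same way that Corollary~\ref{coro:exact-bracketing} follows from Theorem~\ref{thm:diff-rep}. The first step is to check that every denominator in Theorem~\ref{thm:diff-rep-local} is strictly positive.

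Under Assumption~\ref{assump:iv}(v), $e(X)\in(0,1)$ almost surely. So $e(X)$ and $1-e(X)$ are strictly positive random variables, and their conditional expectations given $U=\textup c$, or given $(U=\textup c, Z=z)$, are strictly positive. These conditioning events have positive probability for two reasons. First, $\p(U=\textup c)=\E\{\pi^{\textup c}(X)\}>0$ by Assumption~\ref{assump:iv}(iii)--(iv). Second, by Assumption~\ref{assump:iv}(i) we have $Z\perp\!\!\!\perp U\mid X$, hence
\[
\p(Z=1,U=\textup c)=\E\{e(X)\pi^{\textup c}(X)\}>0,\qquad \p(Z=0,U=\textup c)=\E[\{1-e(X)\}\pi^{\textup c}(X)]>0.
\]
I take the covariances to be finite, under the implicit integrability of $\tau^{\textup c}(X)$ that the estimands already require.

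For part (i), write $C=\cov\{\tau^{\textup c}(X),e(X)\mid U=\textup c\}$. By Theorem~\ref{thm:diff-rep-local}(i), both differences $\tau^{\textup c}_{\textup{ATT}}-\tau^{\textup c}_{\textup{ATE}}$ and $\tau^{\textup c}_{\textup{ATE}}-\tau^{\textup c}_{\textup{ATC}}$ equal $C$ times a positive constant, so both have the sign of $C$.

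If $C\ge 0$, both differences are nonnegative, which gives $\tau^{\textup c}_{\textup{ATC}}\le\tau^{\textup c}_{\textup{ATE}}\le\tau^{\textup c}_{\textup{ATT}}$.

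For the converse, suppose $\tau^{\textup c}_{\textup{ATE}}\in[\tau^{\textup c}_{\textup{ATC}},\tau^{\textup c}_{\textup{ATT}}]$. Then $\tau^{\textup c}_{\textup{ATT}}-\tau^{\textup c}_{\textup{ATE}}\ge 0$, and this quantity is a positive multiple of $C$, so $C\ge0$. The $\le$ version is symmetric.

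For part (ii), write $C_z=\cov\{\tau^{\textup c}(X),e(X)\mid U=\textup c,Z=z\}$. By Theorem~\ref{thm:diff-rep-local}(ii), $\tau^{\textup c}_{\textup{ATT}}-\tau^{\textup c}_{\textup{ATO}}$ is a positive multiple of $C_1$, and $\tau^{\textup c}_{\textup{ATO}}-\tau^{\textup c}_{\textup{ATC}}$ is a positive multiple of $C_0$. The interval statement $\tau^{\textup c}_{\textup{ATC}}\le\tau^{\textup c}_{\textup{ATO}}\le\tau^{\textup c}_{\textup{ATT}}$ is exactly the conjunction of these two differences being nonnegative. It is therefore equivalent to $C_1\ge0$ and $C_0\ge0$. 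The reversed ordering is handled the same way with $\le$.

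The interval notation $[a,b]$ is read as $\{x:a\le x\le b\}$, so "$\tau^{\textup c}_{\textup{ATO}}\in[\tau^{\textup c}_{\textup{ATC}},\tau^{\textup c}_{\textup{ATT}}]$" means precisely $\tau^{\textup c}_{\textup{ATC}}\le\tau^{\textup c}_{\textup{ATO}}\le\tau^{\textup c}_{\textup{ATT}}$, which is the reading used above. The only step needing any care is the positivity of the denominators and of the conditioning probabilities; once that is settled, the rest is sign bookkeeping.
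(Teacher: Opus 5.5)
Your proposal is correct and follows the paper's own proof: the paper also observes that all denominators in Theorem~\ref{thm:diff-rep-local} are positive, so each local estimand difference has the sign of its corresponding covariance, and both equivalences follow. Your check that the conditioning events $\{U=\textup c\}$ and $\{U=\textup c, Z=z\}$ have positive probability spells out a step the paper's one-line argument leaves implicit.
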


We next give a simple sufficient condition for the covariance signs in Corollary~\ref{coro:exact-bracketing-local}. Define the local CP function
\begin{align}\label{eq:def_gc}
g^{\textup{c}}(e)
=
\E\{\tau^{\textup{c}}(X)\mid U=\textup{c}, e(X)=e\}=
\frac{
\E\{\pi^{\textup{c}}(X)\tau^{\textup{c}}(X)\mid e(X)=e\}
}{
\E\{\pi^{\textup{c}}(X)\mid e(X)=e\}
},
\end{align}
the CATE among compliers whose IV propensity score
equals $e$.
The four local estimands above average \(\tau^{\textup{c}}(X)\) over the
complier distribution of \(X\), with weights given by different functions of
the IV propensity score \(e(X)\). Thus \(g^{\textup{c}}(e)\) plays the
same role in the IV setting as the CP function \(g(e)\) plays in the observational study setting.

The covariance representations in
Theorem~\ref{thm:diff-rep-local} can be signed through the local CP
function. The argument parallels that in \eqref{eq:balance_first}, but
requires the balancing property of the IV propensity score within the
complier population.

Assumption~\ref{assump:iv}(i) implies
\(Z\perp\!\!\!\perp U\mid X\), because
\(U=\{D(1),D(0)\}\) is determined by the potential treatments. Hence,
\[
\p(Z=1\mid X,U=\textup c)
=
\p(Z=1\mid X)
=
e(X).
\]
Taking conditional expectations given \(e(X)\) and \(U=\textup c\) gives
\[
\begin{aligned}
\p\{Z=1\mid e(X),U=\textup c\}
=
\E\!\left[
\p(Z=1\mid X,U=\textup c)
\mid e(X),U=\textup c
\right] =
e(X).
\end{aligned}
\]
Therefore, we have
\(
Z\perp\!\!\!\perp X\mid e(X),U=\textup c.
\)
That is, \(e(X)\) remains a balancing score for \(Z\) within the
complier population.
It follows that, for \(z\in\{0,1\}\),
\[
\begin{aligned}
\E\{\tau^{\textup c}(X)
\mid U=\textup c,Z=z,e(X)\}
=
\E\{\tau^{\textup c}(X)
\mid U=\textup c,e(X)\} =
g^{\textup c}\{e(X)\}.
\end{aligned}
\]
Applying the law of total covariance conditional on
\((U=\textup c,Z=z)\), and noting that \(e(X)\) is fixed after
conditioning on itself, yields
\begin{align}\label{eq:key_tauce_to_gc}
\cov\{\tau^{\textup c}(X),e(X)
\mid U=\textup c,Z=z\} =
\cov\{g^{\textup c}(e(X)),e(X)
\mid U=\textup c,Z=z\}.
\end{align}

Thus, if \(g^{\textup c}(e)\) is non-decreasing, the covariances in
\eqref{eq:key_tauce_to_gc} are nonnegative for \(z=0,1\); if
\(g^{\textup c}(e)\) is non-increasing, they are nonpositive.
Consequently, monotonicity of the local CP function provides a transparent
sufficient condition for the signs of the tilted covariances in
Corollary~\ref{coro:exact-bracketing-local}. We impose the following
monotonicity condition on \(g^{\textup c}\).
\begin{assumption}\label{assump:iv mono}
Suppose that
\begin{enumerate}[label=(\roman*),leftmargin=*, align=left]
    \item $g^{\textup{c}}(e)$ is a non-decreasing function, or
    \item $g^{\textup{c}}(e)$ is a non-increasing function.
\end{enumerate}
\end{assumption}

Assumption~\ref{assump:iv mono} is the LATE counterpart of
Assumption~\ref{assump:mono}. It holds whenever compliers with a higher (or lower) IV propensity score also tend to have larger treatment effects.
Assumption \ref{assump:iv mono} is weaker than requiring \(\tau^{\textup{c}}(X)\) to be
pointwise monotone in \(e(X)\), since it constrains only the conditional mean \(g^{\textup{c}}({e(X)})\)
under the complier distribution.

\begin{coro}\label{coro:ow in atc att local}
Assume Assumption~\ref{assump:iv}.  Under Assumption~\ref{assump:iv mono}(i), we have $\tau^{\textup{c}}_{\textup{ATO}}
\in
[\tau^{\textup{c}}_{\textup{ATC}},\tau^{\textup{c}}_{\textup{ATT}}].$
Under Assumption~\ref{assump:iv mono}(ii), we have
$\tau^{\textup{c}}_{\textup{ATO}}
\in
[\tau^{\textup{c}}_{\textup{ATT}},\tau^{\textup{c}}_{\textup{ATC}}].$
\end{coro}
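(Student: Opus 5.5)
The plan is to combine Corollary~\ref{coro:exact-bracketing-local}(ii) with the covariance identity \eqref{eq:key_tauce_to_gc}. Under Assumption~\ref{assump:iv}, Corollary~\ref{coro:exact-bracketing-local}(ii) reduces the claim $\tau^{\textup c}_{\textup{ATO}}\in[\tau^{\textup c}_{\textup{ATC}},\tau^{\textup c}_{\textup{ATT}}]$ to showing that
\[
\cov\{\tau^{\textup c}(X),e(X)\mid U=\textup c,Z=z\}\ge 0 \quad \text{for } z\in\{0,1\}.
\]
The reverse inclusion reduces in the same way to both covariances being nonpositive. By \eqref{eq:key_tauce_to_gc}, each of these covariances equals $\cov\{g^{\textup c}(e(X)),e(X)\mid U=\textup c,Z=z\}$. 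So it suffices to sign the covariance between a monotone function of a scalar random variable and that variable itself, under the conditional law of $e(X)$ given $(U=\textup c,Z=z)$.

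The key step is the elementary fact that $\cov\{h(W),W\}\ge0$ whenever $h$ is non-decreasing and the moments exist. I would prove it with the symmetrization (Chebyshev/Harris) trick. Let $W'$ be an independent copy of $W=e(X)$ under the conditional law given $(U=\textup c,Z=z)$. Then
\[
2\cov\{h(W),W\}=\E[\{h(W)-h(W')\}(W-W')]\ge0,
\]
because the integrand is pointwise nonnegative when $h$ is non-decreasing. Applying this with $h=g^{\textup c}$ under Assumption~\ref{assump:iv mono}(i) gives nonnegativity for $z=0$ and $z=1$, hence $\tau^{\textup c}_{\textup{ATO}}\in[\tau^{\textup c}_{\textup{ATC}},\tau^{\textup c}_{\textup{ATT}}]$. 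Under Assumption~\ref{assump:iv mono}(ii), apply the same argument to $-g^{\textup c}$, which yields the reverse inclusion.

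The main obstacle is technical rather than conceptual, and it concerns two points.

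\textbf{Well-posed conditioning.} The conditioning events $U=\textup c,Z=z$ must have positive probability. This follows from relevance, which gives $\pi^{\textup c}(X)>0$ almost surely, together with overlap $e(X)\in(0,1)$, via $\p(U=\textup c,Z=1)=\E\{e(X)\pi^{\textup c}(X)\}>0$ and the analogous expression for $Z=0$.

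\textbf{Integrability.} The function $g^{\textup c}\{e(X)\}$ must be integrable so that the covariance is defined. Since $e(X)$ is bounded, this reduces to integrability of $\tau^{\textup c}(X)$ under the complier law, which is already implicit in the definitions of the estimands.

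A minor subtlety is that $g^{\textup c}$ is defined only almost everywhere on the support of $e(X)$ given $U=\textup c$. However, the conditional law given $(U=\textup c,Z=z)$ is absolutely continuous with respect to it, with density proportional to $e$ or $1-e$. So the version of $g^{\textup c}$ that is monotone can be used throughout, and the symmetrization argument goes through almost surely.
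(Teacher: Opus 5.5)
Your proposal is correct and follows the paper's proof essentially verbatim. You reduce via Corollary~\ref{coro:exact-bracketing-local}(ii) to signing the two conditional complier covariances, replace $\tau^{\textup c}(X)$ by $g^{\textup c}\{e(X)\}$ using \eqref{eq:key_tauce_to_gc}, and invoke the nonnegativity of the covariance between a scalar variable and a non-decreasing function of it. Your symmetrization proof of that fact is sound, as are your remarks on positive-probability conditioning events, integrability, and choosing a monotone version of $g^{\textup c}$; the paper leaves these implicit.
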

We end this section with two remarks.
\begin{remark}
As in the observational study setting, Corollary~\ref{coro:ow in atc att local}
is a statement about causal estimands. Define the conditional reduced-form
and first-stage contrasts as
\[
\tau_Y(X)
=
\E(Y\mid Z=1,X)-\E(Y\mid Z=0,X),
\qquad
\tau_D(X)
=
\E(D\mid Z=1,X)-\E(D\mid Z=0,X).
\]
Under Assumption~\ref{assump:iv}, we have
$
\pi^{\textup c}(X)=\tau_D(X),
$
and the conditional Wald formula gives
$
\tau^{\textup c}(X)
=
{\tau_Y(X)}/{\tau_D(X)}.
$
Hence these local causal estimands are identified from the observable data
distribution, and the same results carry over to the corresponding
statistical estimands.
\end{remark}
\begin{remark}[Connection with TSLS]
The overlap-weighted LATE is closely related to the population TSLS
coefficient from the IV regression of \(Y\) on \((1,D,X)\), using \(Z\) as
the excluded IV.
Let \(\TSLS\) denote this coefficient. When the IV propensity score
\(e(X)\) lies in the linear span of \((1,X)\), \(\TSLS\) equals the overlap-weighted LATE
\(\tau^{\textup c}_{\textup{ATO}}\). Without this linearity, TSLS generally
contains additional terms induced by the discrepancy between \(e(X)\) and its
linear projection and need not be a causally interpretable weighted average.
Section~\ref{subsec:regression_iv} of the Supplementary Material gives the
formal decomposition.
\end{remark}

\section{From theory to practice: CP-plot and local CP-plot}\label{sec:cpplot}

% Motivated by the covariance representations above, we now introduce the
% CP-plot (Section \ref{subsec:cp_plot}) and its local counterpart (Section \ref{subsec:local_cp_plot}).
Motivated by the covariance representations above, we now introduce the
CP-plot (Section \ref{subsec:cp_plot}) and its local counterpart (Section \ref{subsec:local_cp_plot}). The proposal complements existing
diagnostics based on the implied weights of regression adjustment
\citep{chattopadhyay2023implied} by focusing instead on the relationship
between treatment-effect heterogeneity and the propensity score.

\subsection{The CP-plot in observational studies}\label{subsec:cp_plot}

Suppose that we observe $\{(Z_i,X_i,Y_i),i=1,\ldots,n\}$ from an
observational study with a binary treatment $Z$. We construct the CP-plot as
follows.

\begin{enumerate}[label=(\arabic*),leftmargin=*]
\item Estimate the propensity score and CATE for each unit:
$
\hat e_i=\hat e(X_i),
$ and $
\hat\tau_i=\hat\tau(X_i).
$

\item Plot the points \(\{(\hat e_i,\hat\tau_i),i=1,\ldots,n\}\). 
\item Add three weighted linear fits of \(\hat\tau_i\) on
\(\hat e_i\):
\(
\texttt{lm}(\hat\tau_i\sim 1+\hat e_i,\ \texttt{weights}=w_i)
\),
using
\(
w_i\in\{1,\hat e_i,1-\hat e_i\}.
\)
\end{enumerate}

In Step (1), we can estimate the propensity score by logistic regression of
\(Z\) on \(X\). We can estimate the CATE by outcome regressions; for example, 
$
\hat\tau(X_i)=\hat\mu_1(X_i)-\hat\mu_0(X_i),
$
where $\hat\mu_z(X_i)$ estimates $\E(Y\mid Z=z,X_i)$. Other estimators can be used depending on the application.

Step (2) visualizes the joint empirical distribution of
\(\hat e_i\) and \(\hat\tau_i\): with treated and control units distinguished, the distribution of points along the horizontal axis shows the overlap in the estimated propensity scores, the spread along the vertical axis shows treatment effect heterogeneity, and the overall pattern shows the relationship between the two.

Step~(3) has a simple population interpretation. By \eqref{eq:exax}, weighting the population linear projection of \(\tau(X)\) on
\(e(X)\) by \(e(X)\) is equivalent to computing the projection under the
treated covariate distribution, whereas weighting by \(1-e(X)\) is equivalent
to computing it under the control covariate distribution. Thus, the three
population counterparts of the fits in Step~(3) are the linear projections
under the overall, treated, and control covariate distributions, respectively.
Their slopes have the same signs as
\[
\cov\{\tau(X),e(X)\},
\qquad
\cov\{\tau(X),e(X)\mid Z=1\},
\qquad
\cov\{\tau(X),e(X)\mid Z=0\},
\]
respectively. Table~\ref{tab:cp-plot-lines} summarizes the corresponding
sample interpretations.

\begin{table}[H]
\centering
\small
\caption{Interpretation of the three linear fits in the CP-plot.
The listed inequalities are reversed when the corresponding fitted slope is
negative.}
\label{tab:cp-plot-lines}
\begin{tabularx}{\textwidth}{@{}llXX@{}}
\toprule
Fit & Weight & Population covariance & Positive slope implies \\
\midrule
Unweighted
& \(1\)
& \(\cov\{\tau(X),e(X)\}\)
& \(\hat\tau_{\mathrm{ATT}}
   \ge \hat\tau_{\mathrm{ATE}}
   \ge \hat\tau_{\mathrm{ATC}}\) \\

Treated-tilted
& \(\hat e_i\)
& \(\cov\{\tau(X),e(X)\mid Z=1\}\)
& \(\hat\tau_{\mathrm{ATT}}
   \ge \hat\tau_{\mathrm{ATO}}\) \\

Control-tilted
& \(1-\hat e_i\)
& \(\cov\{\tau(X),e(X)\mid Z=0\}\)
& \(\hat\tau_{\mathrm{ATO}}
   \ge \hat\tau_{\mathrm{ATC}}\) \\
\bottomrule
\end{tabularx}
\end{table}
Theorem~\ref{thm:sample-cp-geometry} in the Supplementary Material gives
the exact finite-sample counterpart: the three fitted slopes encode the
corresponding differences among the plug-in estimates, and the pairwise
intersections recover the plug-in estimates of ATT, ATO, and ATC.

Now we present additional properties of the CP-plot based on its population version. Let
\[
L_j(e)=\alpha_j+\beta_j e,
\qquad
j\in\{\mathrm{all},\mathrm{tr},\mathrm{co}\},
\]
denote the population weighted linear projections of \(\tau(X)\) on
\(e(X)\), where
\[
(\alpha_j,\beta_j)
=
\arg\min_{a,b}
\E\left[
w_j(X)\{\tau(X)-a-be(X)\}^2
\right],
\]
with
\(
w_{\mathrm{all}}(X)=1,
w_{\mathrm{tr}}(X)=e(X),\)
and 
\(
w_{\mathrm{co}}(X)=1-e(X).
\)
The three slopes are also linked: \(\beta_{\mathrm{all}}\) is a convex
combination of \(\beta_{\mathrm{tr}}\) and \(\beta_{\mathrm{co}}\).
Proposition~\ref{prop:cp-slope-convex} in the Supplementary Material gives
the exact weights and a proof.

For
\(\star\in\{\mathrm{ATE},\mathrm{ATT},\mathrm{ATO},\mathrm{ATC}\}\),
define
\[
\bar e_\star
=
\frac{\E\{h_\star(X)e(X)\}}
{\E\{h_\star(X)\}},
\]
where
\(
h_{\mathrm{ATE}}(X)=1,
h_{\mathrm{ATT}}(X)=e(X),
h_{\mathrm{ATO}}(X)=e(X)\{1-e(X)\},
\) and \(
h_{\mathrm{ATC}}(X)=1-e(X).
\)
Thus, \((\bar e_\star,\tau_\star)\) is the centroid of
\(\{e(X),\tau(X)\}\) under the corresponding population.

\begin{theorem}
\label{thm:population-cp-geometry}
Assume Assumption~\ref{assume:ignor} and that the three weighted linear
projections are well defined.
Whenever each corresponding pair of lines is distinct, their unique
intersections are
\[
L_{\mathrm{all}}\cap L_{\mathrm{tr}}
=
(\bar e_{\mathrm{ATT}},\tau_{\mathrm{ATT}}),
\qquad
L_{\mathrm{tr}}\cap L_{\mathrm{co}}
=
(\bar e_{\mathrm{ATO}},\tau_{\mathrm{ATO}}),
\qquad
L_{\mathrm{co}}\cap L_{\mathrm{all}}
=
(\bar e_{\mathrm{ATC}},\tau_{\mathrm{ATC}}).
\]
Moreover, the slopes of the three population linear projections satisfy:
\begin{enumerate}[label=(\roman*),leftmargin=*, align=left]
\item For the unweighted projection,
\[
\beta_{\mathrm{all}}
=
\frac{
\tau_{\mathrm{ATT}}-\tau_{\mathrm{ATE}}
}{
\bar e_{\mathrm{ATT}}-\bar e_{\mathrm{ATE}}
}
=
\frac{
\tau_{\mathrm{ATE}}-\tau_{\mathrm{ATC}}
}{
\bar e_{\mathrm{ATE}}-\bar e_{\mathrm{ATC}}
}.
\]

\item For the treated- and control-weighted projections,
\[
\beta_{\mathrm{tr}}
=
\frac{
\tau_{\mathrm{ATT}}-\tau_{\mathrm{ATO}}
}{
\bar e_{\mathrm{ATT}}-\bar e_{\mathrm{ATO}}
},
\qquad
\beta_{\mathrm{co}}
=
\frac{
\tau_{\mathrm{ATO}}-\tau_{\mathrm{ATC}}
}{
\bar e_{\mathrm{ATO}}-\bar e_{\mathrm{ATC}}
}.
\]
\end{enumerate}

\end{theorem}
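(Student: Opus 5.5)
The whole proof rests on one fact about weighted least squares: under any weighting measure, the fitted line passes through the weighted centroid. Concretely, the normal equation for the intercept, $\E[w_j(X)\{\tau(X)-\alpha_j-\beta_j e(X)\}]=0$, gives
\[
\frac{\E\{w_j(X)\tau(X)\}}{\E\{w_j(X)\}}=\alpha_j+\beta_j\,\frac{\E\{w_j(X)e(X)\}}{\E\{w_j(X)\}} .
\]
The normal equation for the slope, $\E[w_j(X)e(X)\{\tau(X)-\alpha_j-\beta_j e(X)\}]=0$, says the same thing with weight $w_j(X)e(X)$ in place of $w_j(X)$. Both hold for every $j$ as soon as the projections are well defined, which is assumed.

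Reading off the centroids for each line:
\begin{itemize}
\item For $j=\mathrm{all}$, weights $1$ and $e(X)$ show that $L_{\mathrm{all}}$ passes through $(\bar e_{\mathrm{ATE}},\tau_{\mathrm{ATE}})$ and through $(\bar e_{\mathrm{ATT}},\tau_{\mathrm{ATT}})$. Here $\E\{e(X)\tau(X)\}/\E\{e(X)\}=\tau_{\mathrm{ATT}}$ by the weighted-CATE representations in Section~\ref{sec:obs}, since $p=\E\{e(X)\}$.
\item For $j=\mathrm{tr}$, weights $e(X)$ and $e(X)^2$ give the ATT centroid. Alternatively, one can use the slope normal equation to generate the weight $e(X)\cdot e(X)$; what we actually want is the pair $\{e(X),\,e(X)\{1-e(X)\}\}$. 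This follows by linearity: subtracting the slope equation from the intercept equation gives the identity for weight $e(X)-e(X)^2=e(X)\{1-e(X)\}$. So $L_{\mathrm{tr}}$ passes through the ATT and ATO centroids.
\item For $j=\mathrm{co}$, the intercept equation gives weight $1-e(X)$, and the slope equation gives weight $\{1-e(X)\}e(X)$. So $L_{\mathrm{co}}$ passes through the ATC and ATO centroids.
\item Finally, for $j=\mathrm{all}$, taking the intercept equation minus the slope equation gives weight $1-e(X)$. So $L_{\mathrm{all}}$ also passes through the ATC centroid.
\end{itemize}

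Each pair of lines therefore shares a known point: $(\bar e_{\mathrm{ATT}},\tau_{\mathrm{ATT}})$ for $L_{\mathrm{all}}$ and $L_{\mathrm{tr}}$, the ATO centroid for $L_{\mathrm{tr}}$ and $L_{\mathrm{co}}$, and the ATC centroid for $L_{\mathrm{co}}$ and $L_{\mathrm{all}}$. Two distinct non-vertical lines meet in at most one point, so whenever a pair is distinct, that shared point is its unique intersection. This proves the first display.

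For the slope formulas, each line contains two known centroids, so its slope is the difference quotient between them:
\begin{itemize}
\item $L_{\mathrm{all}}$ contains the ATE, ATT and ATC centroids, which gives both expressions in (i);
\item $L_{\mathrm{tr}}$ contains the ATT and ATO centroids;
\item $L_{\mathrm{co}}$ contains the ATO and ATC centroids.
\end{itemize}

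The only real obstacle is that each difference quotient requires its denominator to be nonzero. I would handle this with the same identity \eqref{eq:exax} used in Theorem~\ref{thm:diff-rep}. Applying the covariance algebra to $e(X)$ in place of $\tau(X)$ gives
\begin{itemize}
\item $\bar e_{\mathrm{ATT}}-\bar e_{\mathrm{ATE}}=\var\{e(X)\}/p$;
\item $\bar e_{\mathrm{ATT}}-\bar e_{\mathrm{ATO}}=\var\{e(X)\mid Z=1\}/\E\{1-e(X)\mid Z=1\}$, and analogously for the remaining gaps.
\end{itemize}
Each of these is strictly positive exactly when the corresponding weighted variance of $e(X)$ is positive. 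That positive variance is precisely the condition for the weighted projection to be well defined with a unique slope, so it is covered by the hypothesis.

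As a cross-check, dividing the differences in Theorem~\ref{thm:diff-rep} by these $e$-gaps returns ratios of the form $\cov(\tau,e)/\var(e)$ under the relevant weighting, which are exactly the least-squares slopes.
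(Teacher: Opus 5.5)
Your proof is correct, but it runs in the opposite direction from the paper's. The paper uses the normal equations only in their solved form: each line passes through its own centroid (ATE, ATT, ATC, respectively) and has slope $\cov_j\{\tau(X),e(X)\}/\var_j\{e(X)\}$. It then proves the slope identities by computing both gaps explicitly, the $\tau$-gaps through the covariance representations of Theorem~\ref{thm:diff-rep} and the $\bar e$-gaps through the parallel variance formulas. Only afterwards does it place the remaining centroids on each line, by moving from the known centroid by slope times horizontal gap.

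You instead read the slope normal equation itself as a centroid statement under the tilted weight $w_j(X)e(X)$. By linearity, $L_j$ then passes through the centroid of every weight $w_j(X)\{c_0+c_1e(X)\}$ with nonzero mean. The intersections follow without any covariance computation, and the slope identities become two-point difference quotients. Their only requirement is positivity of the $\bar e$-gaps, which you verify.

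Your route explains structurally why these three points are the intersections: the ATO weight $e(1-e)$ is a linear tilt of both $e$ and $1-e$. It also carries over verbatim to the beta-weight fits of Section~\ref{subsec:general-cp-beta}, where the $\beta_{u,v}$-weighted line passes through the centroids of $\tau_{u,v}$, $\tau_{u+1,v}$, and $\tau_{u,v+1}$. The paper's route instead identifies each factor in ``estimand gap $=$ slope $\times$ propensity gap'' as an explicit weighted covariance or variance, which ties the geometry back to Theorem~\ref{thm:diff-rep}.

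One wording slip: the $e(X)^2$-weighted equation alone gives the $(u,v)=(3,1)$ centroid, not the ATT centroid. Your subsequent subtraction is what you actually use, so nothing is lost.
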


Theorem~\ref{thm:population-cp-geometry} gives an exact geometric
interpretation of the three population CP lines. When the pairwise
intersections are distinct, their vertical coordinates are
\(\tau_{\mathrm{ATT}}\), \(\tau_{\mathrm{ATO}}\), and
\(\tau_{\mathrm{ATC}}\), respectively. Thus, the three fitted lines encode
the levels of the target estimands, rather than merely the signs of the
corresponding covariance comparisons. The sample analog of this result implies
that the pairwise intersections of the three fitted CP lines recover the
corresponding plug-in estimates of ATT, ATO, and ATC; see
Theorem~\ref{thm:sample-cp-geometry} in the Supplementary Material.

The slope identities provide both alternative difference representations and
a quantitative interpretation of the fitted slopes. Each estimand difference
equals the relevant slope multiplied by the horizontal separation between the
corresponding target-population mean propensity scores. Consequently, even
when the intersections are visually difficult to distinguish, the magnitudes
of the estimand differences remain easy to assess. If the three slopes are
small in absolute value, the differences among ATT, ATE, ATO, and ATC must
also be small. If a slope is large in absolute value, however, this alone does
not imply a large estimand difference; one must also examine the corresponding
distance among
\(\bar e_{\mathrm{ATT}}\), \(\bar e_{\mathrm{ATO}}\),
\(\bar e_{\mathrm{ATE}}\), and \(\bar e_{\mathrm{ATC}}\).

% Add \usepackage{tabularx} in the preamble.

\subsection{The local CP-plot in IV studies}\label{subsec:local_cp_plot}
The same idea applies to the IV setting after replacing the CATE
\(\tau(X)\) by its complier counterpart \(\tau^{\textup c}(X)\).
Suppose that we observe \((Z_i,D_i,X_i,Y_i)\), where \(Z_i\) is a binary IV
and \(D_i\) is a binary treatment. Recall \(e(X)\),
\(\pi^{\textup c}(X)\), \(\tau_Y(X)\), and \(\tau_D(X)\) defined in
Section~\ref{sec:iv}. Under Assumption~\ref{assump:iv},
\(\pi^{\textup c}(X)=\tau_D(X)\) and
\(\tau^{\textup c}(X)=\tau_Y(X)/\tau_D(X)\).
We construct the local CP-plot as follows.
\begin{enumerate}[label=(\arabic*),leftmargin=*]
\item Estimate the IV propensity score:
$\hat e_i=\hat e(X_i).$ Estimate conditional contrasts:
\[
\widehat\tau_Y(X_i)
=
\widehat{\E}(Y\mid Z=1,X_i)-\widehat{\E}(Y\mid Z=0,X_i),
\]
\[
\widehat\tau_D(X_i)
=
\widehat{\E}(D\mid Z=1,X_i)-\widehat{\E}(D\mid Z=0,X_i).
\]
Set
$
\hat\pi_i^{\textup c}=\widehat\tau_D(X_i),
$ and $
\hat\tau_i^{\textup c}
=
{\widehat\tau_Y(X_i)}/{\widehat\tau_D(X_i)}.
$

\item Plot the points \(\{(\hat e_i,\hat\tau_i^{\textup c}),i=1,\ldots,n\}\).

\item Add three weighted linear fits of \(\hat\tau_i^{\textup c}\) on
\(\hat e_i\): \(\texttt{lm}(\hat\tau_i^{\textup c}\sim 1+\hat e_i,\ 
\texttt{weights}=w_i),\)
using $w_i\in\{\hat \pi_i^\textup c, \hat \pi_i^\textup c \hat e_i, \hat \pi_i^\textup c(1-\hat e_i)\}$.
\end{enumerate}

The interpretation of these weighted fits parallels that in observational
studies, with the population covariate distribution replaced by the complier distribution. Table \ref{tab:local-cp-plot-lines} summarizes the three fits.

\begin{table}[H]
\centering
\small
\caption{Interpretation of the three weighted linear fits in the local
CP-plot. The listed inequalities are reversed when the corresponding
fitted slope is negative.}
\label{tab:local-cp-plot-lines}
\resizebox{\textwidth}{!}{
\begin{tabular}{llll}
\toprule
Linear fit & Weight & Population covariance & Positive fitted slope implies \\
\midrule
Complier-weighted
& \(\hat\pi_i^{\textup c}\)
& \(\cov\{\tau^{\textup c}(X),e(X)\mid U=\textup c\}\)
& \(\hat\tau^{\textup c}_{\textup{ATT}}
\ge \hat\tau^{\textup c}_{\textup{ATE}}
\ge \hat\tau^{\textup c}_{\textup{ATC}}\) \\
Encouraged-complier-weighted
& \(\hat\pi_i^{\textup c}\hat e_i\)
& \(\cov\{\tau^{\textup c}(X),e(X)\mid U=\textup c,Z=1\}\)
& \(\hat\tau^{\textup c}_{\textup{ATT}}
\ge \hat\tau^{\textup c}_{\textup{ATO}}\) \\
Unencouraged-complier-weighted
& \(\hat\pi_i^{\textup c}(1-\hat e_i)\)
& \(\cov\{\tau^{\textup c}(X),e(X)\mid U=\textup c,Z=0\}\)
& \(\hat\tau^{\textup c}_{\textup{ATO}}
\ge \hat\tau^{\textup c}_{\textup{ATC}}\) \\
\bottomrule
\end{tabular}
}
\end{table}
We use the weighted least-squares fits in Step (3) as a unified implementation
of the local covariance comparisons in Table~\ref{tab:local-cp-plot-lines}.
Theorem~\ref{thm:sample-local-cp-geometry} in the Supplementary Material
gives the exact sample-level justification, including both the slope
identities and the pairwise intersection results.

The local CP lines also admit an exact population geometry parallel to
Theorem~\ref{thm:population-cp-geometry}. Specifically, the
complier-weighted and encouraged-complier-weighted lines intersect at the
local ATT centroid, the encouraged- and unencouraged-complier-weighted lines
intersect at the local ATO centroid, and the unencouraged-complier-weighted
and complier-weighted lines intersect at the local ATC centroid. Their slopes
similarly equal the corresponding differences between local weighted
estimands divided by the horizontal distances between the target-population
mean IV propensity scores. To save space, we give the formal result in
Theorem~\ref{thm:population-local-cp-geometry} of
Section~\ref{subsec:population-local-cp-geometry} in the Supplementary
Material; its proof is given in Section~\ref{sec:proofs}.

\section{Empirical illustrations}\label{sec:app}

We apply the CP-plot to eight observational studies and the local CP-plot to an IV study, and compare the corresponding weighted estimates.

\subsection{Observational studies}
To complement the theoretical results in Section \ref{sec:obs}, we revisit eight empirical studies. One uses the right heart catheterization (rhc) data \citep{conners1996effectiveness,harrell2019package}, where the treatment (\texttt{swang1}) indicates whether a patient received right heart catheterization, and the outcome is in-hospital death (\texttt{death}). 
A second application,  \texttt{cavities\_smoking},  uses an NHANES III pediatric dental-caries dataset \citep{cdc2010nhanes}. We restrict attention to children aged 4--11 years with at least one primary tooth, define the treatment as secondhand smoking exposure (\texttt{secondhand.smoking.exposure}), coded as 1 when serum cotinine exceeds 1 ng/mL, and take the outcome to be the number of decayed and filled surfaces in the primary teeth (\texttt{dfs}). The remaining applications are \texttt{abortion}, \texttt{adult\_services} (with three treatment definitions),
\texttt{black\_politicians}, and \texttt{ccdrug}, all from the
\texttt{causaldata} package \citep{huntingtonklein2024causaldata}.
Table~\ref{tab:app_data} summarizes the eight treatment--outcome pairs.

\begin{table}
\centering
\footnotesize
\caption{Empirical settings used in Figure~\ref{fig:rhc_tau_prop}}
\label{tab:app_data}

\setlength{\tabcolsep}{4pt}
\begin{tabularx}{\linewidth}{
@{}
>{\raggedright\arraybackslash}p{3.2cm}
>{\raggedright\arraybackslash}p{3.5cm}
>{\raggedright\arraybackslash}X
>{\raggedright\arraybackslash}X
@{}
}
\toprule
Application & Data source / sample & Treatment definition & Outcome definition \\
\midrule

\texttt{rhc}
& Right Heart Catheterization study
& Right-heart catheterization (\texttt{swang1})
& In-hospital death (\texttt{death}) \\

\texttt{cavities\_smoking}
& NHANES III children aged 4--11 with at least one primary tooth
& Secondhand smoking exposure (\texttt{SHS})\textsuperscript{a}
& Number of decayed and filled surfaces in primary teeth (\texttt{dfs}) \\

\texttt{abortion}
& Abortion legalization and sexually transmitted infections
& Early repeal of abortion prohibition (\texttt{repeal})
& Log gonorrhea cases per 100,000 among 15--19-year-olds (\texttt{lnr}) \\

\texttt{adult\_services} (hot)
& Internet-mediated sex-worker survey
& Client met in a hotel (\texttt{hot})
& Log hourly wage (\texttt{lnw}) \\

\texttt{adult\_services} (reg)
& Internet-mediated sex-worker survey
& Client was a regular client (\texttt{reg})
& Log hourly wage (\texttt{lnw}) \\

\texttt{adult\_services} (unsafe)
& Internet-mediated sex-worker survey
& Unprotected sex with client (\texttt{unsafe})
& Log hourly wage (\texttt{lnw}) \\

\texttt{black\_politicians}
& Legislators' email-response field experiment
& Email sender was from out of district (\texttt{treat\_out})
& Legislator responded to the email (\texttt{responded}) \\

\texttt{ccdrug}
& Crown Court drug-arrest data
& Defendant was male (\texttt{male})
& Defendant was taken into custody (\texttt{custody}) \\
\bottomrule
\end{tabularx}

\vspace{1em}
\begin{minipage}{0.95\linewidth}
\footnotesize
\textsuperscript{a} \texttt{SHS} denotes the raw treatment variable
\texttt{secondhand.smoking.exposure}, coded as 1 if serum cotinine exceeds
1 ng/mL.
\end{minipage}
\end{table}

Figure~\ref{fig:rhc_tau_prop} presents the CP-plots for the eight treatment-outcome pairs. For
each, we estimate \(e(X)\) by logistic regression and
\(\E(Y\mid X,Z=z)\), \(z\in\{0,1\}\), by separate linear outcome regressions within the two treatment arms.
Each panel plots the unit-level pairs \((\hat e(X_i),\hat\tau(X_i))\) and adds the 
three weighted linear fits of Table~\ref{tab:cp-plot-lines}.

The signs of the fitted slopes estimate the signs of the covariances in Table~\ref{tab:cp-plot-lines}.  In the \texttt{abortion},
\texttt{adult\_services} (unsafe), \texttt{black\_politicians}, and
\texttt{cavities\_smoking} applications, the treated tilted and control tilted  slopes are positive, implying
$
\hat\tau_{\mathrm{ATC}}
\le
\hat\tau_{\mathrm{ATO}}
\le
\hat\tau_{\mathrm{ATT}}.
$
In contrast, in the \texttt{rhc}, \texttt{adult\_services} (hot), \texttt{adult\_services} (reg), and \texttt{ccdrug} applications, both tilted slopes are negative, implying the reverse ordering
$
\hat\tau_{\mathrm{ATT}}
\le
\hat\tau_{\mathrm{ATO}}
\le
\hat\tau_{\mathrm{ATC}}.
$
By Corollary~\ref{coro:exact-bracketing}, the slope signs provide graphical evidence for the empirical
bracketing relation between ATT, ATO, and ATC. Table~\ref{tab:ATO_all} reports the estimated ATT, ATO, and ATC for all eight treatment-outcome pairs.  In each case, \(\hat\tau_{\mathrm{ATO}}\) lies between \(\hat\tau_{\mathrm{ATT}}\) and \(\hat\tau_{\mathrm{ATC}}\), with the direction determined by the signs of the tilted slopes in Figure~\ref{fig:rhc_tau_prop}.

\begin{figure}[!htbp]
\centering
\includegraphics[width=\linewidth]{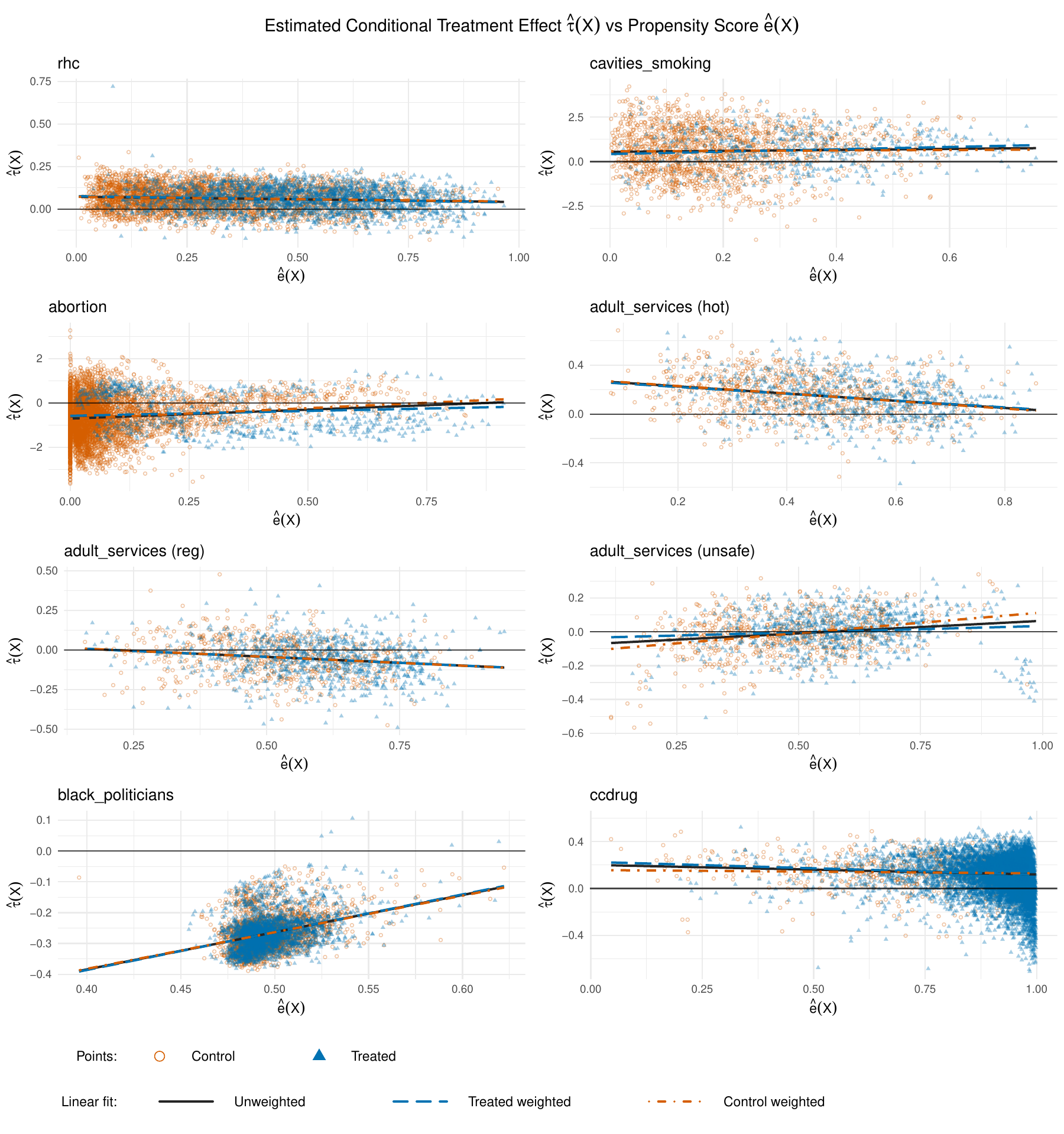}
\caption{CP-plots: estimated conditional treatment effects $\hat{\tau}(X)$ versus estimated propensity scores $\hat e(X)$ in eight empirical studies. Points are unit-level estimates, with treated and control units shown using different point types. The colored lines are three weighted linear fits of \(\hat\tau(X)\) on \(\hat e(X)\).}
\label{fig:rhc_tau_prop}
\end{figure}

\begin{table}[!htbp]
\centering
\small
\caption{Estimated average treatment effects for different target populations across eight empirical studies}
\label{tab:ATO_all}
\begin{tabular}{p{4.4cm}ccc}
\toprule
Application & $\hat\tau_{\mathrm{ATT}}$ & $\hat\tau_{\mathrm{ATO}}$ & $\hat\tau_{\mathrm{ATC}}$ \\
\midrule
\texttt{rhc} & 0.0576 & 0.0604 & 0.0635 \\
 & (0.0130) & (0.0125) & (0.0150) \\

\texttt{cavities\_smoking} & 0.6378 & 0.6140 & 0.6031 \\
 & (0.4396) & (0.4297) & (0.4947) \\

\texttt{abortion} & -0.4288 & -0.4643 & -0.6389 \\
 & (0.0401) & (0.0448) & (0.1143) \\

\texttt{adult\_services} (hot) & 0.1328 & 0.1446 & 0.1595 \\
 & (0.0257) & (0.0250) & (0.0267) \\

\texttt{adult\_services} (reg) & -0.0588 & -0.0526 & -0.0472 \\
 & (0.0259) & (0.0239) & (0.0243) \\

\texttt{adult\_services} (unsafe) & 0.0031 & -0.0010 & -0.0113 \\
 & (0.0252) & (0.0236) & (0.0246) \\

\texttt{black\_politicians} & -0.2666 & -0.2673 & -0.2678 \\
 & (0.0129) & (0.0129) & (0.0129) \\

\texttt{ccdrug} & 0.1252 & 0.1337 & 0.1348 \\
 & (0.0155) & (0.0115) & (0.0117) \\
\bottomrule
\end{tabular}

\vspace{2.5em}
\begin{minipage}{0.95\linewidth}
\footnotesize
Note. Bootstrap standard errors are shown in parentheses.
\end{minipage}
\end{table}
\subsection{An IV study}
To complement the IV results in Section \ref{sec:iv}, we revisit the 401(k) dataset analyzed by \citet{abadie2003semiparametric}. The data contain \(9{,}275\) cross-sectional observations with income and demographic information; the outcome is net financial assets (\texttt{nettfa}), the treatment is participation in a 401(k) plan (\texttt{p401k}), and the IV is eligibility for a 401(k) plan (\texttt{e401k}). As argued by \citet{abadie2003semiparametric}, eligibility is plausibly a valid IV conditional on income, age, and marital status.

For the local CP-plot, we estimate \(e(X)\) by logistic regression and \(\E(Y\mid X,Z=z)\) and \(\E(D\mid X,Z=z)\), \(z\in\{0,1\}\), by separate linear regressions within the two IV groups. All models control for family income, age, and marital status, including quadratic terms in income and age.
Figure~\ref{fig:401k_iv} plots the pairs \((\hat e(X), \hat\tau^{\textup c}(X))\) and adds the three weighted linear fits of Table~\ref{tab:local-cp-plot-lines}. All three fitted
slopes are positive: under each weighting, \(\hat\tau^{\textup c}(X)\) is positively associated with \(\hat e(X)\).

\begin{figure}[!htbp]
\centering
\includegraphics[width=.85\linewidth]{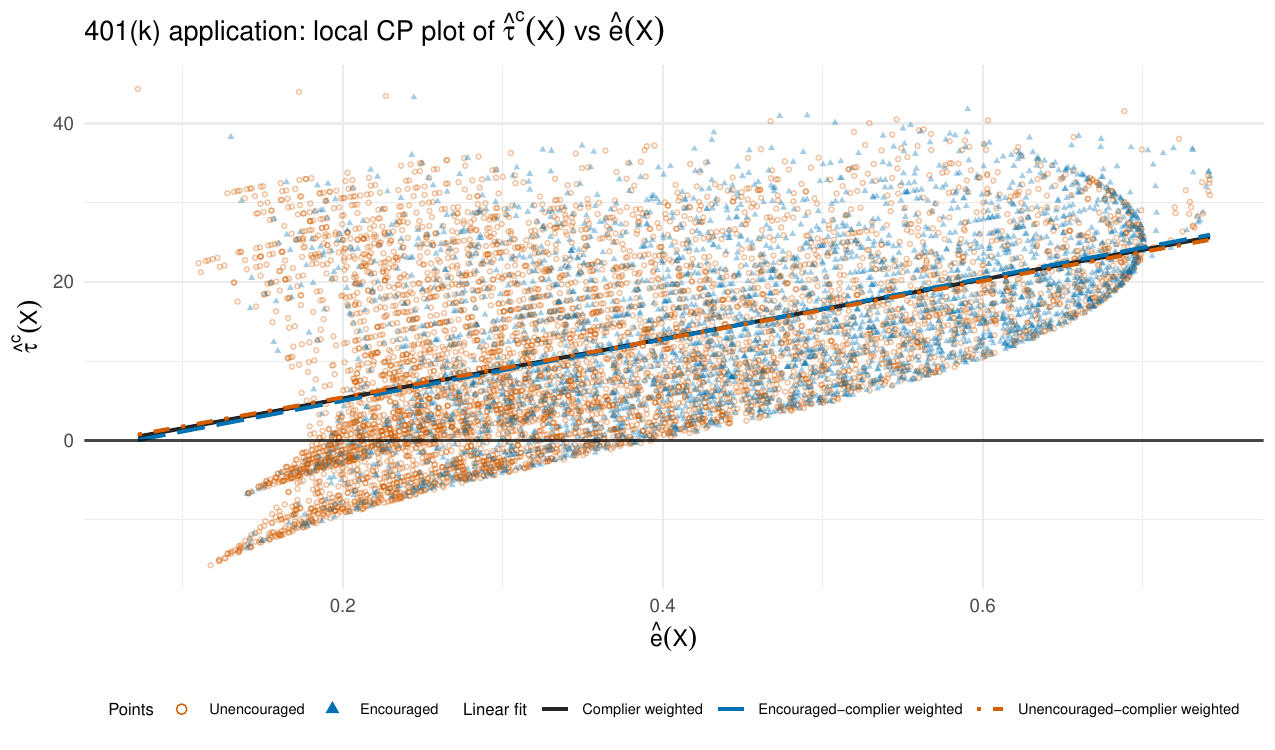}
\caption{Local CP-plot for the 401(k) application: estimated complier CATE \(\hat\tau^{\textup{c}}(X)\) versus estimated IV propensity scores \(\hat e(X)\). 
Points are unit-level plug-in estimates, with point types distinguishing the encouraged group (\(Z=1\))
from the unencouraged group (\(Z=0\)). The three colored lines are the weighted linear fits of \(\hat\tau^\textup c(X)\) on \(\hat e(X)\). }
\label{fig:401k_iv}
\end{figure}

We next estimate the local weighted estimands defined in
Section~\ref{sec:iv} using the same fitted conditional mean models used to
construct the local CP-plot. 
We use the plug-in estimator
\[
\hat\tau^{\textup c}_{h}
=
\frac{
\sum_{i=1}^n
\hat h(X_i)\hat\pi^{\textup c}(X_i)
\hat\tau^{\textup c}(X_i)
}{
\sum_{i=1}^n
\hat h(X_i)\hat\pi^{\textup c}(X_i)
}
=
\frac{
\sum_{i=1}^n
\hat h(X_i)\hat\tau_Y(X_i)
}{
\sum_{i=1}^n
\hat h(X_i)\hat\tau_D(X_i)
},
\]
with 
\(
\hat h(X)=\hat e(X),\hat e(X)\{1-\hat e(X)\},1-\hat e(X)
\)
for the local ATT, the overlap-weighted LATE, and the local ATC,
respectively.

The positive slopes of the fits weighted by
\(\hat\pi_i^{\textup c}\hat e_i\) and
\(\hat\pi_i^{\textup c}(1-\hat e_i)\) suggest that the two covariance terms
in Corollary~\ref{coro:exact-bracketing-local} are positive. The local CP-plot
therefore suggests the population ordering
\(
\tau^{\textup c}_{\textup{ATT}}
\ge
\tau^{\textup c}_{\textup{ATO}}
\ge
\tau^{\textup c}_{\textup{ATC}}.
\)

\begin{table}[H]
\centering
\caption{Plug-in estimates of local weighted treatment effects in the 401(k)
application}
\label{tab:401k_iv}
\begin{tabular}{lccc}
\toprule
& $\hat\tau^{\textup{c}}_{\textup{ATC}}$
& $\hat\tau^{\textup{c}}_{\textup{ATO}}$
& $\hat\tau^{\textup{c}}_{\textup{ATT}}$ \\
\midrule
Point estimate
& 11.5185
& 13.5882
& 15.2261 \\
Bootstrap standard error
& 2.1675
& 1.8803
& 2.3782 \\
\bottomrule
\end{tabular}
\end{table}
Table~\ref{tab:401k_iv} reports the corresponding plug-in estimates, which
follow the same ordering:
\(
\hat\tau^{\textup c}_{\textup{ATT}}
\ge
\hat\tau^{\textup c}_{\textup{ATO}}
\ge
\hat\tau^{\textup c}_{\textup{ATC}}.
\)
The estimated gaps are small relative to the bootstrap standard errors.
\section{Extension to beta weights}\label{sec:ext}
This section extends the covariance representations beyond ATE, ATT, ATC, and ATO. We study the beta-weight family, which nests these canonical estimands, and then present a general single-crossing formulation that covers other weighting schemes. For any nonnegative weight function \(w(X)\) with \(0<\E\{w(X)\}<\infty\), write $ \E_w(A)={\E\{w(X)A\}}/{\E\{w(X)\}}$ and $\cov_w(A,B)=\E_w(AB)-\E_w(A)\E_w(B)$ for the expectation and covariance under the distribution reweighted by \(w(X)\). When the weight function is notationally cumbersome, we write \(\cov(A,B;w)\) for \(\cov_w(A,B)\).
\subsection{Beta-weight estimands}\label{sec:beta}
 For real numbers $u,v\ge 1$, define the beta weights \citep{matsouaka2024causal}  \begin{align}\label{eq:betaweight}
\beta_{u,v}(X)=e(X)^{u-1}\{1-e(X)\}^{v-1}.
\end{align} 
The corresponding weighted average treatment effect is
\[
\tau_{u,v}=\frac{\E [\beta_{u,v}(X)\{Y(1)-Y(0)\}]}{\E\{\beta_{u,v}(X)\}}
=\frac{\E \{\beta_{u,v}(X)\tau(X)\}}{\E\{\beta_{u,v}(X)\}}.
\]
This family includes the four canonical estimands as special cases:
\[
\tau_{2,2}=\tauow,\qquad
\tau_{2,1}=\tauatt,\qquad
\tau_{1,2}=\tauatc,\qquad
\tau_{1,1}=\tauate.
\]

Each beta-weighted estimand averages \(\tau(X)\) under the target distribution induced by its normalized beta weight. Thus, comparing two beta-weighted estimands amounts to examining how their normalized weights reallocate mass across the propensity-score distribution. We focus on the comparison between \((u',v)\) and \((u,v)\); the comparison obtained by increasing \(v\) follows in parallel. Define the difference between the two
normalized beta weights as
\[
\Delta_{u,u';v}(X)
=
\frac{\beta_{u',v}(X)}
{\E\{\beta_{u',v}(X)\}}
-
\frac{\beta_{u,v}(X)}
{\E\{\beta_{u,v}(X)\}}.
\]
Define
\[
r_{u,u';v}
=
\left[
\frac{\E\{\beta_{u',v}(X)\}}
{\E\{\beta_{u,v}(X)\}}
\right]^{1/(u'-u)},
\qquad
r_{u;v,v'}
=
\left[
\frac{\E\{\beta_{u,v'}(X)\}}
{\E\{\beta_{u,v}(X)\}}
\right]^{1/(v'-v)}.
\]
We have
\(
\mathbf{1}\{\Delta_{u,u';v}(X)>0\}
=
\mathbf{1}\{e(X)>r_{u,u';v}\}.
\)
Thus, relative to the target population indexed by \((u,v)\), increasing
\(u\) reallocates normalized weight from units below the crossing point
\(r_{u,u';v}\) to units above it.
The crossing point determines the direction of the weight reallocation. To
also characterize its magnitude, define the positive weight
\(w_{u,u';v}(X)\) in the following Theorem \ref{thm:beta-one-coordinate-diff} so that
\[
\Delta_{u,u';v}(X)
=
\frac{w_{u,u';v}(X)}
{\E\{\beta_{u',v}(X)\}}
\{e(X)-r_{u,u';v}\}.
\]
In this decomposition, \(e(X)-r_{u,u';v}\) records the direction of the
reallocation: it is negative below the crossing point and positive above it.
The positive factor \(w_{u,u';v}(X)\) measures the local magnitude of the
reallocation. 
Because both beta weights are normalized,
\(\E\{\Delta_{u,u';v}(X)\}=0\). It implies
\(
r_{u,u';v}
=
\E_{w_{u,u';v}}\{e(X)\}.
\)
Hence, the crossing point is also the mean propensity score under the positive
weight that measures the intensity of the reallocation. It follows that
\[
\begin{aligned}
\tau_{u',v}-\tau_{u,v}
=
\E\{\Delta_{u,u';v}(X)\tau(X)\}=
\frac{\E\{w_{u,u';v}(X)\}}
{\E\{\beta_{u',v}(X)\}}
\cov\{\tau(X),e(X);w_{u,u';v}(X)\}.
\end{aligned}
\]
Thus, the covariance representation arises because the difference between
the two normalized beta weights is a zero-net-mass reallocation centered at
\(r_{u,u';v}\): the crossing point determines the direction of the
reallocation, and \(w_{u,u';v}(X)\) measures its local magnitude.

The argument for increasing \(v\) is analogous, with \(1-e(X)\) replacing
\(e(X)\). In that case, the crossing point on the propensity-score scale is
\(1-r_{u;v,v'}\), and normalized weight is reallocated from larger to smaller
propensity scores.

\begin{theorem}\label{thm:beta-one-coordinate-diff}
Assume Assumption~\ref{assume:ignor}.
\begin{enumerate}[label=(\roman*),leftmargin=*, align=left] 
\item Let \(u'>u\ge 1\) and \(v\ge 1\). Define
\[
w_{u,u';v}(X)
=
\begin{cases}
\displaystyle
\beta_{u,v}(X)
\frac{
e(X)^{u'-u}
-
r_{u,u';v}^{\,u'-u}
}{
e(X)-r_{u,u';v}
},
&
\displaystyle
e(X)\ne r_{u,u';v},
\\[1.4em]
\displaystyle
\beta_{u,v}(X)
(u'-u)r_{u,u';v}^{\,u'-u-1},
&
\displaystyle
e(X)=r_{u,u';v}.
\end{cases}
\]

Then \(w_{u,u';v}(X)>0\), and 
\[
\tau_{u',v}-\tau_{u,v}
=
\frac{
\E\{w_{u,u';v}(X)\}
}{
\E\{\beta_{u',v}(X)\}
}\cdot 
\cov\{\tau(X),e(X);w_{u,u';v}(X)\}.
\]

\item Let \(v'>v\ge 1\) and \(u\ge 1\). Define
\[
w_{u;v,v'}(X)
=
\begin{cases}
\displaystyle
\beta_{u,v}(X)
\frac{
r_{u;v,v'}^{\,v'-v}
-
\{1-e(X)\}^{v'-v}
}{
r_{u;v,v'}-\{1-e(X)\}
},
&
\displaystyle
e(X)\ne 1-r_{u;v,v'},
\\[1.4em]
\displaystyle
\beta_{u,v}(X)
(v'-v)r_{u;v,v'}^{\,v'-v-1},
&
\displaystyle
e(X)=1-r_{u;v,v'}.
\end{cases}
\]

Then \(w_{u;v,v'}(X)>0\), and
\[
\tau_{u,v}-\tau_{u,v'}
=
\frac{
\E\{w_{u;v,v'}(X)\}
}{
\E\{\beta_{u,v'}(X)\}
}\cdot \cov\{\tau(X),e(X);w_{u;v,v'}(X)\}.
\]
\end{enumerate} 
\end{theorem}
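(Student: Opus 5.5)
Write $k=u'-u>0$, $r=r_{u,u';v}$, $w=w_{u,u';v}$ and $e=e(X)$. We treat part (i) in detail; part (ii) follows by the substitution $e\mapsto 1-e$, $u\leftrightarrow v$.

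The plan is to factor the difference of normalized weights as (positive weight) $\times$ ($e-r$), and then use the zero-mean property to turn a weighted mean into a weighted covariance.

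\textbf{Step 1: factor the weight difference.} Since $\beta_{u',v}=\beta_{u,v}e^{k}$ and $r^{k}=\E\{\beta_{u',v}\}/\E\{\beta_{u,v}\}$, we have
\[
\Delta_{u,u';v}(X)=\frac{\beta_{u,v}(X)\{e^{k}-r^{k}\}}{\E\{\beta_{u',v}(X)\}}=\frac{w(X)\{e-r\}}{\E\{\beta_{u',v}(X)\}}.
\]
The second equality is exact when $e\ne r$. When $e=r$, both sides vanish, and the value $\beta_{u,v}kr^{k-1}$ is the continuous extension of $w$.

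\textbf{Step 2: positivity of $w$.} First, $r>0$. Indeed, $r^{k}$ is a ratio of expectations of strictly positive quantities, since $e\in(0,1)$ by Assumption~\ref{assume:ignor}. Moreover $r^{k}=\E_{\beta_{u,v}}(e^{k})\in(0,1)$, so $r\in(0,1)$. The map $t\mapsto t^{k}$ is strictly increasing on $(0,\infty)$. Hence the difference quotient $(e^{k}-r^{k})/(e-r)$ is strictly positive for $e\ne r$, and its limiting value $kr^{k-1}$ at $e=r$ is also positive. Combined with $\beta_{u,v}>0$, this gives $w>0$.

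\textbf{Step 3: from mean to covariance.} Both normalized weights integrate to one, so $\E\{\Delta_{u,u';v}(X)\}=0$. By Step 1 this gives $\E\{w(X)(e-r)\}=0$, that is, $r=\E_w(e)$. Therefore
\[
\tau_{u',v}-\tau_{u,v}=\E\{\Delta_{u,u';v}(X)\tau(X)\}=\frac{\E\{w(X)(e-r)\tau(X)\}}{\E\{\beta_{u',v}(X)\}}.
\]
The numerator equals $\E\{w\}\,\E_w\{(e-\E_we)\tau\}=\E\{w\}\,\cov_w(\tau,e)$, which is the claimed identity.

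\textbf{Part (ii).} Set $k=v'-v$ and $r=r_{u;v,v'}$. Now $\beta_{u,v'}=\beta_{u,v}(1-e)^{k}$, so by Step 1 applied to $1-e$,
\[
\frac{\beta_{u,v'}}{\E\{\beta_{u,v'}\}}-\frac{\beta_{u,v}}{\E\{\beta_{u,v}\}}=\frac{\beta_{u,v}\{(1-e)^{k}-r^{k}\}}{\E\{\beta_{u,v'}\}}=\frac{w(X)\{(1-e)-r\}}{\E\{\beta_{u,v'}\}}.
\]
Here $w=w_{u;v,v'}$ is the positive difference quotient multiplying $\beta_{u,v}$, and its positivity follows exactly as in Step 2. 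The zero-mean argument of Step 3 gives
\[
\tau_{u,v'}-\tau_{u,v}=\frac{\E\{w\}}{\E\{\beta_{u,v'}\}}\cov_w(\tau,1-e)=-\frac{\E\{w\}}{\E\{\beta_{u,v'}\}}\cov_w(\tau,e),
\]
which yields the stated sign.

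\textbf{Integrability.} The only delicate point is that the expectations of $w$, $w\tau$ and $w e\tau$ must be finite. This holds because $w\le \beta_{u,v}\max(k,1)\cdot c$ on $(0,1)$: the difference quotient of $t^{k}$ is bounded on $[0,1]$ when $k\ge1$, and is bounded by $kr^{k-1}$-type constants when $k<1$, since $r>0$ is fixed. We assume $\tau$ is integrable under the relevant weights, as is implicit in the definition of $\tau_{u,v}$.
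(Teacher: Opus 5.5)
Your proposal is correct and follows essentially the same route as the paper's proof. Both factor the normalized-weight difference as $w\,(e-r)/\E\{\beta_{u',v}(X)\}$, get $w>0$ from strict monotonicity of $t\mapsto t^{k}$, use the zero-mean property to identify $r=\E_w(e)$, and rewrite the result as a $w$-weighted covariance. In part~(ii) you go through $\cov_w(\tau,1-e)=-\cov_w(\tau,e)$, whereas the paper factors directly as $w\{e-(1-s)\}$; this is only a sign rearrangement, and your integrability remark is a harmless addition the paper omits.
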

The second line in each definition gives the continuous extension at the
corresponding crossing point. In part~(i), the quotient is the divided
difference of the function \(x\mapsto x^{u'-u}\), evaluated at
\(x=e(X)\) and \(x=r_{u,u';v}\). It therefore converges to
\((u'-u)r_{u,u';v}^{\,u'-u-1}\) as
\(e(X)\to r_{u,u';v}\).
Part~(ii) follows analogously by considering the function
\(x\mapsto x^{v'-v}\), evaluated at
\(x=1-e(X)\) and \(x=r_{u;v,v'}\).
These crossing-point equalities may occur when the propensity score has
discrete support; under a continuous propensity-score distribution, they
typically hold with probability zero.

The comparison weights in Theorem~\ref{thm:beta-one-coordinate-diff} simplify for unit increments: 
\(w_{u,u+1;v}(X)=w_{u;v,v+1}(X)=\beta_{u,v}(X)\). Consequently, the same beta-weighted covariance determines both adjacent comparisons: 
\begin{eqnarray*}
\tau_{u+1,v}-\tau_{u,v}
&=&
\frac{
\E\{\beta_{u,v}(X)\}
}{
\E\{\beta_{u+1,v}(X)\}
}\cov\{\tau(X),e(X);\beta_{u,v}(X)\},
\\
\tau_{u,v}-\tau_{u,v+1}
&=&
\frac{
\E\{\beta_{u,v}(X)\}
}{
\E\{\beta_{u,v+1}(X)\}
}\cov\{\tau(X),e(X);\beta_{u,v}(X)\}.
\end{eqnarray*}

The next corollary extends Corollary~\ref{coro:ow in atc att} to the entire beta-weight family.

\begin{coro}\label{coro:beta weight}
Assume $e(X)\in(0,1)$. For real numbers $u,v\ge 1$, under Assumption~\ref{assump:mono}(i), the quantity $\tau_{u,v}$ is non-decreasing in $u$ and non-increasing in $v$; under Assumption~\ref{assump:mono}(ii), the quantity $\tau_{u,v}$ is non-increasing in $u$ and non-decreasing in $v$.
\end{coro}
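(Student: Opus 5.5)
The plan is to apply Theorem~\ref{thm:beta-one-coordinate-diff} and sign its covariances through the CP function~$g$. Take $u'>u\ge 1$ and $v\ge 1$. Part~(i) of the theorem gives
\[
\tau_{u',v}-\tau_{u,v}=\frac{\E\{w_{u,u';v}(X)\}}{\E\{\beta_{u',v}(X)\}}\cov\{\tau(X),e(X);w_{u,u';v}(X)\},
\]
and the prefactor is positive because $w_{u,u';v}>0$ and $\beta_{u',v}>0$ when $e(X)\in(0,1)$. Monotonicity in $u$ therefore reduces to signing the reweighted covariance. Likewise, part~(ii) expresses $\tau_{u,v}-\tau_{u,v'}$ for $v'>v$ as a positive multiple of $\cov\{\tau(X),e(X);w_{u;v,v'}(X)\}$. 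A nonnegative covariance in part~(ii) means $\tau_{u,v}$ is non-increasing in $v$, which matches the claimed direction.

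Ignorability is not assumed in this corollary. That is harmless: the algebra in Theorem~\ref{thm:beta-one-coordinate-diff} only uses $\tau_{u,v}=\E\{\beta_{u,v}\tau\}/\E\{\beta_{u,v}\}$, which we take as the definition in terms of $\tau(X)$, as in the remark on propensity-score identities. So the theorem's identities hold under overlap alone.

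The key step is to reduce the reweighted covariance to one involving $g$. Both weights $w_{u,u';v}(X)$ and $w_{u;v,v'}(X)$ are functions of $e(X)$ only; call them $w(e(X))$. Conditioning on $e(X)$ and using the tower property,
\[
\E\{w(e(X))\tau(X)\}=\E\{w(e(X))g(e(X))\},\qquad \E\{w(e(X))\tau(X)e(X)\}=\E\{w(e(X))g(e(X))e(X)\}.
\]
Hence $\cov_w\{\tau(X),e(X)\}=\cov_w\{g(e(X)),e(X)\}$. This is the analogue of \eqref{eq:balance_first}, with a weight depending on $e$ in place of conditioning on $Z$.

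Next comes a Chebyshev-type step. Under the probability measure reweighted by $w$, let $E$ and $E'$ be independent copies of $e(X)$. Then
\[
\cov_w\{g(E),E\}=\tfrac12\E_w\big[\{g(E)-g(E')\}(E-E')\big].
\]
If $g$ is non-decreasing, the integrand is nonnegative, so the covariance is $\ge 0$; if $g$ is non-increasing, the covariance is $\le 0$. Under Assumption~\ref{assump:mono}(i) this gives $\tau_{u',v}\ge\tau_{u,v}$ and $\tau_{u,v}\ge\tau_{u,v'}$. Under Assumption~\ref{assump:mono}(ii) both inequalities reverse. Since $u<u'$ and $v<v'$ were arbitrary real numbers, this proves the monotonicity claims over the whole range $u,v\ge 1$.

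The main obstacle is technical: integrability and well-definedness of the weighted moments, and the crossing point $r$. Because $e(X)\in(0,1)$ and $u,v\ge1$, the beta weights are bounded by $1$. The weights $w_{u,u';v}$ are also bounded, since the divided difference of $x^{u'-u}$ on $[0,1]$ is bounded when $u'-u\ge1$. When $u'-u<1$ the divided difference can be large near $0$, but it is still at most of order $r^{u'-u-1}$, which is finite because $r>0$. So all expectations are finite, provided $\tau(X)$ is integrable under these weights, which we assume. The crossing-point case is already handled by the continuous extension in the theorem, so no further case split is needed.
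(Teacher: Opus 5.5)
Your proposal is correct and follows essentially the same route as the paper: you invoke Theorem~\ref{thm:beta-one-coordinate-diff}, note that the comparison weights are positive functions of $e(X)$, so the weighted covariance of $\tau(X)$ and $e(X)$ equals that of $g(e(X))$ and $e(X)$, and then sign it by the monotonicity of $g$. The remaining pieces you supply are correct details that the paper's short proof leaves implicit: the independent-copies (Chebyshev) argument for the sign, the positivity of the prefactors, the integrability remarks, and the observation that ignorability is not needed (the corollary assumes only overlap, whereas the theorem is stated under Assumption~\ref{assume:ignor}).
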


Figure~\ref{fig:1} visualizes the orderings on the integer grid. 

\begin{figure}[H]
    \centering
    \begin{tikzcd}
\vdots\arrow[d]&\vdots\arrow[d]&\vdots\arrow[d]&\\
\tau_{1,3}\arrow[d]\arrow[r]&\tau_{2,3}\arrow[d]\arrow[r]&\tau_{3,3}\arrow[d]\arrow[r]&\cdots\\
\tau_{1,2}=\tauatc\arrow[d]\arrow[r]&\tau_{2,2}=\tauow\arrow[d]\arrow[r]&\tau_{3,2}\arrow[d]\arrow[r]&\cdots\\
\tau_{1,1}=\tauate\arrow[r]&\tau_{2,1}=\tauatt\arrow[r]&\tau_{3,1}\arrow[r]&\cdots
    \end{tikzcd}
    \caption{Visualization of Corollary~\ref{coro:beta weight}. The notation ``$\rightarrow$'' stands for ``$\le$'' under Assumption~\ref{assump:mono}(i) and for ``$\ge$'' under Assumption~\ref{assump:mono}(ii).}
    \label{fig:1}
\end{figure}

\subsection{A generalized CP-plot for beta weights}
\label{subsec:general-cp-beta}
We extend the CP-plot to the beta-weight family: we
keep the scatter plot of \(\hat\tau_i\) against \(\hat e_i\) and add
weighted linear fits implementing the pairwise comparisons in Theorem~\ref{thm:beta-one-coordinate-diff}. 

For each unit $i$, define
$
\hat\beta_{u,v}^{(i)}
=
\hat e_i^{\,u-1}(1-\hat e_i)^{v-1}.$
For a comparison that increases the first beta parameter from \(u\) to
\(u'>u\), define
\[
\hat r_{u,u';v}
=
\left\{
\frac{\sum_{i=1}^n \hat\beta_{u',v}^{(i)}}
{\sum_{i=1}^n \hat\beta_{u,v}^{(i)}}
\right\}^{1/(u'-u)},
\qquad
\hat w_{u,u';v}^{(i)}
=
\hat\beta_{u,v}^{(i)}
\frac{
\hat e_i^{\,u'-u}
-
\hat r_{u,u';v}^{\,u'-u}
}{
\hat e_i-\hat r_{u,u';v}
},
\]
with $\hat w_{u,u';v}^{(i)}=\hat \beta_{u,v}^{(i)}(u'-u)\hat r_{u,u';v}^{u'-u-1}$ when
\(\hat e_i=\hat r_{u,u';v}\). For a comparison that increases the second beta
parameter from \(v\) to \(v'>v\), define
\[
\hat r_{u;v,v'}
=
\left\{
\frac{\sum_{i=1}^n \hat\beta_{u,v'}^{(i)}}
{\sum_{i=1}^n \hat\beta_{u,v}^{(i)}}
\right\}^{1/(v'-v)}
,\qquad
\hat w_{u;v,v'}^{(i)}
=
\hat\beta_{u,v}^{(i)}
\frac{
\hat r_{u;v,v'}^{\,v'-v}
-
(1-\hat e_i)^{v'-v}
}{
\hat r_{u;v,v'}-(1-\hat e_i)
},
\]
again with $\hat w_{u;v,v'}^{(i)}=\hat \beta_{u,v}^{(i)}(v'-v)\hat r_{u;v,v'}^{v'-v-1}$ when
\(1-\hat e_i=\hat r_{u;v,v'}\).

The generalized CP-plot for beta weights modifies Step (3) as follows:

\begin{enumerate}[label=(3),leftmargin=*]
\item For each comparison of interest, add a weighted linear fit of
\(\hat\tau_i\) on \(\hat e_i\):
\(
\texttt{lm}(\hat\tau_i\sim 1+\hat e_i,\ 
\texttt{weights}=w_i)
\),
using \(w_i=\hat w_{u,u';v}^{(i)}\) for comparing
\(\tau_{u',v}\) with \(\tau_{u,v}\), and
\(w_i=\hat w_{u;v,v'}^{(i)}\) for comparing
\(\tau_{u,v}\) with \(\tau_{u,v'}\).
\end{enumerate} 

The population slope of each weighted linear fit has the same sign as the
corresponding weighted covariance between \(\tau(X)\) and \(e(X)\). For
adjacent integer comparisons, the generalized weights simplify considerably:
both comparisons, \(\tau_{u+1,v}\) versus \(\tau_{u,v}\) and
\(\tau_{u,v}\) versus \(\tau_{u,v+1}\), are governed by
\(
\cov\{\tau(X),e(X);\beta_{u,v}(X)\}.
\)
Therefore, a single linear fit weighted by
\(\hat\beta_{u,v}^{(i)}\) diagnoses both comparisons. A positive fitted slope
suggests
\(
\tau_{u+1,v}\ge \tau_{u,v}\ge \tau_{u,v+1},
\)
whereas a negative fitted slope suggests the reverse ordering.

For the canonical comparison among ATT, ATO, and ATC, it suffices to consider the grid points \((u,v)=(2,1)\) and \((1,2)\), with beta weights \(\beta_{2,1}(X)=e(X)\) and \(\beta_{1,2}(X)=1-e(X)\): the corresponding fits are precisely the \(\hat e_i\)- and \((1-\hat e_i)\)-weighted fits of the CP-plot in Section~\ref{subsec:cp_plot}. A positive covariance under \(\beta_{2,1}\) implies \(\tau_{3,1}\ge\tauatt\ge\tauow\), and a positive covariance under \(\beta_{1,2}\) implies \(\tauow\ge\tauatc\ge\tau_{1,3}\). If both covariances are positive, we recover the bracketing relation \(\tauatc\le\tauow\le\tauatt\) of Corollary~\ref{coro:ow in atc att}, embedded in the longer chain
\[ \tau_{1,3}\le \tauatc\le \tauow\le \tauatt\le \tau_{3,1}. \]
Negative covariances reverse the corresponding inequalities.

The beta-weight comparisons and their generalized CP-plot interpretation
extend naturally to weighted local average treatment effects in the IV
framework. We omit the details here and provide
the corresponding results in Section~\ref{subsec:local_beta} in the
Supplementary Material.

\section{Summary}\label{sec:disc}
This paper studies how the relationship between treatment effect heterogeneity
and the propensity score shapes the ordering of weighted causal estimands. In observational studies, we build on known covariance representations for
ATT, ATE, and ATC and derive new analogous representations for the differences
between ATO and ATT and between ATO and ATC. These representations show that the relative positions of
the estimands are governed by covariances between the CATE and the propensity
score under appropriate target distributions. Monotonicity of the CP function
provides a transparent sufficient condition for signing these covariances and
hence for the bracketing relations. We establish analogous difference
representations and bracketing results for weighted local average treatment
effects in the IV setting, including the overlap-weighted LATE. We further extend the results to the beta-weight family.

The results suggest a practical diagnostic and reporting strategy. In
observational studies, analysts can use the CP-plot, which plots the estimated
CATE \(\hat\tau(X)\) against the estimated propensity score \(\hat e(X)\),
together with the unweighted, treated-tilted, and control-tilted linear fits
described in Section~\ref{subsec:cp_plot}. In IV studies, the local CP-plot plots plug-in estimates
\(\hat\tau^{\textup c}(X)\) against the estimated IV propensity score
\(\hat e(X)\), together with linear fits using weights
\(\hat\pi^{\textup c}(X)\),
\(\hat\pi^{\textup c}(X)\hat e(X)\), and
\(\hat\pi^{\textup c}(X)\{1-\hat e(X)\}\).
The fitted slopes summarize the empirical covariance signs in the
difference representations and therefore indicate the pairwise orderings of the weighted estimands. We therefore recommend reporting the CP-plot together with estimates of ATT,
ATO, and ATC in observational studies, and reporting the local CP-plot together
with estimates of
\(\tau^{\textup c}_{\textup{ATT}}\),
\(\tau^{\textup c}_{\textup{ATO}}\), and
\(\tau^{\textup c}_{\textup{ATC}}\)
in IV studies. Doing so makes explicit how causal conclusions vary with the
choice of target population.

\section*{Acknowledgements}
The authors thank Professors Macartan Humphreys and Hanzhong Liu for helpful discussions.
Fan Yang is also affiliated with the Yanqi Lake Beijing Institute of Mathematical Sciences and Applications.
Peng Ding acknowledges support from the U.S. National Science Foundation under Grant No.~2514234.

\bibliographystyle{main}
\bibliography{main}

@book{ding2024first,
  title={A first course in causal inference},
  author={Ding, Peng},
  year={2024},
  publisher={Chapman and Hall/CRC}
}

@article{zhao2026propensity,
  title={Propensity score weighted Cox regression for survival outcomes in observational studies with multiple or factorial treatments},
  author={Zhao, Zixian and Yang, Chengxin and Li, Fan},
  journal={arXiv preprint arXiv:2601.22572},
  year={2026}
}

@article{jiang2026improving,
    author = {Jiang, Liang and Linton, Oliver B. and Tang, Haihan and Zhang, Yichong},
    title = {Improving Estimation Efficiency via Regression-Adjustment in Covariate-Adaptive Randomizations with Imperfect Compliance},
    journal = {The Review of Economics and Statistics},
    volume = {108},
    number = {3},
    pages = {774-791},
    year = {2026},
    month = {05},
    issn = {0034-6535}
}

@article{ding2017instrumental,
  title={Instrumental variables as bias amplifiers with general outcome and confounding},
  author={Ding, Peng and VanderWeele, TJ and Robins, James M},
  journal={Biometrika},
  volume={104},
  number={2},
  pages={291--302},
  year={2017},
  publisher={Oxford University Press}
}

@article{conners1996effectiveness,
  title={The effectiveness of right heart catheterization in the initial care of critically ill patients},
  author={Conners Jr, AF and Speroff, T and Dawson, NV and Thomas, C and Harrell Jr, FE and Wagner, D and others},
  journal={Journal of the American Medical Association},
  volume={276},
  number={11},
  pages={889--897},
  year={1996}
}

@article{wallace2015doubly,
  title={Doubly-robust dynamic treatment regimen estimation via weighted least squares},
  author={Wallace, Michael P and Moodie, Erica EM},
  journal={Biometrics},
  volume={71},
  number={3},
  pages={636--644},
  year={2015},
  publisher={Oxford University Press}
}

@Manual{harrell2019package,
  title  = {{Hmisc}: Harrell Miscellaneous},
  author = {Harrell, Frank E., Jr.},
  year   = {2026},
  note   = {R package version 5.2-5},
  url    = {https://CRAN.R-project.org/package=Hmisc},
  doi    = {10.32614/CRAN.package.Hmisc}
}

@article{vanderweele2010signed,
  title={Signed directed acyclic graphs for causal inference},
  author={VanderWeele, Tyler J and Robins, James M},
  journal={Journal of the Royal Statistical Society Series B: Statistical Methodology},
  volume={72},
  number={1},
  pages={111--127},
  year={2010},
  publisher={Oxford University Press}
}

@article{vanderweele2008sign,
  title={The sign of the bias of unmeasured confounding},
  author={VanderWeele, Tyler J},
  journal={Biometrics},
  volume={64},
  number={3},
  pages={702--706},
  year={2008},
  publisher={Oxford University Press}
}

@article{chattopadhyay2023implied,
  title={On the implied weights of linear regression for causal inference},
  author={Chattopadhyay, Ambarish and Zubizarreta, Jos{\'e} R},
  journal={Biometrika},
  volume={110},
  number={3},
  pages={615--629},
  year={2023},
  publisher={Oxford University Press}
}

@article{sloczynski2022interpreting,
  title={Interpreting {OLS} estimands when treatment effects are heterogeneous: Smaller groups get larger weights},
  author={S{\l}oczy{\'n}ski, Tymon},
  journal={Review of Economics and Statistics},
  volume={104},
  number={3},
  pages={501--509},
  year={2022},
  publisher={MIT Press One Rogers Street, Cambridge, MA 02142-1209, USA journals-info~…}
}

@article{matsouaka2024causal,
  title={Causal inference in the absence of positivity: The role of overlap weights},
  author={Matsouaka, Roland A and Zhou, Yunji},
  journal={Biometrical Journal},
  volume={66},
  number={4},
  pages={2300156},
  year={2024},
  publisher={Wiley Online Library}
}

@article{humphreys2025bounds,
  title={Bounds on the fixed effects estimand in the presence of heterogeneous assignment propensities},
  author={Humphreys, Macartan},
  journal={Journal of Causal Inference},
  volume={13},
  number={1},
  pages={20240040},
  year={2025},
  publisher={De Gruyter}
}

@article{chiba2009sign,
  title={The sign of the unmeasured confounding bias under various standard populations},
  author={Chiba, Yasutaka},
  journal={Biometrical Journal},
  volume={51},
  number={4},
  pages={670--676},
  year={2009},
  publisher={Wiley Online Library}
}

@article{vansteelandt2022assumption,
  title={Assumption-lean inference for generalised linear model parameters},
  author={Vansteelandt, Stijn and Dukes, Oliver},
  journal={Journal of the Royal Statistical Society Series B: Statistical Methodology},
  volume={84},
  number={3},
  pages={657--685},
  year={2022},
  publisher={Oxford University Press}
}

@article{li2018balancing,
  title={Balancing covariates via propensity score weighting},
  author={Li, Fan and Morgan, Kari Lock and Zaslavsky, Alan M},
  journal={Journal of the American Statistical Association},
  volume={113},
  number={521},
  pages={390--400},
  year={2018},
  publisher={Taylor \& Francis}
}

@article{li2019addressing,
  title={Addressing extreme propensity scores via the overlap weights},
  author={Li, Fan and Thomas, Laine E and Li, Fan},
  journal={American Journal of Epidemiology},
  volume={188},
  number={1},
  pages={250--257},
  year={2019},
  publisher={Oxford University Press}
}

@Manual{huntingtonklein2024causaldata,
  title = {causaldata: Example Data Sets for Causal Inference Textbooks},
  author = {Nick Huntington-Klein and Malcolm Barrett},
  year = {2024},
  note = {R package version 0.1.4},
  url = {https://github.com/nickch-k/causaldata},
}

@article{ding2017principal,
  title={Principal stratification analysis using principal scores},
  author={Ding, Peng and Lu, Jiannan},
  journal={Journal of the Royal Statistical Society Series B: Statistical Methodology},
  volume={79},
  number={3},
  pages={757--777},
  year={2017},
  publisher={Oxford University Press}
}

@article{imbens1994identification,
	Author = {Imbens, Guido W and Angrist, Joshua D},
	Journal = {Econometrica},
	Number = {2},
	Pages = {467--475},
	Publisher = {JSTOR},
	Title = {Identification and estimation of local average treatment effects},
	Volume = {62},
	Year = {1994}}

@article{angrist1996identification,
	Author = {Angrist, Joshua D and Imbens, Guido W and Rubin, Donald B},
	Journal = {Journal of the American Statistical Association},
	Number = {434},
	Pages = {444--455},
	Publisher = {Taylor \&amp; Francis},
	Title = {Identification of causal effects using instrumental variables},
	Volume = {91},
	Year = {1996}}

@article{angrist1995two,
	Author = {Angrist, Joshua D and Imbens, Guido W},
	Journal = {Journal of the American Statistical Association},
	Number = {430},
	Pages = {431--442},
	Publisher = {Taylor \&amp; Francis},
	Title = {Two-stage least squares estimation of average causal effects in models with variable treatment intensity},
	Volume = {90},
	Year = {1995}}

@article{choi2024instrumental,
  title={Instrumental variable estimation of weighted local average treatment effects},
  author={Choi, Byeong Yeob},
  journal={Statistical Papers},
  volume={65},
  number={2},
  pages={737--770},
  year={2024},
  publisher={Springer}
}

@article{heckman1997making,
	Author = {Heckman, James J and Smith, Jeffrey and Clements, Nancy},
	Journal = {The Review of Economic Studies},
	Number = {4},
	Pages = {487--535},
	Publisher = {Oxford University Press},
	Title = {Making the most out of programme evaluations and social experiments: Accounting for heterogeneity in programme impacts},
	Volume = {64},
	Year = {1997}}

@article{bugni2023inference,
  title={Inference under covariate-adaptive randomization with imperfect compliance},
  author={Bugni, Federico A and Gao, Mengsi},
  journal={Journal of Econometrics},
  volume={237},
  number={1},
  pages={105497},
  year={2023},
  publisher={Elsevier}
}

@article{ding2021frisch,
  title={The Frisch--Waugh--Lovell theorem for standard errors},
  author={Ding, Peng},
  journal={Statistics \& Probability Letters},
  volume={168},
  pages={108945},
  year={2021},
  publisher={Elsevier}
}

@article{angrist1998estimating,
  title={Estimating the Labor Market Impact of Voluntary Military Service Using Social Security Data on Military Applicants},
  author={Angrist, Joshua D},
  journal={Econometrica},
  volume={66},
  number={2},
  pages={249--288},
  year={1998}
}

@article{rosenbaum1983central,
	Author = {Rosenbaum, Paul R and Rubin, Donald B},
	Journal = {Biometrika},
	Pages = {41--55},
	Publisher = {Biometrika Trust},
	Title = {The central role of the propensity score in observational studies for causal effects},
	Volume = {70},
	Year = {1983}}

@article{zhao2026two,
  title={Two-stage least squares with clustered data},
  author={Zhao, Anqi and Ding, Peng and Li, Fan},
  journal={arXiv preprint arXiv:2601.13507},
  year={2026}
}

@inproceedings{borusyak2024negative,
  title={Negative weights are no concern in design-based specifications},
  author={Borusyak, Kirill and Hull, Peter},
  booktitle={AEA Papers and Proceedings},
  volume={114},
  pages={597--600},
  year={2024},
  organization={American Economic Association 2014 Broadway, Suite 305, Nashville, TN 37203}
}

@article{abadie2003semiparametric,
  title={Semiparametric instrumental variable estimation of treatment response models},
  author={Abadie, Alberto},
  journal={Journal of Econometrics},
  volume={113},
  number={2},
  pages={231--263},
  year={2003},
  publisher={Elsevier}
}

@article{blandhol2022tsls,
    author = {Blandhol, Christine and Bonney, John and Mogstad, Magne and Torgovitsky, Alexander},
    title = {When is {TSLS} Actually {LATE}?},
    journal = {The Review of Economic Studies},
    pages = {rdag029},
    year = {2026},
    month = {05}
}

@misc{cdc2010nhanes,
  author       = {{Centers for Disease Control and Prevention}},
  title        = {{National Health and Nutrition Examination Survey Data}},
  howpublished = {Hyattsville, MD: U.S. Department of Health and Human Services, Centers for Disease Control and Prevention},
  year         = {2010},
  url          = {https://wwwn.cdc.gov/nchs/nhanes/},
}

@article{lee2018simple,
  title={Simple least squares estimator for treatment effects using propensity score residuals},
  author={Lee, Myoung-Jae},
  journal={Biometrika},
  volume={105},
  number={1},
  pages={149--164},
  year={2018},
  publisher={Oxford University Press}
}

\appendix
\phantomsection 
\setcounter{equation}{0}
\renewcommand{\theequation}{S\arabic{equation}}
\setcounter{table}{0}
\renewcommand{\thetable}{S\arabic{table}}
\setcounter{figure}{0}
\renewcommand{\thefigure}{S\arabic{figure}}
\setcounter{theorem}{0}
\renewcommand{\thetheorem}{S\arabic{theorem}}
\setcounter{lemma}{0}
\renewcommand{\thelemma}{S\arabic{lemma}}
\setcounter{condition}{0}
\renewcommand{\thecondition}{S\arabic{condition}}
\setcounter{assumption}{0}
\renewcommand{\theassumption}{S\arabic{assumption}}
\setcounter{remark}{0}
\renewcommand{\theremark}{S\arabic{remark}}
\setcounter{prop}{0}
\makeatletter
\renewcommand{\theprop}{S\arabic{prop}}
\makeatother
\setcounter{coro}{0}
\makeatletter
\renewcommand{\thecoro}{S\arabic{coro}}
\makeatother

\makeatletter
\providecommand{\theHtheorem}{S\arabic{theorem}}
\providecommand{\theHprop}{S\arabic{prop}}
\providecommand{\theHlemma}{S\arabic{lemma}}
\providecommand{\theHcoro}{S\arabic{coro}}
\providecommand{\theHassumption}{S\arabic{assumption}}
\makeatother

\noindent
\newpage

\renewcommand{\thepage}{S\arabic{page}} 
\setcounter{page}{1}

% Supplement numbering
\setcounter{theorem}{0}
\setcounter{prop}{0}
\setcounter{lemma}{0}
\setcounter{coro}{0}
\setcounter{assumption}{0}
\setcounter{figure}{0}
\setcounter{table}{0}

\renewcommand{\thetheorem}{S\arabic{theorem}}
\renewcommand{\theprop}{S\arabic{prop}}
\renewcommand{\thelemma}{S\arabic{lemma}}
\renewcommand{\thecoro}{S\arabic{coro}}
\renewcommand{\theassumption}{S\arabic{assumption}}
\renewcommand{\thefigure}{S\arabic{figure}}
\renewcommand{\thetable}{S\arabic{table}}

% Unique hyperref anchors
\makeatletter
\@ifundefined{theHtheorem}
  {\newcommand{\theHtheorem}{supp.thm.\arabic{theorem}}}
  {\renewcommand{\theHtheorem}{supp.thm.\arabic{theorem}}}

\@ifundefined{theHprop}
  {\newcommand{\theHprop}{supp.prop.\arabic{prop}}}
  {\renewcommand{\theHprop}{supp.prop.\arabic{prop}}}

\@ifundefined{theHlemma}
  {\newcommand{\theHlemma}{supp.lem.\arabic{lemma}}}
  {\renewcommand{\theHlemma}{supp.lem.\arabic{lemma}}}

\@ifundefined{theHcoro}
  {\newcommand{\theHcoro}{supp.coro.\arabic{coro}}}
  {\renewcommand{\theHcoro}{supp.coro.\arabic{coro}}}

\@ifundefined{theHassumption}
  {\newcommand{\theHassumption}{supp.assump.\arabic{assumption}}}
  {\renewcommand{\theHassumption}{supp.assump.\arabic{assumption}}}

\@ifundefined{theHfigure}
  {\newcommand{\theHfigure}{supp.fig.\arabic{figure}}}
  {\renewcommand{\theHfigure}{supp.fig.\arabic{figure}}}

\@ifundefined{theHtable}
  {\newcommand{\theHtable}{supp.tab.\arabic{table}}}
  {\renewcommand{\theHtable}{supp.tab.\arabic{table}}}
\makeatother

\begin{center}
\Huge
Supplementary Material
\end{center}

Section~\ref{sec:additional} presents the LATE counterpart of
Proposition~\ref{prop:covariance-interpretation}, the OLS and TSLS
interpretations of overlap-weighted estimands, the comparison with
\citet{humphreys2025bounds}, the sample-level geometry of the CP-plot, the
population and sample geometry of the local CP-plot, closed-form results under
a linear CP function, and local beta-weighted estimands in the IV setting.

Section~\ref{sec:proofs} provides proofs of the results in the main
paper and in the supplementary material.

\section{Additional results}\label{sec:additional}
\subsection{Interpretations of the local CATE--IV-propensity covariance}
\label{subsec:local-covariance-interpretation}

For \(d=0,1\), define the complier potential-outcome selection bias with
respect to IV status by
\begin{align}\label{eq:def-local-selection-bias}
B^{\textup c}(d)
=
\E\{Y(d)\mid Z=1,U=\textup c\}
-
\E\{Y(d)\mid Z=0,U=\textup c\}.
\end{align}
Because \(D=Z\) within the complier stratum, the two conditional means in
\eqref{eq:def-local-selection-bias} can equivalently be defined by conditioning
on \(D=1\) and \(D=0\), respectively. Let
\[
p_{\textup c}
=
\p(Z=1\mid U=\textup c)
=
\E\{e(X)\mid U=\textup c\}.
\]

\begin{prop}\label{prop:covariance-interpretation-local}
Assume Assumption~\ref{assump:iv}.
\begin{enumerate}[label=(\roman*),leftmargin=*,align=left]

\item
The marginal complier CATE--IV-propensity covariance admits the equivalent
representations
\begin{align}\label{eq:local-selection-on-gains}
\cov\{\tau^{\textup c}(X),e(X)\mid U=\textup c\}
=
&\cov\{Y(1)-Y(0),Z\mid U=\textup c\}\notag\\
=&
\cov\{Y(1)-Y(0),D\mid U=\textup c\}\notag\\
=&
\var(Z\mid U=\textup c)
\{B^{\textup c}(1)-B^{\textup c}(0)\}.
\end{align}

\item
The marginal complier covariance and the corresponding covariances within the
\(Z=1\) and \(Z=0\) complier groups satisfy
\begin{align}\label{eq:local-marginal-conditional-covariance}
&\cov\{\tau^{\textup c}(X),e(X)\mid U=\textup c\}
\notag\\
&\quad={}
\frac{\var(Z\mid U=\textup c)}
{\E[e(X)\{1-e(X)\}\mid U=\textup c]}
\Big[
 p_{\textup c}\cdot
 \cov\{\tau^{\textup c}(X),e(X)\mid U=\textup c,Z=1\}
\notag\\
&\hspace{49mm}
+
(1-p_{\textup c})\cdot
\cov\{\tau^{\textup c}(X),e(X)\mid U=\textup c,Z=0\}
\Big].
\end{align}

\end{enumerate}
\end{prop}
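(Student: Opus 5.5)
The plan is to reduce everything to the observational-study computations of Proposition~\ref{prop:covariance-interpretation}, carried out under the complier distribution. Two facts do the work. First, Assumption~\ref{assump:iv}(i) gives \(Z\perp\!\!\!\perp U\mid X\) and \(Z\perp\!\!\!\perp \{Y(1),Y(0)\}\mid X,U\). Hence, conditional on \(U=\textup c\), the pair \((Z,X)\) behaves like an ignorable observational study with propensity score \(\p(Z=1\mid X,U=\textup c)=e(X)\) and CATE \(\tau^{\textup c}(X)\). Second, \(D=Z\) within the complier stratum.

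\textbf{Part (i).} Write \(\Delta=Y(1)-Y(0)\) and condition on \(X\) within \(U=\textup c\). By the law of total covariance,
\[
\cov(\Delta,Z\mid U=\textup c)
=\E\{\cov(\Delta,Z\mid X,U=\textup c)\mid U=\textup c\}
+\cov\{\E(\Delta\mid X,U=\textup c),\,\E(Z\mid X,U=\textup c)\mid U=\textup c\}.
\]
The first term vanishes by the conditional independence above. The second term equals \(\cov\{\tau^{\textup c}(X),e(X)\mid U=\textup c\}\). Replacing \(Z\) by \(D\) is immediate because \(D=Z\) on compliers.

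For the selection-bias form, I use the identity that for binary \(Z\) and any \(W\),
\[
\cov(W,Z\mid U=\textup c)=p_{\textup c}(1-p_{\textup c})\{\E(W\mid Z=1,U=\textup c)-\E(W\mid Z=0,U=\textup c)\}.
\]
Taking \(W=\Delta\), the bracketed difference is exactly \(B^{\textup c}(1)-B^{\textup c}(0)\), and \(p_{\textup c}(1-p_{\textup c})=\var(Z\mid U=\textup c)\).

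\textbf{Part (ii).} Here I mimic the proof of \eqref{eq:marginal-conditional-covariance} using the complier reweighting identity
\[
\E\{a(X)\mid U=\textup c,Z=1\}=\frac{\E\{e(X)a(X)\mid U=\textup c\}}{p_{\textup c}},
\]
together with its \(Z=0\) analogue with \(1-e(X)\) and \(1-p_{\textup c}\). Both follow from \(\p(Z=1\mid X,U=\textup c)=e(X)\).

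Rather than expanding the covariances directly, I would route the argument through Theorem~\ref{thm:diff-rep-local}, which is already available. Part (i) of that theorem gives
\[
C\equiv\cov\{\tau^{\textup c}(X),e(X)\mid U=\textup c\}=p_{\textup c}(1-p_{\textup c})\bigl(\tau^{\textup c}_{\textup{ATT}}-\tau^{\textup c}_{\textup{ATC}}\bigr).
\]
Part (ii) gives
\[
C_1=q_1\bigl(\tau^{\textup c}_{\textup{ATT}}-\tau^{\textup c}_{\textup{ATO}}\bigr),\qquad
C_0=q_0\bigl(\tau^{\textup c}_{\textup{ATO}}-\tau^{\textup c}_{\textup{ATC}}\bigr),
\]
where \(q_1=\E\{1-e(X)\mid U=\textup c,Z=1\}\) and \(q_0=\E\{e(X)\mid U=\textup c,Z=0\}\). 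By the reweighting identity,
\[
p_{\textup c}q_1=(1-p_{\textup c})q_0=\E[e(X)\{1-e(X)\}\mid U=\textup c]\equiv m.
\]
Therefore
\[
p_{\textup c}C_1+(1-p_{\textup c})C_0=m\bigl(\tau^{\textup c}_{\textup{ATT}}-\tau^{\textup c}_{\textup{ATC}}\bigr).
\]
Multiplying by \(\var(Z\mid U=\textup c)/m\) recovers \(C\), which is \eqref{eq:local-marginal-conditional-covariance}.

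The main obstacle is bookkeeping: checking that \(q_1\) and \(q_0\) reduce to \(m/p_{\textup c}\) and \(m/(1-p_{\textup c})\). Each follows in one line from the reweighting identity applied to \(a(X)=1-e(X)\) and to \(a(X)=e(X)\), respectively. I also need all conditional expectations to be finite and \(m>0\); the latter holds by Assumption~\ref{assump:iv}(v).
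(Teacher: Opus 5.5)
Your proof is correct. Part (i) is essentially the paper's argument: the law of total covariance given $X$ within compliers, then $D=Z$ on compliers, then the binary mean-difference identity. The only difference is that you apply that identity to $W=\Delta$ directly rather than to $Y(1)$ and $Y(0)$ separately, which is the same by linearity.

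Part (ii) follows a genuinely different route. The paper applies the law of total covariance within compliers, conditioning on $Z$. It evaluates the between-group term as $\var\{e(X)\mid U=\textup c\}\,C/\var(Z\mid U=\textup c)$ via the binary mean-difference identities, and then solves for $C$ using $\var(Z\mid U=\textup c)=m+\var\{e(X)\mid U=\textup c\}$.

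You instead treat both sides as expressions for the same gap. Theorem~\ref{thm:diff-rep-local}(i) gives $\tau^{\textup c}_{\textup{ATT}}-\tau^{\textup c}_{\textup{ATC}}=C/\var(Z\mid U=\textup c)$. Theorem~\ref{thm:diff-rep-local}(ii), telescoped through $\tau^{\textup c}_{\textup{ATO}}$ and combined with $p_{\textup c}q_1=(1-p_{\textup c})q_0=m$, gives $\tau^{\textup c}_{\textup{ATT}}-\tau^{\textup c}_{\textup{ATC}}=\{p_{\textup c}C_1+(1-p_{\textup c})C_0\}/m$.

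Your version is shorter and explains the identity conceptually: it is the ATT--ATC gap split at ATO. It does rely on Theorem~\ref{thm:diff-rep-local}, but there is no circularity, since that theorem's proof does not use this proposition. The paper's version is self-contained and parallels the proof of Proposition~\ref{prop:covariance-interpretation}(ii). It also makes the within/between decomposition explicit, and with it the origin of the inflation factor $\var(Z\mid U=\textup c)/m\ge 1$.
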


Proposition~\ref{prop:covariance-interpretation-local}(i) is the exact LATE
counterpart of Proposition~\ref{prop:covariance-interpretation}(i). Within the
complier population, the covariance between the complier CATE and the IV
propensity score measures whether units more likely to be encouraged also tend
to benefit more from treatment. Because treatment receipt equals IV assignment
for compliers, the same covariance is also the covariance between treatment
gains and treatment receipt within this principal stratum. Under an
unconditionally randomized IV, \(e(X)\) is constant and all these covariances
vanish.
Proposition~\ref{prop:covariance-interpretation-local}(ii) is the local
counterpart of Proposition~\ref{prop:covariance-interpretation}(ii). It shows
that a positively rescaled weighted average of the two IV-specific complier
covariances determines the marginal complier covariance. Hence, common
conditional signs imply the same marginal sign, whereas the converse need not
hold.

\subsection{Regression interpretations of overlap-weighted estimands}
\label{subsec:regression_obs}
\label{subsec:regression_iv}

\paragraph*{Observational-study interpretation through OLS.}
The overlap-weighted estimand \(\tauow\) has been studied extensively
\citep{li2018balancing,li2019addressing,zhao2026propensity}. To make its
connection with linear regression precise, define the population OLS
coefficients by
\begin{align}\label{eq:OLS}
\left(\beta_0,\beta_1,\beta_2\right)
=
\arg\min_{b_0,b_1,b_2}
\E\left(Y-b_0-b_1Z-b_2^\top X\right)^2.
\end{align}
We treat \eqref{eq:OLS} as a population linear projection and do not assume
that the corresponding additive linear outcome model is correctly specified.

\begin{prop}\label{prop:additive linear}
Assume Assumption~\ref{assume:ignor}.
\begin{enumerate}[label=(\roman*),leftmargin=*,align=left]
\item We have
\begin{align}\label{eq:beta1}
\beta_1
=
\frac{\E\left[\tilde w(X)\{Y(1)-Y(0)\}\right]}
{\E\{\tilde w(X)\}}
+
\frac{\E\left[\{e(X)-\tilde e(X)\}Y(0)\right]}
{\E\{\tilde w(X)\}},
\end{align}
where
\(\tilde w(X)=e(X)\{1-\tilde e(X)\}\) and
\(\tilde e(X)=\gamma_0+\gamma_1^\top X\) is the OLS projection of \(Z\) on
\((1,X)\), with
\[
(\gamma_0,\gamma_1)
=
\arg\min_{c_0,c_1}
\E\left(Z-c_0-c_1^\top X\right)^2.
\]
\item Furthermore, when \(e(X)\) is linear in \(X\), we have
\(\beta_1=\tauow\).
\end{enumerate}
\end{prop}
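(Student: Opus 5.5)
The plan is to use the Frisch--Waugh--Lovell (FWL) representation of the population OLS coefficient and then evaluate numerator and denominator with ignorability and the propensity-score identity \eqref{eq:exax}.

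\textbf{Step 1: FWL.} Let \(R=Z-\tilde e(X)\) denote the population residual from projecting \(Z\) on \((1,X)\). FWL gives
\[
\beta_1=\frac{\E(RY)}{\E(RZ)} .
\]
The denominator simplifies because \(R\) is orthogonal to \(1\) and to \(X\), hence to \(\tilde e(X)\):
\[
\E(RZ)=\E(R^2)=\E\{Z(Z-\tilde e(X))\}=\E[Z\{1-\tilde e(X)\}]=\E[e(X)\{1-\tilde e(X)\}]=\E\{\tilde w(X)\}.
\]
The last two equalities condition on \(X\).

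\textbf{Step 2: numerator.} Write \(Y=Y(0)+Z\{Y(1)-Y(0)\}\), so that
\[
\E(RY)=\E[Z\{1-\tilde e(X)\}\{Y(1)-Y(0)\}]+\E[\{Z-\tilde e(X)\}Y(0)] .
\]
For the first term, condition on \(X\) and use ignorability, \(\E\{Z\,Y(1)\mid X\}=e(X)\E\{Y(1)\mid X\}\) and likewise for \(Y(0)\). This yields \(\E[\tilde w(X)\tau(X)]\), which can be rewritten as \(\E[\tilde w(X)\{Y(1)-Y(0)\}]\) by the tower property. The second term becomes \(\E[\{e(X)-\tilde e(X)\}Y(0)]\) by the same conditioning. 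Dividing by \(\E\{\tilde w(X)\}\) gives \eqref{eq:beta1}.

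The main point needing care is the bookkeeping. Ignorability must be invoked precisely when replacing \(Z\) by \(e(X)\) inside expectations that involve potential outcomes. The orthogonality facts concern only \((1,X)\), not arbitrary functions of \(X\), which is why the second-term bias survives in general.

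\textbf{Step 3: part (ii).} If \(e(X)\) is linear in \(X\), then \(e(X)\) already lies in the span of \((1,X)\), so the projection is exact: \(\tilde e(X)=e(X)\). The bias term then vanishes and \(\tilde w(X)=e(X)\{1-e(X)\}\). Formula \eqref{eq:beta1} therefore reduces to \(\E[e(X)\{1-e(X)\}\tau(X)]/\E[e(X)\{1-e(X)\}]=\tauow\).

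Finally, the case \(\E\{\tilde w\}>0\) should be noted. It equals \(\E(R^2)\), which is positive unless \(Z\) is exactly linear in \(X\); that is excluded by overlap, since \(Z\in\{0,1\}\) while \(e(X)\in(0,1)\).
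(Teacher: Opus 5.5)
Your proposal is correct and follows essentially the same route as the paper. The paper also reduces $\beta_1$ to $\E(\tilde Z Y)/\E(\tilde Z Z)$ via the normal equations (i.e., FWL) and gets the denominator $\E\{\tilde w(X)\}$ by conditioning on $X$. It derives the numerator $\E\{\tilde w(X)\tau(X)\}+\E[\{e(X)-\tilde e(X)\}\mu_0(X)]$ under ignorability, substituting $\mu_1=\mu_0+\tau$ rather than using your upfront split $Y=Y(0)+Z\{Y(1)-Y(0)\}$, and part (ii) likewise follows from $\tilde e(X)=e(X)$. Your closing check that $\E\{\tilde w(X)\}=\E(R^2)>0$ under overlap is a small addition that the paper leaves implicit.
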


Part~(i) of Proposition~\ref{prop:additive linear} appears in
\citet{vansteelandt2022assumption}, and part~(ii) is well known; see, for
example, \citet{angrist1998estimating} and \citet{ding2024first}. More
broadly, overlap-weighted estimands arise from several regression
formulations; see \citet[Chapter~14]{ding2024first}.

A typical case in which \(e(X)\) is linear in \(X\) arises when \(X\)
contains a full set of indicators for the support of a discrete covariate. The
linearity then holds automatically, and the second term in
\eqref{eq:beta1} vanishes. This saturated finite-strata case underlies the
comparison with \citet{humphreys2025bounds} in
Section~\ref{subsec:compareHumphrey}.

\paragraph*{IV interpretation through TSLS.}
The overlap-weighted LATE admits a parallel interpretation through population
two-stage least squares (TSLS). Consider the population IV regression of \(Y\)
on \((1,D,X)\), using \(Z\) as the excluded IV for \(D\) and treating
\((1,X)\) as included exogenous regressors. Equivalently, the first stage
projects \(D\) on \((1,Z,X)\), and the second stage projects \(Y\) on
\((1,\widehat D,X)\), where \(\widehat D\) is the fitted value from the first
stage. Specifically, define
\[
(\theta_0,\theta_1,\theta_2)
=
\arg\min_{h_0,h_1,h_2}
\E\left(D-h_0-h_1Z-h_2^\top X\right)^2,
\]
and let
\(\widehat D=\theta_0+\theta_1Z+\theta_2^\top X\).
The TSLS estimand \(\TSLS\) is the coefficient on \(\widehat D\) in the
linear projection of \(Y\) on \((1,\widehat D,X)\). Recall from
\eqref{eq:OLS} that \(\beta_1\) is the coefficient on \(Z\) in the population
OLS regression of \(Y\) on \((1,Z,X)\). With a single IV for a single
endogenous treatment, the TSLS estimand is the ratio of the reduced-form and
first-stage coefficients:
\[
\TSLS=\frac{\beta_1}{\theta_1}.
\]

\begin{theorem}\label{thm:IV}
Assume \(\theta_1\ne0\). Under Assumption~\ref{assump:iv}, the following
statements hold:
\begin{enumerate}[label=(\roman*),leftmargin=*,align=left]
\item We have
\[
\TSLS
=
\frac{
\E[\tilde w(X)\{Y(1)-Y(0)\}\mid U=\textup c]\p(U=\textup c)
+
\E[\{e(X)-\tilde e(X)\}Y(D(0))]
}{
\E\{\tilde w(X)\mid U=\textup c\}\p(U=\textup c)
+
\E\{e(X)-\tilde e(X)\mid U=\textup a\}\p(U=\textup a)
},
\]
where \(\tilde w(X)=e(X)\{1-\tilde e(X)\}\) and \(\tilde e(X)\) is the
linear projection of \(Z\) on \((1,X)\).
\item Furthermore, when \(e(X)\) is linear in \(X\), we have
\[
\TSLS
=
\E\left[
\frac{e(X)\{1-e(X)\}\pi^{\textup c}(X)}
{\E[e(X)\{1-e(X)\}\pi^{\textup c}(X)]}
\tau^{\textup c}(X)
\right]
=
\tau^{\textup c}_{\textup{ATO}}.
\]
\end{enumerate}
\end{theorem}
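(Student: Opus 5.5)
Since $\TSLS=\beta_1/\theta_1$ with a single instrument, the plan is to handle the reduced form and the first stage separately. Each coefficient is a population OLS coefficient on $Z$ in a regression on $(1,Z,X)$, so Proposition~\ref{prop:additive linear}(i) applies to both. That proposition needs only the propensity-score algebra (Frisch--Waugh--Lovell: the coefficient on $Z$ equals $\E\{(Z-\tilde e(X))Y\}/\E\{(Z-\tilde e(X))Z\}$) together with conditional independence of $Z$ and the relevant potential outcomes given $X$. Assumption~\ref{assump:iv}(i)--(ii) gives $Z\perp\!\!\!\perp\{Y(D(1)),Y(D(0)),D(1),D(0)\}\mid X$. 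We can therefore apply the formula with outcome $Y$ and potential outcomes $Y(D(z))$ for the reduced form, and with outcome $D$ and potential outcomes $D(z)$ for the first stage.

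\textbf{Denominators.} The common denominator is $\E\{(Z-\tilde e)Z\}=\E\{e(1-\tilde e)\}=\E\{\tilde w(X)\}$; this uses $\E\{\tilde e(X)(Z-\tilde e(X))\}=0$ and $\E(Z\mid X)=e(X)$. It cancels in the ratio. The reduced-form numerator is
\[
\E[\tilde w(X)\{Y(D(1))-Y(D(0))\}]+\E[\{e(X)-\tilde e(X)\}Y(D(0))].
\]
The first-stage numerator is
\[
\E[\tilde w(X)\{D(1)-D(0)\}]+\E[\{e(X)-\tilde e(X)\}D(0)].
\]

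\textbf{Principal strata.} Under monotonicity, $Y(D(1))-Y(D(0))=\mathbf 1\{U=\textup c\}\{Y(1)-Y(0)\}$ and $D(1)-D(0)=\mathbf 1\{U=\textup c\}$, while $D(0)=\mathbf 1\{U=\textup a\}$. Substituting gives $\E[\tilde w\{Y(1)-Y(0)\}\mid U=\textup c]\p(U=\textup c)$ and $\E(\tilde w\mid U=\textup c)\p(U=\textup c)$ for the main terms. The first-stage remainder becomes $\E\{(e-\tilde e)\mid U=\textup a\}\p(U=\textup a)$. The reduced-form remainder stays $\E[\{e-\tilde e\}Y(D(0))]$. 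This yields part (i).

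\textbf{Part (ii).} If $e(X)$ is linear in $X$, then $\tilde e=e$ because the projection is exact. Both remainder terms vanish and $\tilde w=e(1-e)$, so
\[
\TSLS=\frac{\E[e(1-e)\{Y(1)-Y(0)\}\mid U=\textup c]}{\E\{e(1-e)\mid U=\textup c\}}.
\]
Iterating expectations over $X$, with $\p(U=\textup c\mid X)=\pi^{\textup c}(X)$, turns the numerator into $\E\{e(1-e)\pi^{\textup c}\tau^{\textup c}\}$ and the denominator into $\E\{e(1-e)\pi^{\textup c}\}$. This is $\tau^{\textup c}_{\textup{ATO}}$.

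\textbf{Main obstacle.} The key step is justifying Proposition~\ref{prop:additive linear}(i) for $Y$ and $D$ in the IV setting. There $Y(D(z))$ plays the role of the potential outcome, and ignorability must come from Assumption~\ref{assump:iv}(i) rather than Assumption~\ref{assume:ignor}. I would rederive the proposition quickly from the Frisch--Waugh--Lovell formula, writing $Y=Y(D(0))+Z\{Y(D(1))-Y(D(0))\}$ and conditioning on $X$, rather than citing it directly.
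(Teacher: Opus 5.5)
Your proposal is correct and follows essentially the same route as the paper. Like the paper, you apply Proposition~\ref{prop:additive linear}(i) separately to the reduced form (potential outcomes $Y\{D(z)\}$) and the first stage (potential outcomes $D(z)$), cancel the common $\E\{\tilde w(X)\}$, substitute $D(1)-D(0)=\mathbf 1(U=\textup c)$, $Y\{D(1)\}-Y\{D(0)\}=\{Y(1)-Y(0)\}\mathbf 1(U=\textup c)$ and $D(0)=\mathbf 1(U=\textup a)$, and then set $\tilde e=e$ and iterate over $X$ for part (ii); the only cosmetic slip is that iterating the complier-conditional numerator and denominator in part (ii) produces a factor $1/\p(U=\textup c)$ in each, which cancels in the ratio.
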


Theorem~\ref{thm:IV} interprets additive TSLS without assuming that the
corresponding additive linear outcome model is correctly specified
\citep{angrist1995two,blandhol2022tsls}. Part~(i) shows that without linearity
of the IV propensity score, TSLS generally targets a ratio that need not have
a causal interpretation. The first term in the numerator is a weighted
average of complier treatment effects; the remaining numerator term and the
second denominator term arise from the gap between \(e(X)\) and its linear
projection \(\tilde e(X)\). Under the linearity condition in part~(ii), this
gap disappears and TSLS reduces to the overlap-weighted LATE.

\subsection{Comparison with Humphreys (2025)}
\label{subsec:compareHumphrey}

\citet{humphreys2025bounds} considers a stratified design with $K$ strata, where the propensity score is constant within each stratum. Let
\(S\in\{1,\ldots,K\}\) denote stratum membership, let $\nk$ denote the number of units in stratum $k$, $w_{[k]}=\nk/n$ the population share, and $e_{[k]}= \p(Z=1\mid S=k)\in(0,1)$ the fraction treated in stratum $k$. Define the stratum-specific ATE
$$
\tau_{[k]}=\E\{Y(1)-Y(0)\mid S=k\},\qquad k=1,\ldots,K.
$$
\citet{humphreys2025bounds} studies the population OLS regression of $Y$ on $\{Z,\mathbb I_{[1]},\ldots,\mathbb I_{[K]}\}$, where $\mathbb I_{[k]}$ indicates stratum membership. The coefficient on $Z$ defines the fixed-effects estimand
\[
\tau_{\mathrm{FE}}
=\frac{\sum_{k=1}^K w_{[k]} e_{[k]}(1-e_{[k]})\tau_{[k]}}
{\sum_{k=1}^K w_{[k]} e_{[k]}(1-e_{[k]})}.
\]
In the stratified setting, the function \(g(e)=\E\{\tau(X)\mid e(X)=e\}\)
takes the form
\[
g_{\mathrm{FE}}(e)
=
\frac{\sum_{k: e_{[k]}=e} w_{[k]} \tau_{[k]}}
{\sum_{k: e_{[k]}=e} w_{[k]}} ,
\]
that is, the average stratum-specific ATE among strata with
assignment propensity \(e\), weighted by their population shares. By Proposition \ref{prop:additive linear}, $\tau_{\mathrm{FE}}=\tauow$, and applying Corollary~\ref{coro:ow in atc att} yields the following corollary.
\begin{coro}\label{coro:FE-bounds}
Assume \(e_{[k]}\in(0,1)\) for all \(k\). If \(g_{\mathrm{FE}}(e)\) is
non-decreasing in \(e\), then
$\tau_{\mathrm{FE}}\in[\tauatc,\tauatt].$
If \(g_{\mathrm{FE}}(e)\) is non-increasing in \(e\), then $\tau_{\mathrm{FE}}\in[\tauatt,\tauatc].$
\end{coro}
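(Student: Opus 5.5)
The plan is to reduce Corollary~\ref{coro:FE-bounds} to Corollary~\ref{coro:ow in atc att} by casting the stratified design as a special case of the setting in Section~\ref{sec:obs}. I take the covariate to be $X=S$, or equivalently the vector of stratum indicators $(\mathbb I_{[1]},\ldots,\mathbb I_{[K]})$. Then the propensity score is $e(X)=e_{[S]}$, the CATE is $\tau(X)=\tau_{[S]}$, and the marginal law of $X$ places mass $w_{[k]}$ on stratum $k$.

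The first step is to check that the CP function \eqref{eq:def_ge} in this setting is exactly $g_{\mathrm{FE}}$. The event $\{e(X)=e\}$ is the union of the strata with $e_{[k]}=e$. Hence $\E\{\tau(X)\mid e(X)=e\}$ is the $w_{[k]}$-weighted average of $\tau_{[k]}$ over those strata, which is $g_{\mathrm{FE}}(e)$. So monotonicity of $g_{\mathrm{FE}}$ is Assumption~\ref{assump:mono}(i) or (ii). The overlap part of Assumption~\ref{assume:ignor} is the hypothesis $e_{[k]}\in(0,1)$.

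The second step is to identify $\tau_{\mathrm{FE}}$ with $\tauow$. The stratum indicators span all functions of $S$, so $e(X)$ is linear in $X$ (with the intercept absorbed, or dropped in favor of the full set of dummies). Proposition~\ref{prop:additive linear}(ii) then gives $\beta_1=\tauow$. One can also see this directly: $\E[e(X)\{1-e(X)\}\tau(X)]/\E[e(X)\{1-e(X)\}]$ expands to the displayed formula for $\tau_{\mathrm{FE}}$.

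The final step is to invoke Corollary~\ref{coro:ow in atc att}. One point needs care: ignorability. If the design is not assumed ignorable given $S$, I would use the remark after Corollary~\ref{coro:ow in atc att}. That remark says the covariance representations, and hence the bracketing, hold without ignorability once $\tauatt$ and $\tauatc$ are read as $\E\{\tau(X)\mid Z=1\}$ and $\E\{\tau(X)\mid Z=0\}$. In the stratified experiment, assignment is randomized within strata, so ignorability given $S$ holds in any case.

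The main obstacle is only bookkeeping. The regression in \citet{humphreys2025bounds} uses the full set of dummies with no intercept, whereas Proposition~\ref{prop:additive linear} uses $(1,X)$. To match them, I would drop one indicator, which leaves the column span unchanged and therefore leaves the coefficient on $Z$ unchanged. Everything else follows immediately from the earlier results.
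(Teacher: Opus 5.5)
Your proposal is correct and follows the paper's proof. The paper likewise sets $e(X)=e_{[k]}$ on stratum $k$, notes that $\tau_{\mathrm{FE}}=\tauow$ and that the CP function $g$ reduces to $g_{\mathrm{FE}}$, and then applies Corollary~\ref{coro:ow in atc att}. Your extra points on ignorability and on the intercept-versus-full-dummies bookkeeping are sound details that the paper's short proof leaves implicit.
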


For comparison, we recall Proposition~1 of \citet{humphreys2025bounds} as a corollary of Corollary \ref{coro:FE-bounds}. It imposes a monotonicity condition on the original fine strata. 

\begin{coro}[Proposition 1 of Humphreys (2025)]\label{coro:FE-bounds_old}
If for all $k,k'$, $e_{[k]}\ge e_{[k']}$ if and only if $\tau_{[k]}\ge\tau_{[k']}$, then
$\tau_{\mathrm{FE}}\in[\tauatc,\tauatt]$. If for all $k,k'$, $e_{[k]}\le e_{[k']}$ if and only if $\tau_{[k]}\ge\tau_{[k']}$, then $\tau_{\mathrm{FE}}\in[\tauatt,\tauatc]$.
\end{coro}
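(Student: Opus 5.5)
The plan is to deduce Corollary~\ref{coro:FE-bounds_old} from Corollary~\ref{coro:FE-bounds} by checking that Humphreys' fine-strata rank condition forces monotonicity of the coarsened CP function \(g_{\mathrm{FE}}\). I treat the first statement; the second is symmetric, obtained by reversing the inequalities in \(\tau\), or equivalently by replacing \(\tau_{[k]}\) with \(-\tau_{[k]}\).

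\textbf{Step 1: equal propensities force equal effects.} Suppose \(e_{[k]}=e_{[k']}\). Then both \(e_{[k]}\ge e_{[k']}\) and \(e_{[k']}\ge e_{[k]}\) hold. The ``if and only if'' condition, applied in each direction, gives \(\tau_{[k]}\ge\tau_{[k']}\) and \(\tau_{[k']}\ge\tau_{[k]}\), so \(\tau_{[k]}=\tau_{[k']}\). Hence all strata sharing the propensity value \(e\) have a common effect, call it \(t(e)\). Since \(g_{\mathrm{FE}}(e)\) is a weighted average of the \(\tau_{[k]}\) over strata with \(e_{[k]}=e\), with positive weights \(w_{[k]}\), it equals \(t(e)\).

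\textbf{Step 2: monotonicity of \(g_{\mathrm{FE}}\).} Take two support points \(e<e'\), and pick strata \(k\) with \(e_{[k]}=e'\) and \(k'\) with \(e_{[k']}=e\). Then \(e_{[k]}\ge e_{[k']}\), so the condition gives \(\tau_{[k]}\ge\tau_{[k']}\). By Step 1 this reads \(g_{\mathrm{FE}}(e')\ge g_{\mathrm{FE}}(e)\). So \(g_{\mathrm{FE}}\) is non-decreasing on the finite support of the propensity score, which is all that Corollary~\ref{coro:FE-bounds} requires: it enters only through the covariance \(\cov\{g(e(X)),e(X)\mid Z=z\}\) in \eqref{eq:balance_first}.

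\textbf{Step 3: conclude.} Corollary~\ref{coro:FE-bounds} then yields \(\tau_{\mathrm{FE}}\in[\tauatc,\tauatt]\). The standing assumption \(e_{[k]}\in(0,1)\) is part of Humphreys' stratified setting. One caveat is that Corollary~\ref{coro:FE-bounds} is phrased for a function \(g_{\mathrm{FE}}\) defined only on support points. I will remark that it extends to a non-decreasing function on \([0,1]\), for instance as a step function, without changing any of the covariances.

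\textbf{Main obstacle.} There is no real analytic obstacle. The only point needing care is Step 1: the two-sided ``if and only if'' is exactly what makes strata with tied propensities carry tied effects, so that the coarsened average \(g_{\mathrm{FE}}\) inherits the ordering. This also shows why Corollary~\ref{coro:FE-bounds} is strictly more general. It permits tied-propensity strata to have different effects, as long as their averages are monotone.
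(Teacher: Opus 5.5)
Your proposal is correct and follows the paper's proof. Tied propensities force tied stratum effects, so $g_{\mathrm{FE}}$ inherits the monotone ordering and Corollary~\ref{coro:FE-bounds} gives the bracketing; your sign flip $\tau\mapsto-\tau$ for the second claim is a clean rendering of ``the same argument.'' One small refinement: Steps~1 and~2 only use the forward implication $e_{[k]}\ge e_{[k']}\Rightarrow\tau_{[k]}\ge\tau_{[k']}$, applied to both orderings of a tied pair, so the converse half of Humphreys' ``if and only if'' is never needed — contrary to your closing remark.
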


Corollary~\ref{coro:FE-bounds} weakens the condition in Corollary~\ref{coro:FE-bounds_old} as stated on the original fine stratification. In particular, Corollary~\ref{coro:FE-bounds_old} requires a monotone ordering of the stratum-level effects \(\tau_{[k]}\) themselves. Thus, if two fine strata have the same assignment propensity, the ``if and only if'' condition forces their treatment effects to be ordered in both directions and therefore to coincide. By contrast, Corollary~\ref{coro:FE-bounds} only requires monotonicity of the coarsened average effect \(g_{\mathrm{FE}}(e)\), and hence allows fine strata with the same assignment propensity to have different treatment effects. 
If the original fine stratification is coarsened by grouping strata with the
same assignment propensity, Corollary~\ref{coro:FE-bounds} becomes the
monotonicity condition on the coarsened strata. Let
\(\{r_1,\ldots,r_m\}\) be the distinct values of
\(\{e_{[k]}:k=1,\ldots,K\}\). For each \(r_i\), group all strata with
\(e_{[k]}=r_i\). The coarsened stratum has population share
\(
W_i=\sum_{k:e_{[k]}=r_i}w_{[k]},
\)
and average treatment effect
\[
\bar\tau_i
=
\frac{\sum_{k:e_{[k]}=r_i}w_{[k]}\tau_{[k]}}
{\sum_{k:e_{[k]}=r_i}w_{[k]}}
=
g_{\mathrm{FE}}(r_i).
\]
This coarsening leaves \(\tauatt\), \(\tauatc\), and
\(\tau_{\mathrm{FE}}\) unchanged. Hence monotonicity of \(g_{\mathrm{FE}}\) is
equivalent to monotonicity of the coarsened effects \(\bar\tau_i\) in the
coarsened propensities \(r_i\). Thus, relative to the original fine-stratum
statement, Corollary~\ref{coro:FE-bounds} allows strata with the same assignment
propensity to have different treatment effects. After coarsening, however, the
finite-strata comparison is expressed through the same object
\(g_{\mathrm{FE}}\). Our formulation makes this propensity-score conditional
mean explicit.

\paragraph*{Convex-combination relation among the population CP-plot slopes.}
Recall the population slopes \(\beta_{\mathrm{all}}\),
\(\beta_{\mathrm{tr}}\), and \(\beta_{\mathrm{co}}\) defined in
Section~\ref{subsec:cp_plot}.

\begin{prop}\label{prop:cp-slope-convex}
Assume Assumption~\ref{assume:ignor} and that the three population linear
projections are well defined. Then
\begin{align}\label{eq:cp-slope-convex}
\beta_{\mathrm{all}}
={}&
\frac{
\p(Z=1)\var\{e(X)\mid Z=1\}
}{
\p(Z=1)\var\{e(X)\mid Z=1\}
+
\p(Z=0)\var\{e(X)\mid Z=0\}
}
\beta_{\mathrm{tr}}
\notag\\
&+
\frac{
\p(Z=0)\var\{e(X)\mid Z=0\}
}{
\p(Z=1)\var\{e(X)\mid Z=1\}
+
\p(Z=0)\var\{e(X)\mid Z=0\}
}
\beta_{\mathrm{co}}.
\end{align}
In particular, \(\beta_{\mathrm{all}}\) lies between
\(\beta_{\mathrm{tr}}\) and \(\beta_{\mathrm{co}}\).
\end{prop}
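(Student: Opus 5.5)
Each slope will be written as a ratio of weighted moments under the corresponding tilted distribution. Then the law of total covariance and the law of total variance, conditioning on $Z$, will relate the unweighted moments to the treated and control moments.

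By \eqref{eq:exax}, the weighted projection with weight $e(X)$ is the ordinary projection under the treated covariate distribution. Hence
\[
\beta_{\mathrm{tr}}=\frac{\cov\{\tau(X),e(X)\mid Z=1\}}{\var\{e(X)\mid Z=1\}}.
\]
Likewise, with weight $1-e(X)$,
\[
\beta_{\mathrm{co}}=\frac{\cov\{\tau(X),e(X)\mid Z=0\}}{\var\{e(X)\mid Z=0\}},
\qquad
\beta_{\mathrm{all}}=\frac{\cov\{\tau(X),e(X)\}}{\var\{e(X)\}}.
\]
The assumption that the projections are well defined gives positive denominators.

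The key step is to decompose the numerator and the denominator of $\beta_{\mathrm{all}}$ in the same proportions. For the numerator, Proposition~\ref{prop:covariance-interpretation}(ii) gives
\[
\cov\{\tau(X),e(X)\}=c\,[p\,C_1+(1-p)C_0],
\qquad c=\frac{\var(Z)}{\E\{\var(Z\mid X)\}},
\]
where $C_z=\cov\{\tau(X),e(X)\mid Z=z\}$.

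For the denominator I need the analogous identity
\[
\var\{e(X)\}=c\,[p\,V_1+(1-p)V_0],
\qquad V_z=\var\{e(X)\mid Z=z\}.
\]
This is simply Proposition~\ref{prop:covariance-interpretation}(ii) with $\tau(X)$ replaced by $e(X)$. That proposition is pure covariance algebra in an arbitrary function $a(X)$ and does not use any property of $\tau$. To be safe, I would verify it directly. By the law of total variance,
\[
\var\{e(X)\}=p V_1+(1-p)V_0+\var[\E\{e(X)\mid Z\}].
\]
With $m_z=\E\{e(X)\mid Z=z\}$, the between-group term equals $p(1-p)(m_1-m_0)^2$. Using \eqref{eq:exax}, $m_1-m_0=\var\{e(X)\}/\{p(1-p)\}$. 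Substituting and solving for $\var\{e(X)\}$ gives
\[
\var\{e(X)\}\left[1-\frac{\var\{e(X)\}}{p(1-p)}\right]=pV_1+(1-p)V_0 .
\]
The bracket equals $\E\{\var(Z\mid X)\}/\var(Z)=1/c$, because $p(1-p)-\var\{e(X)\}=\E[e(X)\{1-e(X)\}]$. This is the claimed identity.

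Dividing the two decompositions, the factor $c$ cancels:
\[
\beta_{\mathrm{all}}=\frac{pV_1\beta_{\mathrm{tr}}+(1-p)V_0\beta_{\mathrm{co}}}{pV_1+(1-p)V_0},
\]
using $C_z=V_z\beta_z$. This is exactly \eqref{eq:cp-slope-convex}. The weights are nonnegative and sum to one, so $\beta_{\mathrm{all}}$ lies between $\beta_{\mathrm{tr}}$ and $\beta_{\mathrm{co}}$.

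The main obstacle is the variance analogue of Proposition~\ref{prop:covariance-interpretation}(ii), which requires the total-variance computation above. Everything else is the identification of weighted slopes with conditional covariance ratios.
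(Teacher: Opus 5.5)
Your proposal is correct and takes the same approach as the paper: both write the three slopes as covariance-over-variance ratios under the overall, treated, and control distributions, apply Proposition~\ref{prop:covariance-interpretation}(ii) to the numerator and its variance analogue to the denominator, and cancel the common factor $\var(Z)/\E\{\var(Z\mid X)\}$. Your explicit law-of-total-variance check of the denominator identity just spells out what the paper calls ``the same total-covariance calculation, with the first argument also equal to $e(X)$.''
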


\subsection{Sample-level interpretation of the CP-plot linear fits}
\label{subsec:sample-cp-lines}

This subsection provides a finite-sample justification for the weighted linear
fits used in the CP-plot. The result is purely algebraic. It treats the estimated
pairs \((\hat e_i,\hat\tau_i)\), \(i=1,\ldots,n\), as fixed numbers and does not
require any sampling approximation.

Assume \(0<\hat e_i<1\) for all \(i\). Define the plug-in estimators
\[
\hat\tau_{\mathrm{ATE}}
=
\frac1n\sum_{i=1}^n \hat\tau_i,
\qquad
\hat\tau_{\mathrm{ATT}}
=
\frac{\sum_{i=1}^n \hat e_i\hat\tau_i}
{\sum_{i=1}^n \hat e_i},
\]
\[
\hat\tau_{\mathrm{ATC}}
=
\frac{\sum_{i=1}^n (1-\hat e_i)\hat\tau_i}
{\sum_{i=1}^n (1-\hat e_i)},
\qquad
\hat\tau_{\mathrm{ATO}}
=
\frac{\sum_{i=1}^n \hat e_i(1-\hat e_i)\hat\tau_i}
{\sum_{i=1}^n \hat e_i(1-\hat e_i)}.
\]
Let
\[
\widehat L_j(e)=\widehat\alpha_j+\widehat\beta_j e,
\qquad
j\in\{\mathrm{all},\mathrm{tr},\mathrm{co}\},
\]
denote the weighted least-squares fits of \(\hat\tau_i\) on \(\hat e_i\),
using weights
\[
w_{\mathrm{all},i}=1,\qquad
w_{\mathrm{tr},i}=\hat e_i,\qquad
w_{\mathrm{co},i}=1-\hat e_i.
\]
For
\(\star\in\{\mathrm{ATE},\mathrm{ATT},\mathrm{ATO},\mathrm{ATC}\}\),
define
\[
\widehat{\bar e}_{\star}
=
\frac{\sum_{i=1}^n h_{\star,i}\hat e_i}
{\sum_{i=1}^n h_{\star,i}},
\]
where
\[
h_{\mathrm{ATE},i}=1,\qquad
h_{\mathrm{ATT},i}=\hat e_i,\qquad
h_{\mathrm{ATO},i}=\hat e_i(1-\hat e_i),\qquad
h_{\mathrm{ATC},i}=1-\hat e_i.
\]
Thus,
\((\widehat{\bar e}_{\star},\hat\tau_{\star})\) is the empirical
centroid under the corresponding weights.

\begin{theorem}
\label{thm:sample-cp-geometry}
Assume \(0<\hat e_i<1\) for all \(i\) and that the three weighted
least-squares fits are well defined. Whenever each corresponding pair of
lines is distinct, their unique intersections are
\[
\widehat L_{\mathrm{all}}\cap\widehat L_{\mathrm{tr}}
=
(\widehat{\bar e}_{\mathrm{ATT}},\hat\tau_{\mathrm{ATT}}),
\qquad
\widehat L_{\mathrm{tr}}\cap\widehat L_{\mathrm{co}}
=
(\widehat{\bar e}_{\mathrm{ATO}},\hat\tau_{\mathrm{ATO}}),
\]
\[
\widehat L_{\mathrm{co}}\cap\widehat L_{\mathrm{all}}
=
(\widehat{\bar e}_{\mathrm{ATC}},\hat\tau_{\mathrm{ATC}}).
\]
Moreover, the slopes of the three sample weighted least-squares fits
satisfy:
\begin{enumerate}[label=(\roman*),leftmargin=*,align=left]
\item For the unweighted fit,
\[
\widehat\beta_{\mathrm{all}}
=
\frac{
\hat\tau_{\mathrm{ATT}}-\hat\tau_{\mathrm{ATE}}
}{
\widehat{\bar e}_{\mathrm{ATT}}
-
\widehat{\bar e}_{\mathrm{ATE}}
}
=
\frac{
\hat\tau_{\mathrm{ATE}}-\hat\tau_{\mathrm{ATC}}
}{
\widehat{\bar e}_{\mathrm{ATE}}
-
\widehat{\bar e}_{\mathrm{ATC}}
}.
\]
\item For the treated- and control-weighted fits,
\[
\widehat\beta_{\mathrm{tr}}
=
\frac{
\hat\tau_{\mathrm{ATT}}-\hat\tau_{\mathrm{ATO}}
}{
\widehat{\bar e}_{\mathrm{ATT}}
-
\widehat{\bar e}_{\mathrm{ATO}}
},
\qquad
\widehat\beta_{\mathrm{co}}
=
\frac{
\hat\tau_{\mathrm{ATO}}-\hat\tau_{\mathrm{ATC}}
}{
\widehat{\bar e}_{\mathrm{ATO}}
-
\widehat{\bar e}_{\mathrm{ATC}}
}.
\]
\end{enumerate}
\end{theorem}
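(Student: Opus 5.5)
The plan is to prove a single algebraic lemma about weighted least squares and then specialize it three times. For nonnegative weights $w_i$, write $\widehat{\E}_w(a)=\sum_i w_i a_i/\sum_i w_i$. The first-order (normal) equations of the weighted fit $\widehat L_w(e)=\widehat\alpha_w+\widehat\beta_w e$ then say two things:
\[
\widehat{\E}_w(\hat\tau)=\widehat\alpha_w+\widehat\beta_w\widehat{\E}_w(\hat e),
\qquad
\widehat{\E}_w(\hat e\hat\tau)=\widehat\alpha_w\widehat{\E}_w(\hat e)+\widehat\beta_w\widehat{\E}_w(\hat e^2).
\]
The second equation, written as a weighted sum, is $\widehat{\E}_{w\hat e}(\hat\tau)=\widehat\alpha_w+\widehat\beta_w\widehat{\E}_{w\hat e}(\hat e)$. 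Subtracting the first equation from $\widehat{\E}_w(\hat\tau)$-type identities in the same way gives the same statement for weights $w(1-\hat e)$, since $\widehat{\E}_{w(1-\hat e)}$ is a linear combination of $\widehat{\E}_w$ and $\widehat{\E}_{w\hat e}$ with denominator $\sum_i w_i(1-\hat e_i)>0$.

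The key observation is therefore: the line fitted with weights $w$ passes through the weighted centroid $(\widehat{\E}_h(\hat e),\widehat{\E}_h(\hat\tau))$ for every $h\in\{w,\ w\hat e,\ w(1-\hat e)\}$.

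Specializing this observation gives the following incidences.
\begin{itemize}
\item $w=1$: $\widehat L_{\mathrm{all}}$ passes through the ATE, ATT and ATC centroids.
\item $w=\hat e$: $\widehat L_{\mathrm{tr}}$ passes through the ATT centroid (with $h=w$) and the ATO centroid (with $h=\hat e(1-\hat e)$).
\item $w=1-\hat e$: $\widehat L_{\mathrm{co}}$ passes through the ATC and ATO centroids.
\end{itemize}

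The intersection claims follow from these incidences. Each pair of lines shares the stated centroid: $\widehat L_{\mathrm{all}}$ and $\widehat L_{\mathrm{tr}}$ share ATT, $\widehat L_{\mathrm{tr}}$ and $\widehat L_{\mathrm{co}}$ share ATO, and $\widehat L_{\mathrm{co}}$ and $\widehat L_{\mathrm{all}}$ share ATC. Two distinct lines meet in at most one point, so that shared centroid is their unique intersection.

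The slope identities come from the same incidences. A line through two points with distinct abscissae has slope equal to the ratio of differences. So $\widehat\beta_{\mathrm{all}}$ is obtained from the pairs (ATT, ATE) and (ATE, ATC); $\widehat\beta_{\mathrm{tr}}$ from (ATT, ATO); and $\widehat\beta_{\mathrm{co}}$ from (ATO, ATC).

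The only delicate point is that these denominators are nonzero. I would handle this with weighted covariances. For instance,
\[
\widehat{\bar e}_{\mathrm{ATT}}-\widehat{\bar e}_{\mathrm{ATE}}=\frac{\widehat{\var}(\hat e)}{\widehat{\E}(\hat e)},
\qquad
\widehat{\bar e}_{\mathrm{ATT}}-\widehat{\bar e}_{\mathrm{ATO}}=\frac{\widehat{\var}_{\hat e}(\hat e)}{\widehat{\E}_{\hat e}(1-\hat e)},
\]
and similarly for the others, mirroring Theorem~\ref{thm:diff-rep}. Each of these equals a positive multiple of a weighted variance of $\hat e$. That variance is positive exactly when the corresponding weighted least-squares fit is well defined, which is the theorem's hypothesis. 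When the abscissae differ, the ratio form is therefore legitimate.

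The main obstacle is only the bookkeeping in this last step. One must check that well-definedness of each fit (a nondegenerate weighted variance of $\hat e$) ensures the relevant centroid abscissae differ, and express each abscissa gap as a variance divided by a positive mean. I would verify the two displayed identities directly and obtain the remaining ones by the symmetry $\hat e\leftrightarrow 1-\hat e$.
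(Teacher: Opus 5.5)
Your proof is correct, but it runs the paper's argument in the opposite direction.

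\textbf{The paper's route.} The paper first writes each fitted slope in centered form, $\widehat\beta_w=\sum_i w_i(\hat e_i-\bar e_w)\hat\tau_i/\hat V_w$. It then computes every plug-in difference explicitly, e.g.\ $\hat\tau_{\mathrm{ATT}}-\hat\tau_{\mathrm{ATO}}=\widehat\beta_{\mathrm{tr}}\hat V_{\mathrm{tr}}/\sum_i\hat e_i(1-\hat e_i)$, together with the matching abscissa gap $\hat V_{\mathrm{tr}}/\sum_i\hat e_i(1-\hat e_i)$, and divides to obtain the slope identities. Only then does it use the fact that each line passes through its own centroid to conclude, via the slopes, that the line also passes through the neighboring centroid.

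\textbf{Your route.} You read the incidences directly off the normal equations. The residuals $r_i$ of the $w$-weighted fit satisfy $\sum_i h_i r_i=0$ for every $h_i=w_i(c_0+c_1\hat e_i)$. Hence the line passes through the $h$-weighted centroid whenever $\sum_i h_i\neq 0$. The intersections are then immediate, and the slope identities become two-point slope formulas. The abscissa gaps are needed only to certify that the denominators are nonzero, and you correctly identify them as positive multiples of the weighted variances that well-definedness guarantees.

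\textbf{What each buys.} Your route is shorter and explains structurally why exactly these pairs of lines meet where they do: each line only ``sees'' centroids with weights in $w\cdot\mathrm{span}\{1,\hat e\}$. It also transfers verbatim to other settings:
\begin{itemize}
\item the local CP-plot, by multiplying every weight by $\hat\pi_i^{\textup c}$;
\item the population versions, by replacing sums with expectations;
\item adjacent beta weights, since the $\beta_{u,v}$-weighted line passes through the $\beta_{u+1,v}$ and $\beta_{u,v+1}$ centroids.
\end{itemize}
The paper's route instead delivers, as a byproduct, closed forms for the plug-in differences as the fitted slope times a positive sum of squares over a positive normalizer. These are the exact sample analogs of Theorem~\ref{thm:diff-rep}. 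You recover the same information as slope times your positive abscissa gap.
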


Theorem~\ref{thm:sample-cp-geometry} is the exact sample counterpart of
Theorem~\ref{thm:population-cp-geometry}. In particular, the signs of
\(\widehat\beta_{\mathrm{tr}}\) and \(\widehat\beta_{\mathrm{co}}\) determine
the directions of the two plug-in differences involving ATO, while the three
pairwise intersections recover the corresponding plug-in estimates of ATT,
ATO, and ATC.

\subsection{Population and sample geometry of the local CP-plot}
\label{subsec:population-local-cp-geometry}
\label{subsec:sample-local-cp-lines}

\paragraph*{Population geometry.}
We give the population counterpart of the three weighted linear fits used in
the local CP-plot. Let
\[
L_j^{\textup c}(e)
=
\alpha_j^{\textup c}+\beta_j^{\textup c}e,
\qquad
j\in\{\mathrm{all},\mathrm{tr},\mathrm{co}\},
\]
denote the weighted linear projections of
\(\tau^{\textup c}(X)\) on \(e(X)\), where
\[
(\alpha_j^{\textup c},\beta_j^{\textup c})
=
\arg\min_{a,b}
\E\left[
w_j^{\textup c}(X)
\{\tau^{\textup c}(X)-a-be(X)\}^2
\right],
\]
with
\[
w_{\mathrm{all}}^{\textup c}(X)
=
\pi^{\textup c}(X),
\qquad
w_{\mathrm{tr}}^{\textup c}(X)
=
e(X)\pi^{\textup c}(X),
\]
and
\[
w_{\mathrm{co}}^{\textup c}(X)
=
\{1-e(X)\}\pi^{\textup c}(X).
\]

For
\(\star\in\{\mathrm{ATE},\mathrm{ATT},\mathrm{ATO},\mathrm{ATC}\}\),
define
\[
\bar e_{\star}^{\textup c}
=
\frac{
\E\{h_\star(X)\pi^{\textup c}(X)e(X)\}
}{
\E\{h_\star(X)\pi^{\textup c}(X)\}
},
\]
where
\[
h_{\mathrm{ATE}}(X)=1,
\qquad
h_{\mathrm{ATT}}(X)=e(X),
\]
\[
h_{\mathrm{ATO}}(X)=e(X)\{1-e(X)\},
\qquad
h_{\mathrm{ATC}}(X)=1-e(X).
\]
Thus,
\[
\left(
\bar e_{\star}^{\textup c},
\tau_{\star}^{\textup c}
\right)
\]
is the centroid of
\(\{e(X),\tau^{\textup c}(X)\}\) under the corresponding local target
population.

\begin{theorem}
\label{thm:population-local-cp-geometry}
Assume Assumption~\ref{assump:iv} and that the three weighted linear
projections are well defined. Whenever each corresponding pair of lines is
distinct, their unique intersections are
\[
L_{\mathrm{all}}^{\textup c}
\cap
L_{\mathrm{tr}}^{\textup c}
=
\left(
\bar e_{\mathrm{ATT}}^{\textup c},
\tau_{\mathrm{ATT}}^{\textup c}
\right),
\quad
L_{\mathrm{tr}}^{\textup c}
\cap
L_{\mathrm{co}}^{\textup c}
=
\left(
\bar e_{\mathrm{ATO}}^{\textup c},
\tau_{\mathrm{ATO}}^{\textup c}
\right),
\quad
L_{\mathrm{co}}^{\textup c}
\cap
L_{\mathrm{all}}^{\textup c}
=
\left(
\bar e_{\mathrm{ATC}}^{\textup c},
\tau_{\mathrm{ATC}}^{\textup c}
\right).
\]

Moreover, the slopes of the three population linear projections satisfy:
\begin{enumerate}[label=(\roman*),leftmargin=*,align=left]
\item For the complier-weighted projection,
\[
\beta_{\mathrm{all}}^{\textup c}
=
\frac{
\tau_{\mathrm{ATT}}^{\textup c}
-
\tau_{\mathrm{ATE}}^{\textup c}
}{
\bar e_{\mathrm{ATT}}^{\textup c}
-
\bar e_{\mathrm{ATE}}^{\textup c}
}
=
\frac{
\tau_{\mathrm{ATE}}^{\textup c}
-
\tau_{\mathrm{ATC}}^{\textup c}
}{
\bar e_{\mathrm{ATE}}^{\textup c}
-
\bar e_{\mathrm{ATC}}^{\textup c}
}.
\]

\item For the encouraged- and unencouraged-complier-weighted projections,
\[
\beta_{\mathrm{tr}}^{\textup c}
=
\frac{
\tau_{\mathrm{ATT}}^{\textup c}
-
\tau_{\mathrm{ATO}}^{\textup c}
}{
\bar e_{\mathrm{ATT}}^{\textup c}
-
\bar e_{\mathrm{ATO}}^{\textup c}
},
\qquad
\beta_{\mathrm{co}}^{\textup c}
=
\frac{
\tau_{\mathrm{ATO}}^{\textup c}
-
\tau_{\mathrm{ATC}}^{\textup c}
}{
\bar e_{\mathrm{ATO}}^{\textup c}
-
\bar e_{\mathrm{ATC}}^{\textup c}
}.
\]
\end{enumerate}
\end{theorem}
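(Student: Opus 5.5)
The plan is to reduce Theorem~\ref{thm:population-local-cp-geometry} to a general weighted-projection identity and then specialize it to the three complier weights. Take any nonnegative weight $w(X)$ with $0<\E\{w(X)\}<\infty$ and $\var_w\{e(X)\}>0$. The weighted least-squares first-order conditions give
\[
\beta_w=\frac{\cov_w\{\tau^{\textup c}(X),e(X)\}}{\var_w\{e(X)\}},
\qquad
\alpha_w+\beta_w\E_w\{e(X)\}=\E_w\{\tau^{\textup c}(X)\}.
\]
So the line passes through the $w$-centroid. The normal equation $\E[w(X)e(X)\{\tau^{\textup c}(X)-\alpha_w-\beta_w e(X)\}]=0$ also says that the line passes through the $w e$-weighted centroid. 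Combining the two equations gives a \emph{two-centroid property}: the projection line under $w$ passes through both the $w$-centroid and the $we$-centroid of $\{e(X),\tau^{\textup c}(X)\}$.

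The same reasoning applies with $e$ replaced by $1-e$. Since $\E[w(X)\{1-e(X)\}\{\tau^{\textup c}(X)-\alpha_w-\beta_w e(X)\}]$ equals the first normal equation minus the second, it also vanishes. Hence the line passes through the $w(1-e)$-centroid as well.

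Next, specialize to the three weights.
\begin{itemize}
\item With $w_{\mathrm{all}}^{\textup c}=\pi^{\textup c}$, the centroids are those for $h_{\mathrm{ATE}}$, $h_{\mathrm{ATT}}$ and $h_{\mathrm{ATC}}$, so $L^{\textup c}_{\mathrm{all}}$ passes through the ATE, ATT and ATC centroids.
\item With $w_{\mathrm{tr}}^{\textup c}=e\pi^{\textup c}$, we get the ATT centroid (weight $w$) and the ATO centroid (weight $w(1-e)=e(1-e)\pi^{\textup c}$).
\item With $w_{\mathrm{co}}^{\textup c}=(1-e)\pi^{\textup c}$, we get the ATC centroid (weight $w$) and the ATO centroid (weight $we$).
\end{itemize}
Throughout, the vertical coordinate of the $h_\star\pi^{\textup c}$-centroid is $\E\{h_\star\pi^{\textup c}\tau^{\textup c}\}/\E\{h_\star\pi^{\textup c}\}=\tau^{\textup c}_\star$, by the definitions in Section~\ref{sec:iv}, and its horizontal coordinate is $\bar e^{\textup c}_\star$.

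For the intersections, each pair of lines shares a named centroid: all/tr share ATT, tr/co share ATO, and co/all share ATC. Two distinct lines meet in at most one point, so that shared centroid is the unique intersection.

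For the slopes, each line passes through two centroids, so its slope equals the difference quotient between them. This gives (i) from the ATT/ATE and ATE/ATC pairs on $L^{\textup c}_{\mathrm{all}}$, and (ii) from the ATT/ATO pair on $L^{\textup c}_{\mathrm{tr}}$ and the ATO/ATC pair on $L^{\textup c}_{\mathrm{co}}$.

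The main obstacle is nondegeneracy: the difference quotients require the relevant $\bar e^{\textup c}_\star$ to be distinct. I would show
\[
\bar e^{\textup c}_{\mathrm{ATT}}-\bar e^{\textup c}_{\mathrm{ATE}}=\frac{\var_{\pi^{\textup c}}\{e(X)\}}{\E_{\pi^{\textup c}}\{e(X)\}},
\]
which is positive because well-definedness of the projection forces $\var_{\pi^{\textup c}}(e)>0$. The analogous formulas under the weights $e\pi^{\textup c}$ and $(1-e)\pi^{\textup c}$ handle the other pairs. For example,
\[
\bar e^{\textup c}_{\mathrm{ATT}}-\bar e^{\textup c}_{\mathrm{ATO}}=\frac{\var_{e\pi^{\textup c}}(e)}{\E_{e\pi^{\textup c}}(1-e)},
\]
obtained from $\E_w(e)-\E_{w(1-e)}(e)=\cov_w(e,e)/\E_w(1-e)$. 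This is the same algebra as Theorem~\ref{thm:diff-rep}(ii) with $\tau$ replaced by $e$. Finally, no IV assumptions are needed beyond positivity of $\pi^{\textup c}$ (Assumption~\ref{assump:iv}(iii)--(iv)) and overlap, which guarantee that the weights are positive and the centroids well defined.
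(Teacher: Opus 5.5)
Your proof is correct, and it takes a different logical route from the paper.

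\textbf{The paper's route.} It uses the normal equations only to say that each line passes through its own centroid (ATE, ATT, ATC respectively) and to write each slope as a weighted covariance over a weighted variance. It then computes two sets of gaps:
\begin{itemize}
\item the horizontal gaps $\bar e^{\textup c}_\star-\bar e^{\textup c}_{\star'}$, as variances over means;
\item the estimand gaps $\tau^{\textup c}_\star-\tau^{\textup c}_{\star'}$, as covariances over the same means, which is the algebra of Theorem~\ref{thm:diff-rep-local}.
\end{itemize}
Dividing these gives the slope identities. Only then does it use the slopes to show that each line also passes through the neighbouring centroid, which yields the intersections.

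\textbf{Your route.} You extract more from the normal equations themselves. The second equation, and its difference with the first, show directly that the $w$-weighted line passes through the centroids under $w$, $we$ and $w(1-e)$. Consequences:
\begin{itemize}
\item The intersections follow with no computation and no nondegeneracy condition.
\item The slope identities become plain difference quotients, so you never need the covariance representation of the estimand gaps.
\item You need only the variance formula for the horizontal gaps, to guarantee nonzero denominators. The paper leaves this point implicit in the phrase ``well defined.''
\end{itemize}

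\textbf{What each buys.} The paper's route keeps the link to its central theme visible: each estimand difference is displayed as a weighted covariance, so the sign of a slope is seen to be the covariance sign in Theorem~\ref{thm:diff-rep-local}. Your route is more economical and more general, since the two-centroid lemma holds for any positive weight $w(X)$. For example, it shows at once that the $\beta_{u,v}\pi^{\textup c}$-weighted line passes through the centroids indexed by $(u,v)$, $(u+1,v)$ and $(u,v+1)$. That is a geometric explanation of why a single $\hat\beta_{u,v}$-weighted fit diagnoses both adjacent beta comparisons in Section~\ref{subsec:general-cp-beta}.
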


\paragraph*{Sample-level interpretation.}
We next give the finite-sample counterpart for the local CP-plot in the
IV setting. As in Section~\ref{subsec:sample-cp-lines}, the result is purely
algebraic. It treats the estimated triples
\[
(\hat e_i,\hat\pi_i^{\textup c},\hat\tau_i^{\textup c}),
\qquad i=1,\ldots,n,
\]
as fixed numbers. Here \(\hat e_i\) is the estimated IV propensity score,
\(\hat\pi_i^{\textup c}\) is the estimated principal score for compliers, and
\(\hat\tau_i^{\textup c}\) is the estimated conditional complier treatment
effect.

Assume \(0<\hat e_i<1\) and \(\hat\pi_i^{\textup c}>0\) for all \(i\). Define
the plug-in local estimators as
\[
\hat\tau^{\textup c}_{\mathrm{ATE}}
=
\frac{\sum_{i=1}^n \hat\pi_i^{\textup c}\hat\tau_i^{\textup c}}
{\sum_{i=1}^n \hat\pi_i^{\textup c}},
\qquad
\hat\tau^{\textup c}_{\mathrm{ATT}}
=
\frac{\sum_{i=1}^n \hat e_i\hat\pi_i^{\textup c}\hat\tau_i^{\textup c}}
{\sum_{i=1}^n \hat e_i\hat\pi_i^{\textup c}},
\]
\[
\hat\tau^{\textup c}_{\mathrm{ATC}}
=
\frac{\sum_{i=1}^n (1-\hat e_i)\hat\pi_i^{\textup c}\hat\tau_i^{\textup c}}
{\sum_{i=1}^n (1-\hat e_i)\hat\pi_i^{\textup c}},
\]
and
\[
\hat\tau^{\textup c}_{\mathrm{ATO}}
=
\frac{
\sum_{i=1}^n
\hat e_i(1-\hat e_i)\hat\pi_i^{\textup c}\hat\tau_i^{\textup c}
}{
\sum_{i=1}^n
\hat e_i(1-\hat e_i)\hat\pi_i^{\textup c}
}.
\]
Let
\[
\widehat L_j^{\textup c}(e)
=
\widehat\alpha_j^{\textup c}
+
\widehat\beta_j^{\textup c}e,
\qquad
j\in\{\mathrm{all},\mathrm{tr},\mathrm{co}\},
\]
denote the weighted least-squares fits of
\(\hat\tau_i^{\textup c}\) on \(\hat e_i\), using weights
\[
w_{\mathrm{all},i}^{\textup c}=\hat\pi_i^{\textup c},
\qquad
w_{\mathrm{tr},i}^{\textup c}=\hat e_i\hat\pi_i^{\textup c},
\qquad
w_{\mathrm{co},i}^{\textup c}
=(1-\hat e_i)\hat\pi_i^{\textup c}.
\]
For
\(\star\in\{\mathrm{ATE},\mathrm{ATT},\mathrm{ATO},\mathrm{ATC}\}\),
define
\[
\widehat{\bar e}_{\star}^{\textup c}
=
\frac{
\sum_{i=1}^n h_{\star,i}\hat\pi_i^{\textup c}\hat e_i
}{
\sum_{i=1}^n h_{\star,i}\hat\pi_i^{\textup c}
},
\]
where
\[
h_{\mathrm{ATE},i}=1,\qquad
h_{\mathrm{ATT},i}=\hat e_i,\qquad
h_{\mathrm{ATO},i}=\hat e_i(1-\hat e_i),\qquad
h_{\mathrm{ATC},i}=1-\hat e_i.
\]
Thus,
\[
\left(
\widehat{\bar e}_{\star}^{\textup c},
\hat\tau_{\star}^{\textup c}
\right)
\]
is the corresponding empirical local weighted centroid.

\begin{theorem}
\label{thm:sample-local-cp-geometry}
Assume \(0<\hat e_i<1\) and \(\hat\pi_i^{\textup c}>0\) for all \(i\), and
that the three weighted least-squares fits are well defined. Whenever each
corresponding pair of lines is distinct, their unique intersections are
\[
\widehat L_{\mathrm{all}}^{\textup c}
\cap
\widehat L_{\mathrm{tr}}^{\textup c}
=
\left(
\widehat{\bar e}_{\mathrm{ATT}}^{\textup c},
\hat\tau_{\mathrm{ATT}}^{\textup c}
\right),
\]
\[
\widehat L_{\mathrm{tr}}^{\textup c}
\cap
\widehat L_{\mathrm{co}}^{\textup c}
=
\left(
\widehat{\bar e}_{\mathrm{ATO}}^{\textup c},
\hat\tau_{\mathrm{ATO}}^{\textup c}
\right),
\]
\[
\widehat L_{\mathrm{co}}^{\textup c}
\cap
\widehat L_{\mathrm{all}}^{\textup c}
=
\left(
\widehat{\bar e}_{\mathrm{ATC}}^{\textup c},
\hat\tau_{\mathrm{ATC}}^{\textup c}
\right).
\]
Moreover, the slopes of the three sample weighted least-squares fits
satisfy:
\begin{enumerate}[label=(\roman*),leftmargin=*,align=left]
\item For the complier-weighted fit,
\[
\widehat\beta_{\mathrm{all}}^{\textup c}
=
\frac{
\hat\tau_{\mathrm{ATT}}^{\textup c}
-
\hat\tau_{\mathrm{ATE}}^{\textup c}
}{
\widehat{\bar e}_{\mathrm{ATT}}^{\textup c}
-
\widehat{\bar e}_{\mathrm{ATE}}^{\textup c}
}
=
\frac{
\hat\tau_{\mathrm{ATE}}^{\textup c}
-
\hat\tau_{\mathrm{ATC}}^{\textup c}
}{
\widehat{\bar e}_{\mathrm{ATE}}^{\textup c}
-
\widehat{\bar e}_{\mathrm{ATC}}^{\textup c}
}.
\]
\item For the encouraged- and unencouraged-complier-weighted fits,
\[
\widehat\beta_{\mathrm{tr}}^{\textup c}
=
\frac{
\hat\tau_{\mathrm{ATT}}^{\textup c}
-
\hat\tau_{\mathrm{ATO}}^{\textup c}
}{
\widehat{\bar e}_{\mathrm{ATT}}^{\textup c}
-
\widehat{\bar e}_{\mathrm{ATO}}^{\textup c}
},
\qquad
\widehat\beta_{\mathrm{co}}^{\textup c}
=
\frac{
\hat\tau_{\mathrm{ATO}}^{\textup c}
-
\hat\tau_{\mathrm{ATC}}^{\textup c}
}{
\widehat{\bar e}_{\mathrm{ATO}}^{\textup c}
-
\widehat{\bar e}_{\mathrm{ATC}}^{\textup c}
}.
\]
\end{enumerate}
\end{theorem}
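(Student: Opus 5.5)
The statement to prove is Theorem~\ref{thm:sample-local-cp-geometry}. Since it is purely algebraic, I would prove one general lemma about weighted least squares with nested weights and then apply it to each pair of fits.

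\textbf{The lemma.} Fix positive numbers \(a_i\) and points \((x_i,y_i)\). Let \(\widehat L_a\) be the WLS line of \(y\) on \(x\) with weights \(a_i\), and \(\widehat L_{ax}\) the line with weights \(a_ix_i\). I claim both lines pass through the \((a_ix_i)\)-weighted centroid
\[
\left(\frac{\sum a_ix_i^2}{\sum a_ix_i},\ \frac{\sum a_ix_iy_i}{\sum a_ix_i}\right).
\]
\begin{itemize}
\item \(\widehat L_{ax}\) passes through it because a WLS line with an intercept always passes through its own weighted centroid.
\item For \(\widehat L_a\), the normal equation for the slope is \(\sum a_i x_i\{y_i-\hat\alpha-\hat\beta x_i\}=0\). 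Dividing by \(\sum a_ix_i\) shows the \((a_ix_i)\)-weighted mean of \(y\) equals \(\hat\alpha+\hat\beta\) times the \((a_ix_i)\)-weighted mean of \(x\).
\end{itemize}
The same argument with weights \(a_i\) and \(a_i(1-x_i)\) uses both normal equations of \(\widehat L_a\). Their difference, \(\sum a_i(1-x_i)\{\cdot\}=0\), shows that \(\widehat L_a\) passes through the \(a_i(1-x_i)\)-weighted centroid.

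\textbf{The intersections.} Take \(x_i=\hat e_i\) and \(y_i=\hat\tau^{\textup c}_i\).
\begin{itemize}
\item With \(a_i=\hat\pi^{\textup c}_i\), the lemma says \(\widehat L^{\textup c}_{\mathrm{all}}\) and \(\widehat L^{\textup c}_{\mathrm{tr}}\) both contain \((\widehat{\bar e}^{\textup c}_{\mathrm{ATT}},\hat\tau^{\textup c}_{\mathrm{ATT}})\). Likewise \(\widehat L^{\textup c}_{\mathrm{all}}\) and \(\widehat L^{\textup c}_{\mathrm{co}}\) both contain the ATC centroid.
\item With \(a_i=\hat\pi^{\textup c}_i\hat e_i\), the complement form of the lemma gives that \(\widehat L^{\textup c}_{\mathrm{tr}}\) passes through the \(\hat\pi^{\textup c}_i\hat e_i(1-\hat e_i)\)-weighted centroid, which is the ATO centroid. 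Symmetrically, with \(a_i=\hat\pi^{\textup c}_i(1-\hat e_i)\) and the \(x\)-weighted form, \(\widehat L^{\textup c}_{\mathrm{co}}\) passes through the ATO centroid.
\end{itemize}
If two distinct lines share a point, that point is their unique intersection, which proves the first part.

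\textbf{The slopes.} A line passes through its own centroid and through the centroids identified above. So each slope is the difference quotient between any two of these points, provided the horizontal coordinates differ.
\begin{itemize}
\item \(\widehat L^{\textup c}_{\mathrm{all}}\) contains the ATE, ATT and ATC centroids, which gives part (i).
\item \(\widehat L^{\textup c}_{\mathrm{tr}}\) contains the ATT and ATO centroids, and \(\widehat L^{\textup c}_{\mathrm{co}}\) contains the ATO and ATC centroids, which gives part (ii).
\end{itemize}

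\textbf{Nondegeneracy of the denominators.} The one point to check is that these horizontal gaps are nonzero. Each gap is a ratio of a weighted variance of \(\hat e\) to a positive constant. For example,
\[
\widehat{\bar e}^{\textup c}_{\mathrm{ATT}}-\widehat{\bar e}^{\textup c}_{\mathrm{ATE}}=\frac{\widehat{\var}_{\hat\pi}(\hat e)}{\widehat{\bar e}^{\textup c}_{\mathrm{ATE}}}.
\]
The assumption that the fits are well defined requires \(\hat e_i\) to be nonconstant. Because all weights are positive, each weighted variance is then strictly positive, so every denominator is nonzero.
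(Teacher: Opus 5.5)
Your proof is correct, but it runs in the reverse order from the paper's. The paper first computes each vertical gap explicitly. For example, $\hat\tau^{\textup c}_{\mathrm{ATT}}-\hat\tau^{\textup c}_{\mathrm{ATE}}=\widehat\beta^{\textup c}_{\mathrm{all}}\hat V^{\textup c}_{\mathrm{all}}/\sum_i\hat e_i\hat\pi^{\textup c}_i$, where $\hat V_w=\sum_i w_i(\hat e_i-\bar e_w)^2$ and $\bar e_w$ is the $w$-weighted mean of the $\hat e_i$. It then shows the matching horizontal gap equals $\hat V^{\textup c}_{\mathrm{all}}/\sum_i\hat e_i\hat\pi^{\textup c}_i$ and divides to get the slope identities. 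Only after that does it combine ``each line passes through its own centroid'' with those identities to place the ATT, ATO and ATC centroids on the other lines.

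You go the other way. The two normal equations (weighted residuals sum to zero against $a_i$ and $a_ix_i$, hence against $a_i(1-x_i)$) give the incidences directly. The intersections then follow immediately, and the slopes drop out as difference quotients, so the only computation left is positivity of the horizontal gaps.

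Your route is shorter and never evaluates the vertical gaps. It also isolates the structural reason the construction works: the three weights are nested through the factors $\hat e_i$ and $1-\hat e_i$. As a bonus, the same lemma with $\hat\pi^{\textup c}_i\equiv 1$ proves Theorem~\ref{thm:sample-cp-geometry}. The paper's route, in exchange, exhibits each estimand gap directly as a weighted sample covariance $\sum_i w_i(\hat e_i-\bar e_w)\hat\tau^{\textup c}_i$ over a positive normalizer, the sample analog of Theorem~\ref{thm:diff-rep-local}. In your version this is only a corollary of the slope identity.

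One thing to tighten: you display only the ATT--ATE horizontal gap. The other three are $\hat V^{\textup c}_{\mathrm{all}}/\sum_i(1-\hat e_i)\hat\pi^{\textup c}_i$, $\hat V^{\textup c}_{\mathrm{tr}}/\sum_i\hat e_i(1-\hat e_i)\hat\pi^{\textup c}_i$ and $\hat V^{\textup c}_{\mathrm{co}}/\sum_i\hat e_i(1-\hat e_i)\hat\pi^{\textup c}_i$. They follow from the same tilting computation and should be written out, since each positive denominator is needed for the corresponding slope identity.
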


Theorem~\ref{thm:sample-local-cp-geometry} is the exact sample counterpart
of Theorem~\ref{thm:population-local-cp-geometry}. Its slope identities and
intersection results provide the finite-sample algebraic justification for
the three weighted fits in the local CP-plot.

\subsection{Closed-form results under a linear CP function}

We next consider a special case in which the CP function is linear.

\begin{assumption}\label{assump:linear}
Suppose that
$
g(e)=a+be$
for some constants \(a\) and \(b\).
\end{assumption}

Assumption~\ref{assump:linear} is a special case of Assumption~\ref{assump:mono}, with $b\ge 0$ corresponding to (i) and $b\le0$ corresponding to (ii). Under Assumption~\ref{assump:linear}, $\tauow$ is a convex combination of $\tauatt$ and $\tauatc$.

\begin{prop}\label{prop:linear-ow}
Assume $e(X)\in(0,1)$ and Assumption~\ref{assump:linear}. Then
\begin{align}\label{eq:ow= l atc+ 1-l att}
\tauow=\lambda\cdot \tauatt+(1-\lambda)\cdot \tauatc,
\end{align}
where $\lambda\in[0,1]$. When $e(X)$ is constant, we have $\tauow=\tauatt=\tauatc$; when $e(X)$ is not constant, we have 
\[
\lambda
=\frac{\E\{ e(X)\}\E\{ 1-e(X)\}}
{\E \{e(X)(1-e(X))\}\var\{e(X)\}}
\E[\{1-e(X)\}\{e(X)-c^*\}^2]\in [0, 1],
\]
where
\[
c^*
=
\arg\min_{c\in\mathbb R}
\E\big[\{1-e(X)\}\{e(X)-c\}^2\big]=
\frac{\E [e(X)\{1-e(X)\}]}{\E\{1-e(X)\}}.
\]
\end{prop}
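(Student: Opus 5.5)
The approach is to reduce every estimand to a weighted mean of the propensity score itself. Each of \(\tauatt\), \(\tauatc\), \(\tauow\) has the form \(\E\{h(X)\tau(X)\}/\E\{h(X)\}\) with \(h\) a function of \(e(X)\). Conditioning on \(e(X)\) therefore replaces \(\tau(X)\) by \(g(e(X))=a+be(X)\), and
\[
\tau_\star=a+b\,\bar e_\star,\qquad \star\in\{\mathrm{ATT},\mathrm{ATO},\mathrm{ATC}\},
\]
with \(\bar e_\star\) the centroids defined before Theorem~\ref{thm:population-cp-geometry}. The identity \eqref{eq:ow= l atc+ 1-l att} then reduces to
\[
\bar e_{\mathrm{ATO}}=\lambda\bar e_{\mathrm{ATT}}+(1-\lambda)\bar e_{\mathrm{ATC}},
\]
a statement about the law of \(e(X)\) alone. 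If \(e(X)\) is constant, all three centroids coincide, so \(\tauow=\tauatt=\tauatc\) and any \(\lambda\) works. Otherwise I will define
\[
\lambda=\frac{\bar e_{\mathrm{ATO}}-\bar e_{\mathrm{ATC}}}{\bar e_{\mathrm{ATT}}-\bar e_{\mathrm{ATC}}},
\]
which is then valid whether or not \(b=0\).

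For the sign facts I will apply Theorem~\ref{thm:diff-rep} with \(\tau(X)\) replaced by \(e(X)\); this is legitimate because the proof uses only the weighting identity \eqref{eq:exax}. It gives:
\begin{itemize}
\item \(\bar e_{\mathrm{ATT}}-\bar e_{\mathrm{ATO}}=\var\{e(X)\mid Z=1\}/\E\{1-e(X)\mid Z=1\}\ge 0\);
\item \(\bar e_{\mathrm{ATO}}-\bar e_{\mathrm{ATC}}=\var\{e(X)\mid Z=0\}/\E\{e(X)\mid Z=0\}\ge 0\);
\item \(\bar e_{\mathrm{ATT}}-\bar e_{\mathrm{ATC}}=\var\{e(X)\}/(p(1-p))>0\) when \(e(X)\) is non-constant, from part (i).
\end{itemize}
Hence the denominator of \(\lambda\) is positive, and \(\lambda\in[0,1]\) follows directly.

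It remains to match \(\lambda\) with the stated closed form, using moments. Write \(\mu=\E\{e(X)\}\), \(m_2=\E\{e(X)^2\}\), \(m_3=\E\{e(X)^3\}\). Then
\[
\bar e_{\mathrm{ATC}}=\frac{\mu-m_2}{1-\mu},\qquad
\bar e_{\mathrm{ATO}}=\frac{m_2-m_3}{\mu-m_2},
\]
so
\[
\bar e_{\mathrm{ATO}}-\bar e_{\mathrm{ATC}}=\frac{(m_2-m_3)(1-\mu)-(\mu-m_2)^2}{(\mu-m_2)(1-\mu)}.
\]
On the other side, minimizing the weighted quadratic in \(c\) gives \(c^*=(\mu-m_2)/(1-\mu)\). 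Expanding,
\[
\E[\{1-e(X)\}\{e(X)-c^*\}^2]=(m_2-m_3)-\frac{(\mu-m_2)^2}{1-\mu}.
\]
Substituting these two expressions and \(\bar e_{\mathrm{ATT}}-\bar e_{\mathrm{ATC}}=\var\{e(X)\}/\{\mu(1-\mu)\}\) yields exactly the displayed formula, since \(\E[e(X)\{1-e(X)\}]=\mu-m_2\).

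The main obstacle is this last algebraic matching. In particular, the variance-type numerator must be recognized as \((1-\mu)\) times the weighted residual sum of squares at \(c^*\). The other steps are direct consequences of Theorem~\ref{thm:diff-rep}.
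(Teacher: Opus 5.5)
Your proof is correct. For the identity and the closed form it follows the paper's route: substitute $g(e)=a+be$, reduce the claim to $\bar e_{\mathrm{ATO}}=\lambda\bar e_{\mathrm{ATT}}+(1-\lambda)\bar e_{\mathrm{ATC}}$, solve for $\lambda$, and expand in the moments of $e(X)$. Defining $\lambda$ through the centroids also absorbs the paper's separate $b=0$ case.

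The genuine difference is how you obtain $\lambda\in[0,1]$. The paper reads $\lambda\ge 0$ off the closed form. It then gets $\lambda\le 1$ by interchanging $e(X)$ and $1-e(X)$, which rewrites $1-\lambda$ in the same form with $c^\dagger$. You get both bounds at once from Theorem~\ref{thm:diff-rep} applied to $e(X)$, that is, from $\var\{e(X)\mid Z=z\}\ge 0$.

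Your route also makes the moment algebra you flag as the main obstacle unnecessary. Since $c^*=\E\{e(X)\mid Z=0\}$, Lemma~\ref{lemma:z1ex} gives $\E[\{1-e(X)\}\{e(X)-c^*\}^2]=(1-p)\var\{e(X)\mid Z=0\}$ and $\E\{e(X)\mid Z=0\}=\E[e(X)\{1-e(X)\}]/(1-p)$. Dividing your two displayed gaps then yields the stated $\lambda$ directly. The treated-group gap reproduces the paper's formula for $1-\lambda$ in the same way, with $c^\dagger=\E\{1-e(X)\mid Z=1\}$. Altogether,
\[
\lambda=\frac{(1-p)\var\{e(X)\mid Z=0\}}{p\var\{e(X)\mid Z=1\}+(1-p)\var\{e(X)\mid Z=0\}}.
\]
This form makes the bounds transparent. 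It uses the same within-group variance weights as Proposition~\ref{prop:cp-slope-convex}, with the roles of the two groups exchanged. The paper's route, by contrast, is self-contained moment algebra and keeps $\lambda$ in the unconditional form that it compares with the finite-strata expression of Humphreys.
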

The displayed formula for \(\lambda\) makes \(\lambda\ge 0\) immediate. To see
that \(\lambda\le 1\), interchange \(e(X)\) and \(1-e(X)\). Then \(1-\lambda\)
has the same form:
\[
1-\lambda
=
\frac{\E\{ e(X)\}\E\{ 1-e(X)\}}
{\E \{e(X)(1-e(X))\}\var\{e(X)\}}
\E\big[e(X)\{1-e(X)-c^\dagger\}^2\big],
\]
where
$c^\dagger
={\E[e(X)\{1-e(X)\}]}/{p}.
$
Hence \(1-\lambda\ge 0\), and therefore \(\lambda\in[0,1]\).

In the fixed-effects setting, \citet{humphreys2025bounds} derives a similar result, where \(e(X)\) takes finitely many values \(e_{[k]}\), and the expectations above reduce to finite weighted sums over strata. Under a linear relation between the stratum-level average effects and the assignment shares, the resulting mixing weight is the population propensity-score version of the finite-strata expression in \citet[Equation~(11)]{humphreys2025bounds}.
Proposition~\ref{prop:linear-ow} is related to, but distinct from, the result in \citet{sloczynski2022interpreting}, which studies the coefficient of \(Z\) from the population least-squares regression of \(Y\) on an intercept, \(Z\), and \(X\). Under his conditions, the coefficient of $Z$ also admits a convex-combination representation in $\tauatt$ and $\tauatc$, but with a mixing weight that depends primarily on the overall treated and control shares, where the smaller group receives more weight. Here, in contrast, the mixing weight \(\lambda\) depends on the full propensity-score distribution through
$
\E[\{1-e(X)\}\{e(X)-c^*\}^2],
$ so \(\tauow\) depends not only on the proportion of treated, but also on how propensity score varies across the population. Two populations with the same treated share can therefore produce different values of \(\lambda\) if their propensity-score distributions differ.

Under the linear specification in Assumption~\ref{assump:linear}, adjacent beta-weighted estimands satisfy a convex-combination relation.

\begin{prop}\label{prop:linear beta}
Assume \(e(X)\in(0,1)\) and Assumption~\ref{assump:linear}. For real numbers
\(u,v\ge1\),
\begin{align}\label{eq:tau u+1 v+1 lambda tau u+1 v +(1-lambda) tau u, v+1}
\tau_{u+1,v+1}
=\lambda\cdot \tau_{u+1,v}+(1-\lambda)\cdot \tau_{u,v+1},
\end{align}
where $\lambda\in[0,1]$. When $e(X)$ is constant, we have $\tau_{u+1,v+1}=\tau_{u+1,v}=\tau_{u,v+1}$;
when $e(X)$ is not constant, we have
\begin{align}\label{eq:lambda beta}
\lambda=\frac{\dfrac{\E[\beta_{u+2,v+1}(X)]}{\E[\beta_{u+1,v+1}(X)]}-\dfrac{\E[\beta_{u+1,v+1}(X)]}{\E[\beta_{u,v+1}(X)]}}{\dfrac{\E[\beta_{u+2,v}(X)]}{\E[\beta_{u+1,v}(X)]}-\dfrac{\E[\beta_{u+1,v+1}(X)]}{\E[\beta_{u,v+1}(X)]}}\in [0, 1].
\end{align}
\end{prop}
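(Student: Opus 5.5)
Under Assumption~\ref{assump:linear}, every beta-weighted estimand depends on the CATE only through \(g\). By iterated expectations on \(e(X)\), \(\tau_{u,v}=\E\{\beta_{u,v}(X)g(e(X))\}/\E\{\beta_{u,v}(X)\}=a+b\,m_{u,v}\), where
\[
m_{u,v}=\frac{\E\{\beta_{u+1,v}(X)\}}{\E\{\beta_{u,v}(X)\}}
\]
is the mean propensity score under the \((u,v)\) target distribution. If \(b=0\), all three estimands equal \(a\) and any \(\lambda\) works, in particular the displayed one. The constant-\(e(X)\) case is also trivial, since then all \(m_{u,v}\) coincide. So the identity reduces to the purely distributional statement
\[
m_{u+1,v+1}=\lambda m_{u+1,v}+(1-\lambda)m_{u,v+1}.
\]
It therefore suffices to define \(\lambda=(m_{u+1,v+1}-m_{u,v+1})/(m_{u+1,v}-m_{u,v+1})\), which is exactly \eqref{eq:lambda beta}, and to show that the denominator is nonzero and that \(\lambda\in[0,1]\).

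For the bracketing, I would use the adjacent-comparison identities following Theorem~\ref{thm:beta-one-coordinate-diff}, applied with \(\tau(X)\) replaced by \(e(X)\) itself. Those identities are algebraic in the target function, so they hold for any integrable function of \(X\). With \(\tau=e\), the estimand \(\tau_{u,v}\) becomes \(m_{u,v}\), and each covariance becomes a weighted variance of \(e(X)\), which is positive when \(e(X)\) is nonconstant. This gives two inequalities.

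First, \(m_{u+1,v+1}-m_{u,v+1}\) is a positive multiple of \(\var_{\beta_{u,v+1}}\{e(X)\}\), so it is \(>0\).

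Second, \(m_{u+1,v}-m_{u+1,v+1}\) is a positive multiple of \(\var_{\beta_{u+1,v}}\{e(X)\}\), so it is also \(>0\).

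Hence \(m_{u,v+1}<m_{u+1,v+1}<m_{u+1,v}\). The denominator \(m_{u+1,v}-m_{u,v+1}\) is then positive, and the numerator lies between \(0\) and the denominator, giving \(\lambda\in(0,1)\).

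The main point needing care is that the weighted variances are strictly positive. The beta weights are positive on \((0,1)\), so a weighted variance of \(e(X)\) vanishes only if \(e(X)\) is almost surely constant, which is the excluded case.

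Alternatively, one could verify the sign directly by the one-line covariance calculation \(m_{u+1,v}-m_{u,v}=\cov_{\beta_{u,v}}(e,e)/\E_{\beta_{u,v}}(e)\) and its \(v\)-analogue, with no appeal to the theorem. Finally, I would check that the expression \eqref{eq:lambda beta} matches \((m_{u+1,v+1}-m_{u,v+1})/(m_{u+1,v}-m_{u,v+1})\) after substituting the definition of \(m\). This is immediate, since \(m_{u+1,v+1}=\E\beta_{u+2,v+1}/\E\beta_{u+1,v+1}\), \(m_{u,v+1}=\E\beta_{u+1,v+1}/\E\beta_{u,v+1}\), and \(m_{u+1,v}=\E\beta_{u+2,v}/\E\beta_{u+1,v}\).
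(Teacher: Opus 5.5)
Your proof is correct and follows essentially the same route as the paper's proof: use the linear CP function to reduce the claim to the interpolation identity $m_{u+1,v+1}=\lambda m_{u+1,v}+(1-\lambda)m_{u,v+1}$ for the beta-weighted mean propensity scores, then use the strict ordering $m_{u,v+1}<m_{u+1,v+1}<m_{u+1,v}$ to get $\lambda\in(0,1)$. Your weighted-variance identities supply the computation behind the paper's one-line claim that tilting by $e(X)$ raises the mean propensity score and tilting by $1-e(X)$ lowers it.
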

The weight \(\lambda\) depends only on the distribution of the propensity score.
It measures where the beta-weighted mean propensity score under
\(\beta_{u+1,v+1}\) lies between the corresponding means under
\(\beta_{u,v+1}\) and \(\beta_{u+1,v}\). The ordering of these
beta-weighted mean propensity scores ensures that the interpolation is convex,
so \(\lambda\in[0,1]\).

\subsection{Local beta weights in the IV setting}\label{subsec:local_beta}
The beta-weight construction also extends to local causal effects. For real numbers $u,v\ge 1$, define
\[
\tau^{\textup{c}}_{u,v}
=\frac{\E\!\left[\beta_{u,v}(X)\,\pi^{\textup{c}}(X)\,\tau^{\textup{c}}(X)\right]}
{\E\!\left[\beta_{u,v}(X)\,\pi^{\textup{c}}(X)\right]}.
\]
This family includes the local ATE-, ATT-, ATC-, and ATO-type estimands introduced earlier:
\[
\tau^{\textup{c}}_{1,1}=\tau^{\textup{c}}_{\textup{ATE}},\qquad
\tau^{\textup{c}}_{2,1}=\tau^{\textup{c}}_{\textup{ATT}},\qquad
\tau^{\textup{c}}_{1,2}=\tau^{\textup{c}}_{\textup{ATC}},\qquad
\tau^{\textup{c}}_{2,2}=\tau^{\textup{c}}_{\textup{ATO}}.
\]
For notational simplicity, define
\[
r^{\textup c}_{u,u';v}
=
\left[
\frac{\E\{\beta_{u',v}(X)\pi^{\textup c}(X)\}}
{\E\{\beta_{u,v}(X)\pi^{\textup c}(X)\}}
\right]^{1/(u'-u)},
\qquad
r^{\textup c}_{u;v,v'}
=
\left[
\frac{\E\{\beta_{u,v'}(X)\pi^{\textup c}(X)\}}
{\E\{\beta_{u,v}(X)\pi^{\textup c}(X)\}}
\right]^{1/(v'-v)}.
\]

\begin{theorem}\label{thm:local-beta-one-coordinate-diff}
Assume Assumption~\ref{assump:iv} and \(e(X)\in(0,1)\).
\begin{enumerate}[label=(\roman*),leftmargin=*, align=left]
\item Let \(u'>u\ge 1\) and \(v\ge 1\). Define
\[
w^{\textup c}_{u,u';v}(X)
=
\begin{cases}
\displaystyle
\beta_{u,v}(X)\pi^{\textup c}(X)
\frac{
e(X)^{u'-u}
-
\{r^{\textup c}_{u,u';v}\}^{u'-u}
}{
e(X)-r^{\textup c}_{u,u';v}
},
&
\displaystyle
e(X)\ne r^{\textup c}_{u,u';v},
\\[1.4em]
\displaystyle
\beta_{u,v}(X)\pi^{\textup c}(X)
(u'-u)
\{r^{\textup c}_{u,u';v}\}^{u'-u-1},
&
\displaystyle
e(X)=r^{\textup c}_{u,u';v}.
\end{cases}
\]
Then \(w^{\textup c}_{u,u';v}(X)>0\), and
\[
\tau^{\textup c}_{u',v}-\tau^{\textup c}_{u,v}
=
\frac{
\E\{w^{\textup c}_{u,u';v}(X)\}\,
\cov_{w^{\textup c}_{u,u';v}}\{\tau^{\textup c}(X),e(X)\}
}{
\E\{\beta_{u',v}(X)\pi^{\textup c}(X)\}
}.
\]

\item Let \(v'>v\ge 1\) and \(u\ge 1\). Define
\[
w^{\textup c}_{u;v,v'}(X)
=
\begin{cases}
\displaystyle
\beta_{u,v}(X)\pi^{\textup c}(X)
\frac{
\{r^{\textup c}_{u;v,v'}\}^{v'-v}
-
\{1-e(X)\}^{v'-v}
}{
r^{\textup c}_{u;v,v'}-\{1-e(X)\}
},
&
\displaystyle
e(X)\ne 1-r^{\textup c}_{u;v,v'},
\\[1.4em]
\displaystyle
\beta_{u,v}(X)\pi^{\textup c}(X)
(v'-v)
\{r^{\textup c}_{u;v,v'}\}^{v'-v-1},
&
\displaystyle
e(X)=1-r^{\textup c}_{u;v,v'}.
\end{cases}
\]
Then \(w^{\textup c}_{u;v,v'}(X)>0\), and
\[
\tau^{\textup c}_{u,v}-\tau^{\textup c}_{u,v'}
=
\frac{
\E\{w^{\textup c}_{u;v,v'}(X)\}\,
\cov_{w^{\textup c}_{u;v,v'}}\{\tau^{\textup c}(X),e(X)\}
}{
\E\{\beta_{u,v'}(X)\pi^{\textup c}(X)\}
}.
\]
\end{enumerate}
\end{theorem}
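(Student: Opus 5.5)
The plan is to reduce Theorem~\ref{thm:local-beta-one-coordinate-diff} to the same algebra as Theorem~\ref{thm:beta-one-coordinate-diff}. We replace every beta weight \(\beta_{u,v}(X)\) by the complier-tilted weight \(\beta_{u,v}(X)\pi^{\textup c}(X)\) and replace \(\tau(X)\) by \(\tau^{\textup c}(X)\). The IV structure enters only through the definition of \(\tau^{\textup c}_{u,v}\) as a ratio of expectations. That definition is already in weighted-CATE form, and Assumption~\ref{assump:iv} is used only to guarantee \(\pi^{\textup c}(X)>0\) almost surely, so all normalizers are positive and finite. We will prove part (i) in detail; part (ii) follows by the substitution \(e(X)\mapsto 1-e(X)\), with the sign flip accounting for the reversed order \(\tau^{\textup c}_{u,v}-\tau^{\textup c}_{u,v'}\).

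For part (i), write \(k=u'-u>0\), \(r=r^{\textup c}_{u,u';v}\), \(A=\E\{\beta_{u,v}(X)\pi^{\textup c}(X)\}\) and \(A'=\E\{\beta_{u',v}(X)\pi^{\textup c}(X)\}=r^kA\). The normalized weight difference is
\[
\Delta(X)=\frac{\beta_{u',v}(X)\pi^{\textup c}(X)}{A'}-\frac{\beta_{u,v}(X)\pi^{\textup c}(X)}{A}
=\frac{\beta_{u,v}(X)\pi^{\textup c}(X)\{e(X)^k-r^k\}}{A'} .
\]
This uses \(\beta_{u',v}=\beta_{u,v}e^k\) and \(1/A=r^k/A'\). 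Using the divided difference, \(e(X)^k-r^k=\{e(X)-r\}\,w^{\textup c}_{u,u';v}(X)/\{\beta_{u,v}(X)\pi^{\textup c}(X)\}\), which holds also at \(e(X)=r\), where both sides vanish. Hence \(\Delta(X)=w^{\textup c}_{u,u';v}(X)\{e(X)-r\}/A'\).

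Positivity of \(w^{\textup c}_{u,u';v}\) follows because \(x\mapsto x^k\) is strictly increasing on \((0,\infty)\), so its divided difference between two positive points is strictly positive. The derivative \(kr^{k-1}\) at a coincident point is also positive. Here \(r>0\) because \(A,A'>0\), since \(e(X)\in(0,1)\) and \(\pi^{\textup c}>0\). The factor \(\beta_{u,v}(X)\pi^{\textup c}(X)\) is positive for the same reasons. Integrability of \(w^{\textup c}_{u,u';v}\) follows from boundedness: \(e\in(0,1)\) and \(r\le 1\) (as \(A'\le A\)), so the divided difference is bounded by \(k\max(1,r^{k-1})\) when \(k\ge1\), or by a mean-value bound in general.

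Next, \(\E\{\Delta(X)\}=0\) because both weights are normalized. This gives \(\E[w^{\textup c}_{u,u';v}(X)\{e(X)-r\}]=0\), that is, \(r=\E_{w^{\textup c}_{u,u';v}}\{e(X)\}\). Then
\[
\tau^{\textup c}_{u',v}-\tau^{\textup c}_{u,v}=\E\{\Delta(X)\tau^{\textup c}(X)\}
=\frac{\E[w^{\textup c}_{u,u';v}(X)\{e(X)-r\}\tau^{\textup c}(X)]}{A'} .
\]
Since \(r\) is the \(w\)-weighted mean of \(e(X)\), the numerator equals \(\E\{w^{\textup c}_{u,u';v}(X)\}\cov_{w^{\textup c}_{u,u';v}}\{\tau^{\textup c}(X),e(X)\}\), which is the claim. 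We need \(\tau^{\textup c}\) to be integrable under these weights; we assume this implicitly, as the paper does.

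Part (ii) repeats the argument with \(s=r^{\textup c}_{u;v,v'}\), \(\beta_{u,v'}=\beta_{u,v}(1-e)^{k}\) and \(k=v'-v\). Now
\[
\Delta(X)=\frac{\beta_{u,v}(X)\pi^{\textup c}(X)\{(1-e(X))^k-s^k\}}{\E\{\beta_{u,v'}(X)\pi^{\textup c}(X)\}},
\]
and \((1-e)^k-s^k=\{(1-e)-s\}\times(\text{divided difference})=-\{e-(1-s)\}\times(\text{positive})\). Centering shows \(1-s\) is the \(w\)-mean of \(e(X)\). The minus sign converts \(\tau_{u,v'}-\tau_{u,v}\) into the stated \(\tau^{\textup c}_{u,v}-\tau^{\textup c}_{u,v'}\).

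The main obstacle is merely bookkeeping: checking the identity \(A'=r^kA\), the treatment of the tie case \(e(X)=r\), and integrability. There is no genuine conceptual difficulty, since the argument is the main-text beta-weight proof with \(\pi^{\textup c}\) absorbed into the base weight.
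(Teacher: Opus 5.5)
Your proposal is correct and follows essentially the same route as the paper's proof. Both absorb \(\pi^{\textup c}(X)\) into the base weight and factor the normalized weight difference as \(w^{\textup c}_{u,u';v}(X)\{e(X)-r\}/\E\{\beta_{u',v}(X)\pi^{\textup c}(X)\}\) via the divided difference of \(x\mapsto x^{u'-u}\). Both then use the zero-mean property to identify \(r\) as the \(w\)-weighted mean of \(e(X)\), read off the weighted covariance, and handle part (ii) by the mirror substitution \(e(X)\mapsto 1-e(X)\). Your extra remarks on boundedness and integrability are bookkeeping that the paper leaves implicit.
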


Theorem~\ref{thm:local-beta-one-coordinate-diff} is the local analog of
Theorem~\ref{thm:beta-one-coordinate-diff}. It shows that one-coordinate
changes in the beta-weighted local estimands are governed by weighted
covariances between the complier CATE and the IV propensity score. Because the
weights in Theorem~\ref{thm:local-beta-one-coordinate-diff} are proportional to
\(\pi^{\textup c}(X)\) times functions of \(e(X)\), monotonicity of the local CP
function \(g^{\textup c}(e)\) determines the signs of these covariances. This
yields the following monotonicity result for the beta-weighted LATEs.

\begin{coro}\label{coro:cate beta weight}
Assume Assumption~\ref{assump:iv} and \(e(X)\in(0,1)\). For real numbers \(u,v\ge 1\), under Assumption~\ref{assump:iv mono}(i), the quantity \(\tau_{u,v}^{\textup{c}}\) is non-decreasing in \(u\) and non-increasing in \(v\); under Assumption~\ref{assump:iv mono}(ii), the quantity \(\tau_{u,v}^{\textup{c}}\) is non-increasing in \(u\) and non-decreasing in \(v\).
\end{coro}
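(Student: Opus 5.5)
The plan is to deduce Corollary~\ref{coro:cate beta weight} from Theorem~\ref{thm:local-beta-one-coordinate-diff} by signing the weighted covariances through the local CP function $g^{\textup c}$. We treat the case of Assumption~\ref{assump:iv mono}(i); case (ii) follows by replacing $\tau^{\textup c}$ with $-\tau^{\textup c}$, or by reversing every inequality.

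Fix $u'>u\ge1$ and $v\ge1$. By Theorem~\ref{thm:local-beta-one-coordinate-diff}(i), $\tau^{\textup c}_{u',v}-\tau^{\textup c}_{u,v}$ equals a positive factor times $\cov_{w}\{\tau^{\textup c}(X),e(X)\}$ with $w=w^{\textup c}_{u,u';v}$. The key structural observation is that $w(X)=\pi^{\textup c}(X)\,\phi(e(X))$, where $\phi(e)=e^{u-1}(1-e)^{v-1}\cdot(e^{u'-u}-r^{u'-u})/(e-r)$ is a positive function of $e$ alone. We then condition on $e(X)$ in the $w$-weighted expectation:
\[
\E\{w(X)\tau^{\textup c}(X)\mid e(X)\}=\phi(e(X))\,\E\{\pi^{\textup c}(X)\tau^{\textup c}(X)\mid e(X)\}=\phi(e(X))\,\E\{\pi^{\textup c}(X)\mid e(X)\}\,g^{\textup c}(e(X)),
\]
using the second expression for $g^{\textup c}$ in \eqref{eq:def_gc}. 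Hence $\E\{w\tau^{\textup c}\}=\E\{w\, g^{\textup c}(e(X))\}$, and likewise $\E\{w\tau^{\textup c}e\}=\E\{w\, g^{\textup c}(e)e\}$. This gives
\[
\cov_w\{\tau^{\textup c}(X),e(X)\}=\cov_w\{g^{\textup c}(e(X)),e(X)\}.
\]

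It then remains to show that the covariance between a non-decreasing function of $e$ and $e$ itself is nonnegative under any probability measure. This is the standard Chebyshev association inequality. Taking an independent copy $X'$ under the $w$-tilted law,
\[
\cov_w\{g^{\textup c}(e),e\}=\tfrac12\E_w\big[\{g^{\textup c}(e)-g^{\textup c}(e')\}(e-e')\big]\ge0 .
\]
Therefore $\tau^{\textup c}_{u,v}$ is non-decreasing in $u$.

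The $v$ direction is identical. Theorem~\ref{thm:local-beta-one-coordinate-diff}(ii) gives $\tau^{\textup c}_{u,v}-\tau^{\textup c}_{u,v'}$ as a positive multiple of $\cov_{w'}\{\tau^{\textup c},e\}$, where $w'=\pi^{\textup c}(X)\,\psi(e(X))$ for a positive function $\psi$. The same two steps show this covariance is nonnegative, so $\tau^{\textup c}_{u,v}$ is non-increasing in $v$.

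The main point needing care is the reduction $\cov_w(\tau^{\textup c},e)=\cov_w(g^{\textup c}(e),e)$. It is valid only because the tilting weight factors as $\pi^{\textup c}$ times a function of $e$; this factorization is exactly what makes $g^{\textup c}$, which is the $\pi^{\textup c}$-weighted conditional mean, the relevant object. Integrability and the well-definedness of $g^{\textup c}$ follow from $e\in(0,1)$, boundedness of $\phi$ and $\psi$ on $[0,1]$, and $\pi^{\textup c}>0$.
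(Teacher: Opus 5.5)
Your proof is correct, but it takes a different route from the paper's proof of this corollary.

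\emph{Your route.} You go through Theorem~\ref{thm:local-beta-one-coordinate-diff}: each one-coordinate gap is a positive multiple of $\cov_w\{\tau^{\textup c}(X),e(X)\}$. The factorization $w=\pi^{\textup c}\,\phi(e)$ lets you replace $\tau^{\textup c}$ by $g^{\textup c}(e)$, and Chebyshev's association inequality then signs the covariance. This is the local analogue of how the paper proves Corollary~\ref{coro:beta weight}.

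\emph{The paper's route.} For this local corollary the paper does not use the covariance theorem at all. It writes $\tau^{\textup c}_{u,v}=\E^{\textup c}\{\omega^{\textup c}_{u,v}(X)\,g^{\textup c}(e(X))\}$, where $\omega^{\textup c}_{u,v}=\beta_{u,v}/\E^{\textup c}\{\beta_{u,v}\}$ are normalized beta weights under the complier law. The difference $h-k$ of two such normalized weights has zero complier mean. Their ratio $h/k$ is monotone in $e$, so $h-k$ changes sign exactly once, at a crossing point $C$. The conclusion then follows from the pointwise inequality $\{h(X)-k(X)\}\{g^{\textup c}(e(X))-g^{\textup c}(C)\}\ge 0$.

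\emph{What each buys.}
\begin{itemize}
\item The paper's argument is more elementary and self-contained: it needs no divided-difference weights and no symmetrization.
\item Your argument yields, for each gap, an exact identity as a positive multiple of a weighted covariance. That covariance is exactly what the slopes of the generalized local CP-plot estimate.
\item Your argument also shows explicitly why the $\pi^{\textup c}$-weighted conditional mean $g^{\textup c}$ is the right object: the tilting weight factors as $\pi^{\textup c}$ times a function of $e$.
\item Your symmetrization only evaluates $g^{\textup c}$ on the support of $e(X)$. The paper's pivot $C=r^{\textup c}_{u_1,u_2;v}$ need not lie in that support, so writing $g^{\textup c}(C)$ tacitly assumes a monotone extension. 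Any value lying between $g^{\textup c}$ at support points below $C$ and at support points above $C$ works.
\end{itemize}
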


\section{Proofs}\label{sec:proofs}

Define 
\begin{align}\label{eq:el}
    e_l=\E [e^l(X)],\qquad
    \overline {e}_l=\E[\{1-e(X)\}^l]
\end{align} 
as the $l$-th moments of $e(X)$ and $1-e(X)$, $l=1,2,3$.
\begin{lemma}\label{lemma:z1ex}
    For any function \(A(X)\), we have
\[
\E\{A(X)\mid Z=1\}
=
\frac{\E\{e(X)A(X)\}}{p},
\qquad
\E\{A(X)\mid Z=0\}
=
\frac{\E[\{1-e(X)\}A(X)]}{1-p}.
\]
\end{lemma}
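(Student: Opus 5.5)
The identity follows from the tower property together with the definition $e(X)=\p(Z=1\mid X)$; no ignorability is needed. I will prove the treated-group identity first and obtain the control-group identity by replacing $Z$ with $1-Z$.

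Write the conditional expectation as a ratio, $\E\{A(X)\mid Z=1\}=\E\{Z A(X)\}/\p(Z=1)$, which is well defined because $p=\E\{e(X)\}>0$ under the overlap condition $e(X)\in(0,1)$. Conditioning the numerator on $X$ and using that $A(X)$ is $X$-measurable gives
\[
\E\{ZA(X)\}=\E[\E\{ZA(X)\mid X\}]=\E\{A(X)\E(Z\mid X)\}=\E\{e(X)A(X)\}.
\]
Dividing by $p$ gives the first identity. For the second, the same steps applied to $1-Z$ give $\E[(1-Z)A(X)]=\E[\{1-e(X)\}A(X)]$ and $\p(Z=0)=1-p>0$.

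The only point needing care is integrability. I will state the lemma for integrable $A(X)$, matching \eqref{eq:exax}, or for nonnegative $A(X)$, where the tower property holds without further conditions. The general case then follows by splitting $A=A^+-A^-$, since $|ZA(X)|\le |A(X)|$ keeps every term finite. This is the main obstacle, and it is a mild one.
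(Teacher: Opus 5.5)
Your proof is correct and follows the paper's argument: you write the conditional mean as $\E\{ZA(X)\}/\p(Z=1)$, use iterated expectations to get $\E\{ZA(X)\}=\E\{e(X)A(X)\}$ and $\E(Z)=\E\{e(X)\}=p$, and obtain the control-group identity by replacing $Z$ with $1-Z$. Restricting to integrable (or nonnegative) $A(X)$ is a sensible tightening of the statement's ``any function'', and the paper leaves that condition implicit.
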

\begin{proof}
By iterated expectations, \(\E\{ZA(X)\}=\E[e(X)A(X)]\) and \(\E(Z)=\E\{e(X)\}\), which gives the first identity; the second follows analogously by replacing \(Z\) with \(1-Z\).
\end{proof}
\subsection{Proof of Theorem~\ref{thm:diff-rep}}

\begin{proof}
We prove parts (i) and (ii) separately.

\proofpart{Part (i): Comparisons involving ATE}
First, we compare \(\tauatt\) and \(\tauate\):
\[
\begin{aligned}
\tauatt-\tauate
&=
\frac{\E\{e(X)\tau(X)\}}{\E\{e(X)\}}
-
\E\{\tau(X)\} \\
&=
\frac{
\E\{e(X)\tau(X)\}
-
\E\{e(X)\}\E\{\tau(X)\}
}{\E\{e(X)\}} \\
&=
\frac{\cov\{\tau(X),e(X)\}}{\E\{e(X)\}}.
\end{aligned}
\]

Next, we compare \(\tauate\) and \(\tauatc\):
\[
\begin{aligned}
\tauate-\tauatc
&=
\E\{\tau(X)\}
-
\frac{\E[\{1-e(X)\}\tau(X)]}{\E\{1-e(X)\}} \\
&=
\frac{
\E\{1-e(X)\}\E\{\tau(X)\}
-
\E[\{1-e(X)\}\tau(X)]
}{\E\{1-e(X)\}} \\
&=
\frac{
\E\{e(X)\tau(X)\}
-
\E\{e(X)\}\E\{\tau(X)\}
}{\E\{1-e(X)\}} \\
&=
\frac{\cov\{\tau(X),e(X)\}}{\E\{1-e(X)\}}.
\end{aligned}
\]
This proves part (i).

\proofpart{Part (ii): Comparisons involving ATO}
We first compare \(\tauatt\) and \(\tauow\). By
Lemma~\ref{lemma:z1ex},
\[
\tauatt=\E\{\tau(X)\mid Z=1\},
\qquad
\tauow
=
\frac{\E[\{1-e(X)\}\tau(X)\mid Z=1]}
{\E\{1-e(X)\mid Z=1\}}.
\]
Therefore,
\[
\begin{aligned}
\tauatt-\tauow
&=
\E\{\tau(X)\mid Z=1\}
-
\frac{\E[\{1-e(X)\}\tau(X)\mid Z=1]}
{\E\{1-e(X)\mid Z=1\}}\\
&=
\frac{
\E\{\tau(X)\mid Z=1\}\E\{1-e(X)\mid Z=1\}
-
\E[\{1-e(X)\}\tau(X)\mid Z=1]
}{
\E\{1-e(X)\mid Z=1\}
}\\
&=
\frac{
\E\{e(X)\tau(X)\mid Z=1\}
-
\E\{e(X)\mid Z=1\}\E\{\tau(X)\mid Z=1\}
}{
\E\{1-e(X)\mid Z=1\}
}\\
&=
\frac{
\cov\{\tau(X),e(X)\mid Z=1\}
}{
\E\{1-e(X)\mid Z=1\}
}.
\end{aligned}
\]

We next compare \(\tauow\) and \(\tauatc\). Again by
Lemma~\ref{lemma:z1ex},
\[
\tauatc=\E\{\tau(X)\mid Z=0\},
\qquad
\tauow
=
\frac{\E\{e(X)\tau(X)\mid Z=0\}}
{\E\{e(X)\mid Z=0\}}.
\]
It follows that
\[
\begin{aligned}
\tauow-\tauatc
&=
\frac{\E\{e(X)\tau(X)\mid Z=0\}}
{\E\{e(X)\mid Z=0\}}
-
\E\{\tau(X)\mid Z=0\}\\
&=
\frac{
\E\{e(X)\tau(X)\mid Z=0\}
-
\E\{e(X)\mid Z=0\}\E\{\tau(X)\mid Z=0\}
}{
\E\{e(X)\mid Z=0\}
}\\
&=
\frac{
\cov\{\tau(X),e(X)\mid Z=0\}
}{
\E\{e(X)\mid Z=0\}
}.
\end{aligned}
\]
This proves part (ii), and hence all four identities.
\end{proof}
\subsection{Proof of Proposition~\ref{prop:covariance-interpretation}}
\begin{proof}
We prove the two parts in turn.

For part (i), let \(\Delta=Y(1)-Y(0)\). By the law of total covariance and
Assumption~\ref{assume:ignor},
\[
\begin{aligned}
\cov(\Delta,Z)
&=
\E\{\cov(\Delta,Z\mid X)\}
+
\cov\{\E(\Delta\mid X),\E(Z\mid X)\}\\
&=
\cov\{\tau(X),e(X)\}.
\end{aligned}
\]
Moreover, for any random variable \(W\) and binary \(Z\),
\[
\cov(W,Z)
=
\var(Z)\big[\E(W\mid Z=1)-\E(W\mid Z=0)\big].
\]
Applying this identity to \(W=Y(1)\) and \(W=Y(0)\) gives
\[
\cov\{Y(1)-Y(0),Z\}
=
\var(Z)\{B(1)-B(0)\},
\]
which proves \eqref{eq:selection-on-gains}.

For part (ii), the law of total covariance gives
\begin{align}\label{eq:proof-total-cov-decomp}
\cov\{\tau(X),e(X)\}
={}&
\E\{e(X)\}\cov\{\tau(X),e(X)\mid Z=1\}\notag\\
&+
\E\{1-e(X)\}\cov\{\tau(X),e(X)\mid Z=0\}\notag\\
&+
\cov\!\left[\E\{\tau(X)\mid Z\},\E\{e(X)\mid Z\}\right].
\end{align}
Because \(Z\) is binary,
\[
\E\{\tau(X)\mid Z=1\}-\E\{\tau(X)\mid Z=0\}
=
\frac{\cov\{\tau(X),Z\}}{\var(Z)}
=
\frac{\cov\{\tau(X),e(X)\}}{\var(Z)},
\]
where the last equality follows because \(\tau(X)\) is a function of \(X\).
Similarly,
\[
\E\{e(X)\mid Z=1\}-\E\{e(X)\mid Z=0\}
=
\frac{\cov\{e(X),Z\}}{\var(Z)}
=
\frac{\var\{e(X)\}}{\var(Z)}.
\]
Therefore,
\[
\cov\!\left[\E\{\tau(X)\mid Z\},\E\{e(X)\mid Z\}\right]
=
\frac{\var\{e(X)\}}{\var(Z)}
\cov\{\tau(X),e(X)\}.
\]
Substituting this identity into \eqref{eq:proof-total-cov-decomp} and using
\[
\var(Z)
=
\E\{\var(Z\mid X)\}+\var\{e(X)\}
\]
yields \eqref{eq:marginal-conditional-covariance}.
\end{proof}

\subsection{Proof of Corollary \ref{coro:exact-bracketing}}
\begin{proof}
All denominators in Theorem~\ref{thm:diff-rep} are positive. Therefore,
each displayed estimand difference has the sign of its corresponding
covariance, which gives both equivalences.
\end{proof}

\subsection{Proof of Corollary \ref{coro:ow in atc att}}

\begin{proof}
Under Assumption~\ref{assump:mono}(i), \(g(e)\) is non-decreasing in \(e\).
For \(z=0,1\), because \(Z\) is a binary variable with propensity score
\(e(X)\), conditioning on \(e(X)\) makes \(Z\) independent of any function of
\(X\). Therefore,
\[
\E\{\tau(X)\mid e(X),Z=z\}
=
\E\{\tau(X)\mid e(X)\}
=
g(e(X)).
\]
It follows that
\[
\cov\{\tau(X),e(X)\mid Z=z\}
=
\cov\{g(e(X)),e(X)\mid Z=z\},
\qquad z=0,1.
\]
Since \(g(e)\) is non-decreasing in \(e\), the covariance between
\(g(e(X))\) and \(e(X)\) is nonnegative under the conditional distribution
given \(Z=z\). Hence
\[
\cov\{\tau(X),e(X)\mid Z=1\}\ge 0,
\qquad
\cov\{\tau(X),e(X)\mid Z=0\}\ge 0.
\]
By Corollary~\ref{coro:exact-bracketing},
\[
\tauow\in[\tauatc,\tauatt].
\]

Under Assumption~\ref{assump:mono}(ii), the same argument applies with
\(g(e)\) non-increasing in \(e\). Thus
\[
\cov\{\tau(X),e(X)\mid Z=1\}\le 0,
\qquad
\cov\{\tau(X),e(X)\mid Z=0\}\le 0,
\]
and Corollary~\ref{coro:exact-bracketing} gives
\[
\tauow\in[\tauatt,\tauatc].
\]
\end{proof}

\subsection{Proof of Theorem~\ref{thm:diff-rep-local}}
\begin{proof}
\proofpart{Preliminary identities}
We first record several useful identities. Recall that
\(\pi^{\textup c}(X)=\p(U=\textup c\mid X)\). By the definition of
\(\pi^{\textup c}(X)\) and conditional IV exogeneity,
\(Z\perp\!\!\!\perp U\mid X\), for any integrable function \(A(X)\),
\[
\E\{A(X)\mid U=\textup c\}
=
\frac{\E\{\pi^{\textup c}(X)A(X)\}}
{\E\{\pi^{\textup c}(X)\}},
\]
\[
\E\{A(X)\mid U=\textup c,Z=1\}
=
\frac{\E\{e(X)\pi^{\textup c}(X)A(X)\}}
{\E\{e(X)\pi^{\textup c}(X)\}},
\]
and
\[
\E\{A(X)\mid U=\textup c,Z=0\}
=
\frac{\E[\{1-e(X)\}\pi^{\textup c}(X)A(X)]}
{\E[\{1-e(X)\}\pi^{\textup c}(X)]}.
\]
It follows that
\[
\tau^{\textup c}_{\textup{ATE}}
=
\E\{\tau^{\textup c}(X)\mid U=\textup c\},
\]
\[
\tau^{\textup c}_{\textup{ATT}}
=
\frac{\E\{e(X)\tau^{\textup c}(X)\mid U=\textup c\}}
{\E\{e(X)\mid U=\textup c\}}
=
\E\{\tau^{\textup c}(X)\mid U=\textup c,Z=1\},
\]
and
\[
\tau^{\textup c}_{\textup{ATC}}
=
\frac{\E[\{1-e(X)\}\tau^{\textup c}(X)\mid U=\textup c]}
{\E\{1-e(X)\mid U=\textup c\}}
=
\E\{\tau^{\textup c}(X)\mid U=\textup c,Z=0\}.
\]
\proofpart{Part (i): Comparisons involving the local ATE}
We first compare \(\tau^{\textup c}_{\textup{ATT}}\) and
\(\tau^{\textup c}_{\textup{ATE}}\). We have
\[
\begin{aligned}
\tau^{\textup c}_{\textup{ATT}}
-
\tau^{\textup c}_{\textup{ATE}}
&=
\frac{\E\{e(X)\tau^{\textup c}(X)\mid U=\textup c\}}
{\E\{e(X)\mid U=\textup c\}}
-
\E\{\tau^{\textup c}(X)\mid U=\textup c\} \\
&=
\frac{
\E\{e(X)\tau^{\textup c}(X)\mid U=\textup c\}
-
\E\{e(X)\mid U=\textup c\}
\E\{\tau^{\textup c}(X)\mid U=\textup c\}
}{
\E\{e(X)\mid U=\textup c\}
} \\
&=
\frac{
\cov\{\tau^{\textup c}(X),e(X)\mid U=\textup c\}
}{
\E\{e(X)\mid U=\textup c\}
}.
\end{aligned}
\]
We next compare \(\tau^{\textup c}_{\textup{ATE}}\) and
\(\tau^{\textup c}_{\textup{ATC}}\):
\[
\begin{aligned}
\tau^{\textup c}_{\textup{ATE}}
-
\tau^{\textup c}_{\textup{ATC}}
&=
\E\{\tau^{\textup c}(X)\mid U=\textup c\}
-
\frac{\E[\{1-e(X)\}\tau^{\textup c}(X)\mid U=\textup c]}
{\E\{1-e(X)\mid U=\textup c\}}\\
&=
\frac{
\E\{e(X)\tau^{\textup c}(X)\mid U=\textup c\}
-
\E\{e(X)\mid U=\textup c\}
\E\{\tau^{\textup c}(X)\mid U=\textup c\}
}{
\E\{1-e(X)\mid U=\textup c\}
}\\
&=
\frac{
\cov\{\tau^{\textup c}(X),e(X)\mid U=\textup c\}
}{
\E\{1-e(X)\mid U=\textup c\}
}.
\end{aligned}
\]
This proves part (i).
\proofpart{Part (ii): Comparisons involving the local ATO}
We first compare \(\tau^{\textup c}_{\textup{ATT}}\) and
\(\tau^{\textup c}_{\textup{ATO}}\). Under the conditional distribution
given \((U=\textup c,Z=1)\),
\[
\tau^{\textup c}_{\textup{ATT}}
=
\E\{\tau^{\textup c}(X)\mid U=\textup c,Z=1\},
\]
and
\[
\tau^{\textup c}_{\textup{ATO}}
=
\frac{
\E[\{1-e(X)\}\tau^{\textup c}(X)\mid U=\textup c,Z=1]
}{
\E\{1-e(X)\mid U=\textup c,Z=1\}
}.
\]
Therefore,
\[
\begin{aligned}
\tau^{\textup c}_{\textup{ATT}}
-
\tau^{\textup c}_{\textup{ATO}}
&=
\E\{\tau^{\textup c}(X)\mid U=\textup c,Z=1\}
-
\frac{
\E[\{1-e(X)\}\tau^{\textup c}(X)\mid U=\textup c,Z=1]
}{
\E\{1-e(X)\mid U=\textup c,Z=1\}
}\\
&=
\frac{
\E\{e(X)\tau^{\textup c}(X)\mid U=\textup c,Z=1\}
-
\E\{e(X)\mid U=\textup c,Z=1\}
\E\{\tau^{\textup c}(X)\mid U=\textup c,Z=1\}
}{
\E\{1-e(X)\mid U=\textup c,Z=1\}
}\\
&=
\frac{
\cov\{\tau^{\textup c}(X),e(X)\mid U=\textup c,Z=1\}
}{
\E\{1-e(X)\mid U=\textup c,Z=1\}
}.
\end{aligned}
\]
We next compare \(\tau^{\textup c}_{\textup{ATO}}\) and
\(\tau^{\textup c}_{\textup{ATC}}\). Under the conditional distribution
given \((U=\textup c,Z=0)\),
\[
\tau^{\textup c}_{\textup{ATC}}
=
\E\{\tau^{\textup c}(X)\mid U=\textup c,Z=0\},
\]
and
\[
\tau^{\textup c}_{\textup{ATO}}
=
\frac{
\E\{e(X)\tau^{\textup c}(X)\mid U=\textup c,Z=0\}
}{
\E\{e(X)\mid U=\textup c,Z=0\}
}.
\]
Thus,
\[
\begin{aligned}
\tau^{\textup c}_{\textup{ATO}}
-
\tau^{\textup c}_{\textup{ATC}}
&=
\frac{
\E\{e(X)\tau^{\textup c}(X)\mid U=\textup c,Z=0\}
}{
\E\{e(X)\mid U=\textup c,Z=0\}
}
-
\E\{\tau^{\textup c}(X)\mid U=\textup c,Z=0\}\\
&=
\frac{
\E\{e(X)\tau^{\textup c}(X)\mid U=\textup c,Z=0\}
-
\E\{e(X)\mid U=\textup c,Z=0\}
\E\{\tau^{\textup c}(X)\mid U=\textup c,Z=0\}
}{
\E\{e(X)\mid U=\textup c,Z=0\}
}\\
&=
\frac{
\cov\{\tau^{\textup c}(X),e(X)\mid U=\textup c,Z=0\}
}{
\E\{e(X)\mid U=\textup c,Z=0\}
}.
\end{aligned}
\]
This proves part (ii), and hence all four identities.
\end{proof}

\subsection{Proof of Proposition~\ref{prop:covariance-interpretation-local}}
\begin{proof}
We prove the two parts in turn. Let
\(\Delta=Y(1)-Y(0)\).

For part~(i), Assumption~\ref{assump:iv}(i) and the fact that
\(U=(D(1),D(0))\) is determined by the potential treatments imply
\[
Z\perp\!\!\!\perp (\Delta,U)\mid X.
\]
Consequently,
\(Z\perp\!\!\!\perp\Delta\mid X,U=\textup c\) and
\(\E(Z\mid X,U=\textup c)=e(X)\). The law of total covariance within the
complier population therefore gives
\[
\begin{aligned}
\cov(\Delta,Z\mid U=\textup c)
={}&
\E\{\cov(\Delta,Z\mid X,U=\textup c)\mid U=\textup c\}\\
&+
\cov\{\E(\Delta\mid X,U=\textup c),
       \E(Z\mid X,U=\textup c)\mid U=\textup c\}\\
={}&
\cov\{\tau^{\textup c}(X),e(X)\mid U=\textup c\}.
\end{aligned}
\]
Moreover, \(D=Z\) whenever \(U=\textup c\), so
\[
\cov(\Delta,D\mid U=\textup c)
=
\cov(\Delta,Z\mid U=\textup c).
\]
For any random variable \(W\) and binary \(Z\), conditional on
\(U=\textup c\),
\[
\cov(W,Z\mid U=\textup c)
=
\var(Z\mid U=\textup c)
\big[
\E(W\mid Z=1,U=\textup c)
-
\E(W\mid Z=0,U=\textup c)
\big].
\]
Applying this identity to \(W=Y(1)\) and \(W=Y(0)\), and using
\(D=Z\) within the complier stratum, yields
\[
\cov(\Delta,Z\mid U=\textup c)
=
\var(Z\mid U=\textup c)
\{B^{\textup c}(1)-B^{\textup c}(0)\}.
\]
This proves \eqref{eq:local-selection-on-gains}.

For part~(ii), write
\[
C_{\textup c}
=
\cov\{\tau^{\textup c}(X),e(X)\mid U=\textup c\},
\]
and, for \(z=0,1\),
\[
C_{\textup c,z}
=
\cov\{\tau^{\textup c}(X),e(X)\mid U=\textup c,Z=z\}.
\]
The law of total covariance, applied within the complier population and then
conditioning on \(Z\), gives
\begin{align}\label{eq:proof-local-total-cov-decomp}
C_{\textup c}
={}&
 p_{\textup c}C_{\textup c,1}
+
(1-p_{\textup c})C_{\textup c,0}
\notag\\
&+
\cov\!\left[
\E\{\tau^{\textup c}(X)\mid Z,U=\textup c\},
\E\{e(X)\mid Z,U=\textup c\}
\mid U=\textup c
\right].
\end{align}
Because \(Z\) is binary within the complier population,
\[
\begin{aligned}
&\E\{\tau^{\textup c}(X)\mid Z=1,U=\textup c\}
-
\E\{\tau^{\textup c}(X)\mid Z=0,U=\textup c\}\\
&\qquad=
\frac{\cov\{\tau^{\textup c}(X),Z\mid U=\textup c\}}
{\var(Z\mid U=\textup c)}
=
\frac{C_{\textup c}}{\var(Z\mid U=\textup c)},
\end{aligned}
\]
where the last equality follows from
\(\E(Z\mid X,U=\textup c)=e(X)\). Similarly,
\[
\E\{e(X)\mid Z=1,U=\textup c\}
-
\E\{e(X)\mid Z=0,U=\textup c\}
=
\frac{\var\{e(X)\mid U=\textup c\}}
{\var(Z\mid U=\textup c)}.
\]
Because the covariance of two functions of a binary variable equals its
variance times the product of the corresponding two mean differences, the last
covariance in \eqref{eq:proof-local-total-cov-decomp} equals
\[
\frac{\var\{e(X)\mid U=\textup c\}}
{\var(Z\mid U=\textup c)}
C_{\textup c}.
\]
Finally, conditional IV exogeneity implies
\(\var(Z\mid X,U=\textup c)=e(X)\{1-e(X)\}\). Hence the conditional law of
total variance gives
\[
\var(Z\mid U=\textup c)
=
\E[e(X)\{1-e(X)\}\mid U=\textup c]
+
\var\{e(X)\mid U=\textup c\}.
\]
Substituting the preceding two identities into
\eqref{eq:proof-local-total-cov-decomp} and solving for
\(C_{\textup c}\) yields
\eqref{eq:local-marginal-conditional-covariance}.
\end{proof}

\subsection{Proof of Corollary \ref{coro:exact-bracketing-local}}
\begin{proof}
All denominators in Theorem~\ref{thm:diff-rep-local} are positive. Therefore,
each displayed local estimand difference has the sign of its corresponding
covariance, which gives both equivalences.
\end{proof}

\subsection{Proof of Corollary~\ref{coro:ow in atc att local}}
\label{subsec:proof_coro_owinatcatt}

\begin{proof}
Equation~\eqref{eq:key_tauce_to_gc} shows that, for \(z\in\{0,1\}\),
\[
\cov\{\tau^{\textup c}(X),e(X)
\mid U=\textup c,Z=z\}
=
\cov\{g^{\textup c}(e(X)),e(X)
\mid U=\textup c,Z=z\}.
\]

Under Assumption~\ref{assump:iv mono}(i),
\(g^{\textup c}(e)\) is non-decreasing in \(e\). Hence, under each
conditional distribution given \((U=\textup c,Z=z)\),
\[
\cov\{g^{\textup c}(e(X)),e(X)
\mid U=\textup c,Z=z\}\ge 0,
\qquad z=0,1,
\]
because the covariance between a scalar random variable and a
non-decreasing function of itself is nonnegative. Therefore,
\[
\cov\{\tau^{\textup c}(X),e(X)
\mid U=\textup c,Z=z\}\ge 0,
\qquad z=0,1.
\]
Corollary~\ref{coro:exact-bracketing-local} then gives
\[
\tau^{\textup c}_{\textup{ATO}}
\in
\bigl[
\tau^{\textup c}_{\textup{ATC}},
\tau^{\textup c}_{\textup{ATT}}
\bigr].
\]

Under Assumption~\ref{assump:iv mono}(ii), the same argument gives
\[
\cov\{\tau^{\textup c}(X),e(X)
\mid U=\textup c,Z=z\}\le 0,
\qquad z=0,1,
\]
and hence
\[
\tau^{\textup c}_{\textup{ATO}}
\in
\bigl[
\tau^{\textup c}_{\textup{ATT}},
\tau^{\textup c}_{\textup{ATC}}
\bigr].
\]
\end{proof}
\subsection{Proof of Theorem~\ref{thm:population-cp-geometry}}

\begin{proof}

\proofpart{Preliminary notation and weighted centroids}

Write
\[
E=e(X),
\qquad
T=\tau(X).
\]
For any integrable function \(a(X)\), define
\[
\E_{\mathrm{all}}\{a(X)\}=\E\{a(X)\},
\]
\[
\E_{\mathrm{tr}}\{a(X)\}
=
\frac{\E\{E a(X)\}}{\E(E)}
=
\E\{a(X)\mid Z=1\},
\]
and
\[
\E_{\mathrm{co}}\{a(X)\}
=
\frac{\E\{(1-E)a(X)\}}{\E(1-E)}
=
\E\{a(X)\mid Z=0\}.
\]
Let \(\cov_j\) and \(\var_j\) denote covariance and variance under
\(\E_j\), for
\(j\in\{\mathrm{all},\mathrm{tr},\mathrm{co}\}\).

The normal equations for a weighted linear projection imply that each fitted
line passes through its corresponding weighted centroid. Therefore,
\[
L_{\mathrm{all}}(\bar e_{\mathrm{ATE}})
=
\tau_{\mathrm{ATE}},
\qquad
L_{\mathrm{tr}}(\bar e_{\mathrm{ATT}})
=
\tau_{\mathrm{ATT}},
\qquad
L_{\mathrm{co}}(\bar e_{\mathrm{ATC}})
=
\tau_{\mathrm{ATC}},
\]
and the slopes are
\[
\beta_{\mathrm{all}}
=
\frac{\cov_{\mathrm{all}}(T,E)}
{\var_{\mathrm{all}}(E)},
\qquad
\beta_{\mathrm{tr}}
=
\frac{\cov_{\mathrm{tr}}(T,E)}
{\var_{\mathrm{tr}}(E)},
\qquad
\beta_{\mathrm{co}}
=
\frac{\cov_{\mathrm{co}}(T,E)}
{\var_{\mathrm{co}}(E)}.
\]

\proofpart{Part (i): Slope identities for the unweighted line}

Direct calculation gives
\[
\bar e_{\mathrm{ATT}}-\bar e_{\mathrm{ATE}}
=
\frac{\var_{\mathrm{all}}(E)}
{\E_{\mathrm{all}}(E)},
\qquad
\bar e_{\mathrm{ATE}}-\bar e_{\mathrm{ATC}}
=
\frac{\var_{\mathrm{all}}(E)}
{\E_{\mathrm{all}}(1-E)}.
\]
The corresponding treatment-effect differences satisfy
\[
\tau_{\mathrm{ATT}}-\tau_{\mathrm{ATE}}
=
\frac{\cov_{\mathrm{all}}(T,E)}
{\E_{\mathrm{all}}(E)},
\qquad
\tau_{\mathrm{ATE}}-\tau_{\mathrm{ATC}}
=
\frac{\cov_{\mathrm{all}}(T,E)}
{\E_{\mathrm{all}}(1-E)}.
\]
Dividing each treatment-effect difference by the corresponding
propensity-score difference yields
\[
\frac{\tau_{\mathrm{ATT}}-\tau_{\mathrm{ATE}}}
{\bar e_{\mathrm{ATT}}-\bar e_{\mathrm{ATE}}}
=
\frac{\cov_{\mathrm{all}}(T,E)}
{\var_{\mathrm{all}}(E)}
=
\beta_{\mathrm{all}},
\]
and
\[
\frac{\tau_{\mathrm{ATE}}-\tau_{\mathrm{ATC}}}
{\bar e_{\mathrm{ATE}}-\bar e_{\mathrm{ATC}}}
=
\frac{\cov_{\mathrm{all}}(T,E)}
{\var_{\mathrm{all}}(E)}
=
\beta_{\mathrm{all}}.
\]
This proves part (i).

\proofpart{Part (ii): Slope identities for the treated- and control-weighted lines}

Because the ATO distribution can be obtained by tilting the treated
distribution by \(1-E\), or equivalently by tilting the control distribution
by \(E\), we have
\[
\bar e_{\mathrm{ATT}}-\bar e_{\mathrm{ATO}}
=
\frac{\var_{\mathrm{tr}}(E)}
{\E_{\mathrm{tr}}(1-E)},
\qquad
\bar e_{\mathrm{ATO}}-\bar e_{\mathrm{ATC}}
=
\frac{\var_{\mathrm{co}}(E)}
{\E_{\mathrm{co}}(E)}.
\]
The corresponding treatment-effect differences satisfy
\[
\tau_{\mathrm{ATT}}-\tau_{\mathrm{ATO}}
=
\frac{\cov_{\mathrm{tr}}(T,E)}
{\E_{\mathrm{tr}}(1-E)},
\qquad
\tau_{\mathrm{ATO}}-\tau_{\mathrm{ATC}}
=
\frac{\cov_{\mathrm{co}}(T,E)}
{\E_{\mathrm{co}}(E)}.
\]
Therefore,
\[
\frac{\tau_{\mathrm{ATT}}-\tau_{\mathrm{ATO}}}
{\bar e_{\mathrm{ATT}}-\bar e_{\mathrm{ATO}}}
=
\frac{\cov_{\mathrm{tr}}(T,E)}
{\var_{\mathrm{tr}}(E)}
=
\beta_{\mathrm{tr}},
\]
and
\[
\frac{\tau_{\mathrm{ATO}}-\tau_{\mathrm{ATC}}}
{\bar e_{\mathrm{ATO}}-\bar e_{\mathrm{ATC}}}
=
\frac{\cov_{\mathrm{co}}(T,E)}
{\var_{\mathrm{co}}(E)}
=
\beta_{\mathrm{co}}.
\]
This proves part (ii).

\proofpart{Part (iii): Intersections of the three fitted lines}

Since the unweighted line passes through
\((\bar e_{\mathrm{ATE}},\tau_{\mathrm{ATE}})\), part (i) implies
\[
\begin{aligned}
L_{\mathrm{all}}(\bar e_{\mathrm{ATT}})
&=
\tau_{\mathrm{ATE}}
+
\beta_{\mathrm{all}}
(\bar e_{\mathrm{ATT}}-\bar e_{\mathrm{ATE}})\\
&=
\tau_{\mathrm{ATT}},
\end{aligned}
\]
and
\[
\begin{aligned}
L_{\mathrm{all}}(\bar e_{\mathrm{ATC}})
&=
\tau_{\mathrm{ATE}}
+
\beta_{\mathrm{all}}
(\bar e_{\mathrm{ATC}}-\bar e_{\mathrm{ATE}})\\
&=
\tau_{\mathrm{ATC}}.
\end{aligned}
\]
Because
\[
L_{\mathrm{tr}}(\bar e_{\mathrm{ATT}})
=
\tau_{\mathrm{ATT}},
\qquad
L_{\mathrm{co}}(\bar e_{\mathrm{ATC}})
=
\tau_{\mathrm{ATC}},
\]
the ATT point lies on both \(L_{\mathrm{all}}\) and \(L_{\mathrm{tr}}\),
whereas the ATC point lies on both \(L_{\mathrm{co}}\) and
\(L_{\mathrm{all}}\).

Next, part (ii) gives
\[
\begin{aligned}
L_{\mathrm{tr}}(\bar e_{\mathrm{ATO}})
&=
\tau_{\mathrm{ATT}}
+
\beta_{\mathrm{tr}}
(\bar e_{\mathrm{ATO}}-\bar e_{\mathrm{ATT}})\\
&=
\tau_{\mathrm{ATO}},
\end{aligned}
\]
and
\[
\begin{aligned}
L_{\mathrm{co}}(\bar e_{\mathrm{ATO}})
&=
\tau_{\mathrm{ATC}}
+
\beta_{\mathrm{co}}
(\bar e_{\mathrm{ATO}}-\bar e_{\mathrm{ATC}})\\
&=
\tau_{\mathrm{ATO}}.
\end{aligned}
\]
Thus,
\[
(\bar e_{\mathrm{ATT}},\tau_{\mathrm{ATT}})
\in L_{\mathrm{all}}\cap L_{\mathrm{tr}},
\]
\[
(\bar e_{\mathrm{ATO}},\tau_{\mathrm{ATO}})
\in L_{\mathrm{tr}}\cap L_{\mathrm{co}},
\]
and
\[
(\bar e_{\mathrm{ATC}},\tau_{\mathrm{ATC}})
\in L_{\mathrm{co}}\cap L_{\mathrm{all}}.
\]
Whenever each corresponding pair of lines is distinct, two affine lines have
at most one common point. Hence these common points are their unique
intersections. This proves part (iii).
\end{proof}

\subsection{Proof of Proposition~\ref{prop:cp-slope-convex}}
\begin{proof}
Write
\[
E=e(X),\qquad T=\tau(X),
\]
and let
\[
p=\p(Z=1),\qquad q=\p(Z=0).
\]
For brevity, define
\[
C=\cov(T,E),\qquad V=\var(E),
\]
\[
C_1=\cov(T,E\mid Z=1),\qquad
C_0=\cov(T,E\mid Z=0),
\]
and
\[
V_1=\var(E\mid Z=1),\qquad
V_0=\var(E\mid Z=0).
\]
By the definitions of the three population regression slopes,
\[
\beta_{\mathrm{all}}=\frac{C}{V},\qquad
\beta_{\mathrm{tr}}=\frac{C_1}{V_1},\qquad
\beta_{\mathrm{co}}=\frac{C_0}{V_0}.
\]

Equation~\eqref{eq:marginal-conditional-covariance} gives
\begin{align}\label{eq:proof-slope-cov-relation}
C
=
\frac{\var(Z)}{\E\{\var(Z\mid X)\}}
\{pC_1+qC_0\}.
\end{align}
The same total-covariance calculation, with the first argument also equal to
\(E=e(X)\), gives
\begin{align}\label{eq:proof-slope-var-relation}
V
=
\frac{\var(Z)}{\E\{\var(Z\mid X)\}}
\{pV_1+qV_0\}.
\end{align}
Dividing \eqref{eq:proof-slope-cov-relation} by
\eqref{eq:proof-slope-var-relation}, we obtain
\[
\begin{aligned}
\beta_{\mathrm{all}}
&=
\frac{pC_1+qC_0}{pV_1+qV_0}\\
&=
\frac{pV_1}{pV_1+qV_0}\,\beta_{\mathrm{tr}}
+
\frac{qV_0}{pV_1+qV_0}\,\beta_{\mathrm{co}}.
\end{aligned}
\]
Because the treated- and control-weighted projections are well defined,
\(V_1>0\) and \(V_0>0\). Also \(p,q>0\) under overlap. Hence the two
coefficients above are positive and sum to one. This proves
\eqref{eq:cp-slope-convex} and the convex-combination claim.
\end{proof}

\subsection{Proof of Theorem~\ref{thm:beta-one-coordinate-diff}}

\begin{proof}

\proofpart{Part (i): Increasing the first beta-weight parameter}

\textit{Step 1: Positivity and centering of \(w_{u,u';v}\).}
Let \(a=u'-u>0\), and write
\[
r
=
\left[
\frac{\E\{\beta_{u',v}(X)\}}
{\E\{\beta_{u,v}(X)\}}
\right]^{1/a}.
\]
Because \(e(X)\in(0,1)\) and
\(\beta_{u',v}(X)=\beta_{u,v}(X)e(X)^a\), we have
\(r\in(0,1)\). Moreover, the function \(t\mapsto t^a\) is strictly
increasing on \((0,1)\), so
\[
\frac{e(X)^a-r^a}{e(X)-r}>0
\]
whenever \(e(X)\ne r\), and its value at \(e(X)=r\) equals
\(ar^{a-1}>0\). Therefore,
\[
w_{u,u';v}(X)>0.
\]

By the definition of \(w_{u,u';v}(X)\),
\[
w_{u,u';v}(X)\{e(X)-r\}
=
\beta_{u,v}(X)
\left[
e(X)^a
-
\frac{\E\{\beta_{u',v}(X)\}}
{\E\{\beta_{u,v}(X)\}}
\right].
\]
Taking expectations gives
\[
\begin{aligned}
\E\left[w_{u,u';v}(X)\{e(X)-r\}\right]
=
\E\{\beta_{u',v}(X)\}-
\frac{\E\{\beta_{u',v}(X)\}}
{\E\{\beta_{u,v}(X)\}}
\E\{\beta_{u,v}(X)\}=0.
\end{aligned}
\]
Hence,
\[
\E_{w_{u,u';v}}\{e(X)\}=r.
\]

\textit{Step 2: Covariance representation.}
We have
\[
\begin{aligned}
\tau_{u',v}-\tau_{u,v}
&=
\E\left[
\left\{
\frac{\beta_{u',v}(X)}
{\E\{\beta_{u',v}(X)\}}
-
\frac{\beta_{u,v}(X)}
{\E\{\beta_{u,v}(X)\}}
\right\}
\tau(X)
\right]\\
&=
\frac{1}{\E\{\beta_{u',v}(X)\}}
\E\left[
\beta_{u,v}(X)
\left\{
e(X)^a
-
\frac{\E\{\beta_{u',v}(X)\}}
{\E\{\beta_{u,v}(X)\}}
\right\}
\tau(X)
\right]\\
&=
\frac{1}{\E\{\beta_{u',v}(X)\}}
\E\left[
w_{u,u';v}(X)\{e(X)-r\}\tau(X)
\right]\\
&=
\frac{\E\{w_{u,u';v}(X)\}}
{\E\{\beta_{u',v}(X)\}}
\E_{w_{u,u';v}}
\left[
\{e(X)-\E_{w_{u,u';v}}(e(X))\}\tau(X)
\right]\\
&=
\frac{
\E\{w_{u,u';v}(X)\}
\cov_{w_{u,u';v}}\{\tau(X),e(X)\}
}{
\E\{\beta_{u',v}(X)\}
}.
\end{aligned}
\]
This proves part (i).

\proofpart{Part (ii): Increasing the second beta-weight parameter}

\textit{Step 1: Positivity and centering of \(w_{u;v,v'}\).}
Let \(b=v'-v>0\), and write
\[
s
=
\left[
\frac{\E\{\beta_{u,v'}(X)\}}
{\E\{\beta_{u,v}(X)\}}
\right]^{1/b}.
\]
Because
\[
\beta_{u,v'}(X)
=
\beta_{u,v}(X)\{1-e(X)\}^b,
\]
we have \(s\in(0,1)\). The corresponding crossing point in \(e(X)\) is
\(1-s\). Since \(t\mapsto(1-t)^b\) is strictly decreasing,
\[
\frac{s^b-\{1-e(X)\}^b}{e(X)-(1-s)}>0
\]
whenever \(e(X)\ne1-s\), and its value at \(e(X)=1-s\) equals
\(bs^{b-1}>0\). Therefore,
\[
w_{u;v,v'}(X)>0.
\]

By the definition of \(w_{u;v,v'}(X)\),
\[
w_{u;v,v'}(X)\{e(X)-(1-s)\}
=
\beta_{u,v}(X)
\left[
\frac{\E\{\beta_{u,v'}(X)\}}
{\E\{\beta_{u,v}(X)\}}
-
\{1-e(X)\}^b
\right].
\]
Taking expectations gives
\[
\begin{aligned}
\E\left[
w_{u;v,v'}(X)\{e(X)-(1-s)\}
\right]
=
\frac{\E\{\beta_{u,v'}(X)\}}
{\E\{\beta_{u,v}(X)\}}
\E\{\beta_{u,v}(X)\}-
\E\{\beta_{u,v'}(X)\}
=0.
\end{aligned}
\]
Hence,
\[
\E_{w_{u;v,v'}}\{e(X)\}=1-s.
\]

\textit{Step 2: Covariance representation.}
We have
\[
\begin{aligned}
\tau_{u,v}-\tau_{u,v'}
&=
\E\left[
\left\{
\frac{\beta_{u,v}(X)}
{\E\{\beta_{u,v}(X)\}}
-
\frac{\beta_{u,v'}(X)}
{\E\{\beta_{u,v'}(X)\}}
\right\}
\tau(X)
\right]\\
&=
\frac{1}{\E\{\beta_{u,v'}(X)\}}
\E\left[
\beta_{u,v}(X)
\left\{
\frac{\E\{\beta_{u,v'}(X)\}}
{\E\{\beta_{u,v}(X)\}}
-
\{1-e(X)\}^b
\right\}
\tau(X)
\right]\\
&=
\frac{1}{\E\{\beta_{u,v'}(X)\}}
\E\left[
w_{u;v,v'}(X)
\{e(X)-(1-s)\}\tau(X)
\right]\\
&=
\frac{\E\{w_{u;v,v'}(X)\}}
{\E\{\beta_{u,v'}(X)\}}
\E_{w_{u;v,v'}}
\left[
\{e(X)-\E_{w_{u;v,v'}}(e(X))\}\tau(X)
\right]\\
&=
\frac{
\E\{w_{u;v,v'}(X)\}
\cov_{w_{u;v,v'}}\{\tau(X),e(X)\}
}{
\E\{\beta_{u,v'}(X)\}
}.
\end{aligned}
\]
This proves part (ii) and completes the proof.
\end{proof}
\subsection{Proof of Corollary \ref{coro:beta weight}}
\begin{proof}
Each comparison weight in Theorem~\ref{thm:beta-one-coordinate-diff} is a
positive function of \(e(X)\). Hence, under any such weight \(w\),
\[
\cov\{\tau(X),e(X);w(X)\}
=
\cov\{g(e(X)),e(X);w(X)\}.
\]
Under Assumption~\ref{assump:mono}(i), this covariance is nonnegative.
The two representations in Theorem~\ref{thm:beta-one-coordinate-diff}
therefore imply that \(\tau_{u,v}\) is non-decreasing in \(u\) and
non-increasing in \(v\). Under Assumption~\ref{assump:mono}(ii), the
covariance signs reverse, yielding the reverse monotonicities.
\end{proof}

\subsection{Proof of Proposition~\ref{prop:additive linear}}

\begin{proof}

\proofpart{Preliminary notation and projection identities}

Let \(\tilde e(X)=\gamma_0+\gamma_1^\top X\) be the population OLS
projection of \(Z\) on \((1,X)\), and define the residual
\[
\tilde Z=Z-\tilde e(X).
\]
By the normal equations for this projection,
\begin{equation}\label{eq:proj-orth}
\E(\tilde Z)=0
\quad\text{and}\quad
\E(\tilde Z X)=0.
\end{equation}

\proofpart{Part (i): Decomposition of the OLS coefficient}

\textit{Residual representation.}
The population OLS coefficients \((\beta_0,\beta_1,\beta_2)\) satisfy
\begin{align*}
\E\!\left\{Y-\beta_0-\beta_1 Z-\beta_2^\top X\right\}&=0,\\
\E\!\left[X\{Y-\beta_0-\beta_1 Z-\beta_2^\top X\}\right]&=0,\\
\E\!\left[Z\{Y-\beta_0-\beta_1 Z-\beta_2^\top X\}\right]&=0.
\end{align*}
Because \(\tilde Z\) is a linear combination of \(1\), \(Z\), and \(X\),
these normal equations imply
\[
0
=
\E\!\left[
\tilde Z\{Y-\beta_0-\beta_1 Z-\beta_2^\top X\}
\right].
\]
Using \eqref{eq:proj-orth}, this reduces to
\[
0=\E(\tilde ZY)-\beta_1\E(\tilde Z Z),
\]
and hence
\begin{equation}\label{eq:beta1-resid}
\beta_1
=
\frac{\E(\tilde ZY)}
{\E(\tilde Z Z)}.
\end{equation}

\textit{Denominator.}
Conditioning on \(X\) gives
\[
\begin{aligned}
\E(\tilde Z Z\mid X)
&=
\E\!\left[\{Z-\tilde e(X)\}Z\mid X\right]\\
&=
\E(Z\mid X)-\tilde e(X)\E(Z\mid X)\\
&=
e(X)\{1-\tilde e(X)\}.
\end{aligned}
\]
Therefore,
\begin{equation}\label{eq:denom}
\E(\tilde Z Z)
=
\E\!\left[e(X)\{1-\tilde e(X)\}\right]
=
\E\{\tilde w(X)\},
\end{equation}
where
\[
\tilde w(X)=e(X)\{1-\tilde e(X)\}.
\]

\textit{Numerator.}
Write
\[
\mu_z(X)=\E\{Y(z)\mid X\},
\qquad
\tau(X)=\mu_1(X)-\mu_0(X).
\]
Under ignorability,
\[
\E(Y\mid Z=1,X)=\mu_1(X),
\qquad
\E(Y\mid Z=0,X)=\mu_0(X).
\]
Hence,
\[
\E(ZY\mid X)
=
\E\!\left[Z\,\E(Y\mid Z,X)\mid X\right]
=
e(X)\mu_1(X),
\]
and
\[
\E(Y\mid X)
=
e(X)\mu_1(X)
+
\{1-e(X)\}\mu_0(X).
\]
It follows that
\[
\begin{aligned}
\E(\tilde ZY\mid X)
&=
\E\!\left[\{Z-\tilde e(X)\}Y\mid X\right]\\
&=
\E(ZY\mid X)-\tilde e(X)\E(Y\mid X)\\
&=
e(X)\mu_1(X)
-
\tilde e(X)
\left[
e(X)\mu_1(X)+\{1-e(X)\}\mu_0(X)
\right]\\
&=
e(X)\{1-\tilde e(X)\}\mu_1(X)
-
\tilde e(X)\{1-e(X)\}\mu_0(X).
\end{aligned}
\]
Using \(\mu_1(X)=\mu_0(X)+\tau(X)\), we obtain
\begin{align}
\E(\tilde ZY\mid X)
&=
e(X)\{1-\tilde e(X)\}\tau(X)
+
\{e(X)-\tilde e(X)\}\mu_0(X)
\nonumber\\
&=
\tilde w(X)\tau(X)
+
\{e(X)-\tilde e(X)\}\mu_0(X).
\label{eq:num-cond}
\end{align}
Taking expectations in \eqref{eq:num-cond} gives
\begin{equation}\label{eq:num}
\E(\tilde ZY)
=
\E\{\tilde w(X)\tau(X)\}
+
\E\!\left[
\{e(X)-\tilde e(X)\}\mu_0(X)
\right].
\end{equation}

Combining \eqref{eq:beta1-resid}, \eqref{eq:denom}, and \eqref{eq:num},
we obtain
\[
\beta_1
=
\frac{\E\{\tilde w(X)\tau(X)\}}
{\E\{\tilde w(X)\}}
+
\frac{
\E\!\left[
\{e(X)-\tilde e(X)\}\mu_0(X)
\right]
}{
\E\{\tilde w(X)\}
}.
\]
This proves part (i).

\proofpart{Part (ii): Linear propensity score}

If \(e(X)\) lies in the linear span of \((1,X)\), then its population
linear projection satisfies
\[
\tilde e(X)=e(X)
\quad\text{almost surely}.
\]
Consequently, the second term in part (i) vanishes and
\[
\tilde w(X)=e(X)\{1-e(X)\}.
\]
It follows that
\[
\beta_1
=
\frac{
\E\!\left[e(X)\{1-e(X)\}\tau(X)\right]
}{
\E\!\left[e(X)\{1-e(X)\}\right]
}
=
\tauow.
\]
This proves part (ii).
\end{proof}
\subsection{Proof of Theorem~\ref{thm:IV}}

\begin{proof}

\proofpart{Preliminary TSLS decomposition}

We apply Proposition~\ref{prop:additive linear} twice: first to the
reduced-form regression of \(Y\) on \((1,Z,X)\), whose potential outcomes
with respect to \(Z\) are \(Y\{D(1)\}\) and \(Y\{D(0)\}\), and second to the
first-stage regression of \(D\) on \((1,Z,X)\), whose potential outcomes are
\(D(1)\) and \(D(0)\). Conditional instrument exogeneity justifies the
required ignorability condition.

It follows that
\[
\begin{aligned}
\TSLS
&=
\frac{
\bigl[
\E\{\tilde w(X)[Y\{D(1)\}-Y\{D(0)\}]\}
+
\E[\{e(X)-\tilde e(X)\}Y\{D(0)\}]
\bigr]/\E\{\tilde w(X)\}
}{
\bigl[
\E\{\tilde w(X)\{D(1)-D(0)\}\}
+
\E[\{e(X)-\tilde e(X)\}D(0)]
\bigr]/\E\{\tilde w(X)\}
}\\
&=
\frac{
\E\{\tilde w(X)[Y\{D(1)\}-Y\{D(0)\}]\}
+
\E[\{e(X)-\tilde e(X)\}Y\{D(0)\}]
}{
\E\{\tilde w(X)\{D(1)-D(0)\}\}
+
\E[\{e(X)-\tilde e(X)\}D(0)]
},
\end{aligned}
\]
where
\[
\tilde w(X)=e(X)\{1-\tilde e(X)\}.
\]

\proofpart{Part (i): General decomposition}

By the exclusion restriction and monotonicity,
\[
D(1)-D(0)=\mathbf 1(U=\textup c),
\qquad
Y\{D(1)\}-Y\{D(0)\}
=
\{Y(1)-Y(0)\}\mathbf 1(U=\textup c).
\]
Therefore,
\[
\begin{aligned}
\E\{\tilde w(X)\{D(1)-D(0)\}\}
&=
\E\{\tilde w(X)\mathbf 1(U=\textup c)\}\\
&=
\E\{\tilde w(X)\mid U=\textup c\}
\p(U=\textup c),
\end{aligned}
\]
and
\[
\begin{aligned}
&\E\{\tilde w(X)[Y\{D(1)\}-Y\{D(0)\}]\}\\
&\qquad=
\E\!\left[
\tilde w(X)\{Y(1)-Y(0)\}\mathbf 1(U=\textup c)
\right]\\
&\qquad=
\E\!\left[
\tilde w(X)\{Y(1)-Y(0)\}\mid U=\textup c
\right]
\p(U=\textup c).
\end{aligned}
\]

Moreover, under monotonicity,
\[
D(0)=\mathbf 1(U=\textup a),
\]
so
\[
\begin{aligned}
\E[\{e(X)-\tilde e(X)\}D(0)]
&=
\E\!\left[
\{e(X)-\tilde e(X)\}\mathbf 1(U=\textup a)
\right]\\
&=
\E[\{e(X)-\tilde e(X)\}\mid U=\textup a]
\p(U=\textup a).
\end{aligned}
\]
Substituting these identities into the preliminary TSLS decomposition yields
the expression in part~(i).

\proofpart{Part (ii): Linear IV propensity score}

If \(e(X)\) lies in the linear span of \((1,X)\), then
\[
\tilde e(X)=e(X)
\quad\text{almost surely}.
\]
Hence, all terms involving \(e(X)-\tilde e(X)\) vanish and
\[
\tilde w(X)=e(X)\{1-e(X)\}.
\]
Part~(i) therefore reduces to
\[
\begin{aligned}
\TSLS
&=
\frac{
\E\!\left[
e(X)\{1-e(X)\}\{Y(1)-Y(0)\}
\mid U=\textup c
\right]
}{
\E\!\left[
e(X)\{1-e(X)\}
\mid U=\textup c
\right]
}\\
&=
\E\left[
\frac{
e(X)\{1-e(X)\}
}{
\E[e(X)\{1-e(X)\}\mid U=\textup c]
}
\{Y(1)-Y(0)\}
\mathrel{\Big|} U=\textup c
\right].
\end{aligned}
\]
This proves the first representation in part~(ii).

\textit{Equivalent complier-CATE representation.}
The preceding expression can equivalently be written as
\[
\TSLS
=
\frac{
\E\!\left[
e(X)\{1-e(X)\}\{Y(1)-Y(0)\}
\mathbf 1(U=\textup c)
\right]
}{
\E\!\left[
e(X)\{1-e(X)\}
\mathbf 1(U=\textup c)
\right]
}.
\]
Taking conditional expectations given \(X\) gives
\[
\begin{aligned}
&\E\!\left[
e(X)\{1-e(X)\}\{Y(1)-Y(0)\}
\mathbf 1(U=\textup c)
\mid X
\right]\\
&\qquad=
e(X)\{1-e(X)\}
\pi^{\textup c}(X)
\tau^{\textup c}(X),
\end{aligned}
\]
and
\[
\E\!\left[
e(X)\{1-e(X)\}
\mathbf 1(U=\textup c)
\mid X
\right]
=
e(X)\{1-e(X)\}\pi^{\textup c}(X),
\]
where
\[
\tau^{\textup c}(X)
=
\E\{Y(1)-Y(0)\mid U=\textup c,X\}.
\]
Therefore,
\[
\TSLS
=
\E\left[
\frac{
e(X)\{1-e(X)\}\pi^{\textup c}(X)
}{
\E[e(X)\{1-e(X)\}\pi^{\textup c}(X)]
}
\tau^{\textup c}(X)
\right].
\]
This proves the second representation in part~(ii) and completes the proof.
\end{proof}
\subsection{Proof of Corollary \ref{coro:FE-bounds}}
\begin{proof}
In the stratified setting, \(e(X)=e_{[k]}\) for units in stratum \(k\), and
\[
\tau_{\mathrm{FE}}
=
\frac{\sum_k w_{[k]}e_{[k]}(1-e_{[k]})\tau_{[k]}}
{\sum_k w_{[k]}e_{[k]}(1-e_{[k]})}
=\tauow.
\]
Moreover, the conditional mean \(g(e)=\E\{\tau(X)\mid e(X)=e\}\) becomes
\(g_{\mathrm{FE}}(e)\). The result follows directly from
Corollary~\ref{coro:ow in atc att}.
\end{proof}

\subsection{Proof of Corollary \ref{coro:FE-bounds_old}}
\begin{proof}
If two strata have the same propensity score, either ``if and only if''
condition forces their treatment effects to coincide. The first condition
therefore makes \(g_{\mathrm{FE}}(e)\) non-decreasing, and the second makes it
non-increasing. The two conclusions follow from
Corollary~\ref{coro:FE-bounds}.
\end{proof}

\subsection{Proof of Theorem \ref{thm:sample-cp-geometry}}
\begin{proof}
For a generic nonnegative weight vector \(w\), define
\[
\bar e_w
=
\frac{\sum_{i=1}^n w_i\hat e_i}{\sum_{i=1}^n w_i},
\qquad
\hat V_w
=
\sum_{i=1}^n w_i(\hat e_i-\bar e_w)^2.
\]
Use the subscripts \(\mathrm{all}\), \(\mathrm{tr}\), and \(\mathrm{co}\)
for the three weight choices in the theorem.
The fitted slope satisfies
\[
\hat\beta_w
=
\frac{
\sum_{i=1}^n w_i(\hat e_i-\bar e_w)\hat\tau_i
}{
\sum_{i=1}^n w_i(\hat e_i-\bar e_w)^2
}
=
\frac{
\sum_{i=1}^n w_i(\hat e_i-\bar e_w)\hat\tau_i
}{\hat V_w},
\]
because \(\sum_i w_i(\hat e_i-\bar e_w)=0\).

First consider the unweighted fit, \(w_i=1\). Let
\[
\bar e=\frac1n\sum_{i=1}^n\hat e_i,
\qquad
\bar\tau=\frac1n\sum_{i=1}^n\hat\tau_i.
\]
Then
\[
\hat\tau_{\mathrm{ATT}}-\hat\tau_{\mathrm{ATE}}
=
\frac{\sum_i \hat e_i\hat\tau_i}{\sum_i\hat e_i}
-
\bar\tau
=
\frac{\sum_i(\hat e_i-\bar e)\hat\tau_i}{\sum_i\hat e_i}
=
\frac{\hat\beta_{\mathrm{all}}\hat V_{\mathrm{all}}}
{\sum_i\hat e_i}.
\]
Similarly,
\[
\hat\tau_{\mathrm{ATE}}-\hat\tau_{\mathrm{ATC}}
=
\bar\tau
-
\frac{\sum_i(1-\hat e_i)\hat\tau_i}{\sum_i(1-\hat e_i)}
=
\frac{\sum_i(\hat e_i-\bar e)\hat\tau_i}{\sum_i(1-\hat e_i)}
=
\frac{\hat\beta_{\mathrm{all}}\hat V_{\mathrm{all}}}
{\sum_i(1-\hat e_i)}.
\]

Next consider the fit with \(w_i=\hat e_i\). Let
\[
\bar e_{\mathrm{tr}}
=
\frac{\sum_i \hat e_i^2}{\sum_i\hat e_i}.
\]
Then
\[
\hat\tau_{\mathrm{ATT}}-\hat\tau_{\mathrm{ATO}}
=
\frac{\sum_i \hat e_i\hat\tau_i}{\sum_i\hat e_i}
-
\frac{\sum_i \hat e_i(1-\hat e_i)\hat\tau_i}
{\sum_i \hat e_i(1-\hat e_i)}.
\]
Rearranging the right-hand side gives
\[
\hat\tau_{\mathrm{ATT}}-\hat\tau_{\mathrm{ATO}}
=
\frac{
\sum_i \hat e_i(\hat e_i-\bar e_{\mathrm{tr}})\hat\tau_i
}{
\sum_i \hat e_i(1-\hat e_i)
}
=
\frac{\hat\beta_{\mathrm{tr}}\hat V_{\mathrm{tr}}}
{\sum_i \hat e_i(1-\hat e_i)}.
\]

Finally consider the fit with \(w_i=1-\hat e_i\). Let
\[
\bar e_{\mathrm{co}}
=
\frac{\sum_i (1-\hat e_i)\hat e_i}
{\sum_i(1-\hat e_i)}.
\]
Then
\[
\hat\tau_{\mathrm{ATO}}-\hat\tau_{\mathrm{ATC}}
=
\frac{\sum_i \hat e_i(1-\hat e_i)\hat\tau_i}
{\sum_i \hat e_i(1-\hat e_i)}
-
\frac{\sum_i (1-\hat e_i)\hat\tau_i}{\sum_i(1-\hat e_i)}.
\]
Rearranging gives
\[
\hat\tau_{\mathrm{ATO}}-\hat\tau_{\mathrm{ATC}}
=
\frac{
\sum_i(1-\hat e_i)(\hat e_i-\bar e_{\mathrm{co}})\hat\tau_i
}{
\sum_i \hat e_i(1-\hat e_i)
}
=
\frac{\hat\beta_{\mathrm{co}}\hat V_{\mathrm{co}}}
{\sum_i \hat e_i(1-\hat e_i)}.
\]

The corresponding horizontal differences are
\[
\widehat{\bar e}_{\mathrm{ATT}}
-
\widehat{\bar e}_{\mathrm{ATE}}
=
\frac{\hat V_{\mathrm{all}}}{\sum_i\hat e_i},
\qquad
\widehat{\bar e}_{\mathrm{ATE}}
-
\widehat{\bar e}_{\mathrm{ATC}}
=
\frac{\hat V_{\mathrm{all}}}{\sum_i(1-\hat e_i)},
\]
and
\[
\widehat{\bar e}_{\mathrm{ATT}}
-
\widehat{\bar e}_{\mathrm{ATO}}
=
\frac{\hat V_{\mathrm{tr}}}
{\sum_i\hat e_i(1-\hat e_i)},
\qquad
\widehat{\bar e}_{\mathrm{ATO}}
-
\widehat{\bar e}_{\mathrm{ATC}}
=
\frac{\hat V_{\mathrm{co}}}
{\sum_i\hat e_i(1-\hat e_i)}.
\]
Dividing each vertical difference by its corresponding horizontal difference
gives the slope identities in parts~(i) and~(ii).

The weighted least-squares normal equations imply
\[
\widehat L_{\mathrm{all}}
(\widehat{\bar e}_{\mathrm{ATE}})
=
\hat\tau_{\mathrm{ATE}},
\qquad
\widehat L_{\mathrm{tr}}
(\widehat{\bar e}_{\mathrm{ATT}})
=
\hat\tau_{\mathrm{ATT}},
\qquad
\widehat L_{\mathrm{co}}
(\widehat{\bar e}_{\mathrm{ATC}})
=
\hat\tau_{\mathrm{ATC}}.
\]
The slope identities then imply that the ATT centroid also lies on
\(\widehat L_{\mathrm{all}}\), the ATO centroid lies on both
\(\widehat L_{\mathrm{tr}}\) and \(\widehat L_{\mathrm{co}}\), and the ATC
centroid also lies on \(\widehat L_{\mathrm{all}}\). These are therefore the
three displayed common points. Whenever each corresponding pair of lines is
distinct, the common point is unique.
\end{proof}

\subsection{Proof of Theorem \ref{thm:population-local-cp-geometry}}

\begin{proof}
Write
\[
E=e(X),
\qquad
T=\tau^{\textup c}(X),
\qquad
\Pi=\pi^{\textup c}(X).
\]
For any integrable function \(a(X)\), define expectation under the complier
distribution by
\[
\E_{\textup c}\{a(X)\}
=
\frac{\E\{\Pi a(X)\}}{\E(\Pi)}.
\]
Under Assumption~\ref{assump:iv}, this equals
\(\E\{a(X)\mid U=\textup c\}\). Define the encouraged- and
unencouraged-complier expectations by
\[
\E_{\mathrm{tr}}^{\textup c}\{a(X)\}
=
\frac{\E_{\textup c}\{E a(X)\}}
{\E_{\textup c}(E)},
\qquad
\E_{\mathrm{co}}^{\textup c}\{a(X)\}
=
\frac{\E_{\textup c}\{(1-E)a(X)\}}
{\E_{\textup c}(1-E)}.
\]
Let \(\cov_j^{\textup c}\) and \(\var_j^{\textup c}\) denote covariance
and variance under the corresponding expectation, where
\(j\in\{\mathrm{all},\mathrm{tr},\mathrm{co}\}\), with
\(\E_{\mathrm{all}}^{\textup c}=\E_{\textup c}\).

The normal equations for weighted linear projections imply that each line
passes through its corresponding weighted centroid. Hence,
\[
L_{\mathrm{all}}^{\textup c}
(\bar e_{\mathrm{ATE}}^{\textup c})
=
\tau_{\mathrm{ATE}}^{\textup c},
\]
\[
L_{\mathrm{tr}}^{\textup c}
(\bar e_{\mathrm{ATT}}^{\textup c})
=
\tau_{\mathrm{ATT}}^{\textup c},
\qquad
L_{\mathrm{co}}^{\textup c}
(\bar e_{\mathrm{ATC}}^{\textup c})
=
\tau_{\mathrm{ATC}}^{\textup c}.
\]
Moreover, their slopes are
\[
\beta_{\mathrm{all}}^{\textup c}
=
\frac{
\cov_{\mathrm{all}}^{\textup c}(T,E)
}{
\var_{\mathrm{all}}^{\textup c}(E)
},
\qquad
\beta_{\mathrm{tr}}^{\textup c}
=
\frac{
\cov_{\mathrm{tr}}^{\textup c}(T,E)
}{
\var_{\mathrm{tr}}^{\textup c}(E)
},
\]
and
\[
\beta_{\mathrm{co}}^{\textup c}
=
\frac{
\cov_{\mathrm{co}}^{\textup c}(T,E)
}{
\var_{\mathrm{co}}^{\textup c}(E)
}.
\]

We first prove the slope identities. Under the complier distribution,
\[
\bar e_{\mathrm{ATT}}^{\textup c}
=
\frac{\E_{\textup c}(E^2)}
{\E_{\textup c}(E)},
\qquad
\bar e_{\mathrm{ATE}}^{\textup c}
=
\E_{\textup c}(E),
\]
so
\[
\bar e_{\mathrm{ATT}}^{\textup c}
-
\bar e_{\mathrm{ATE}}^{\textup c}
=
\frac{
\var_{\textup c}(E)
}{
\E_{\textup c}(E)
}.
\]
Likewise,
\[
\bar e_{\mathrm{ATC}}^{\textup c}
=
\frac{
\E_{\textup c}\{E(1-E)\}
}{
\E_{\textup c}(1-E)
},
\]
which gives
\[
\bar e_{\mathrm{ATE}}^{\textup c}
-
\bar e_{\mathrm{ATC}}^{\textup c}
=
\frac{
\var_{\textup c}(E)
}{
\E_{\textup c}(1-E)
}.
\]

The corresponding differences between the local weighted estimands are
\[
\tau_{\mathrm{ATT}}^{\textup c}
-
\tau_{\mathrm{ATE}}^{\textup c}
=
\frac{
\cov_{\textup c}(T,E)
}{
\E_{\textup c}(E)
},
\]
and
\[
\tau_{\mathrm{ATE}}^{\textup c}
-
\tau_{\mathrm{ATC}}^{\textup c}
=
\frac{
\cov_{\textup c}(T,E)
}{
\E_{\textup c}(1-E)
}.
\]
Dividing the treatment-effect differences by the corresponding
propensity-score differences yields
\[
\frac{
\tau_{\mathrm{ATT}}^{\textup c}
-
\tau_{\mathrm{ATE}}^{\textup c}
}{
\bar e_{\mathrm{ATT}}^{\textup c}
-
\bar e_{\mathrm{ATE}}^{\textup c}
}
=
\frac{
\cov_{\textup c}(T,E)
}{
\var_{\textup c}(E)
}
=
\beta_{\mathrm{all}}^{\textup c},
\]
and
\[
\frac{
\tau_{\mathrm{ATE}}^{\textup c}
-
\tau_{\mathrm{ATC}}^{\textup c}
}{
\bar e_{\mathrm{ATE}}^{\textup c}
-
\bar e_{\mathrm{ATC}}^{\textup c}
}
=
\frac{
\cov_{\textup c}(T,E)
}{
\var_{\textup c}(E)
}
=
\beta_{\mathrm{all}}^{\textup c}.
\]
This proves part~(i).

We next consider the encouraged-complier distribution. The local ATO
distribution is obtained by tilting this distribution by \(1-E\). Therefore,
\[
\bar e_{\mathrm{ATO}}^{\textup c}
=
\frac{
\E_{\mathrm{tr}}^{\textup c}\{E(1-E)\}
}{
\E_{\mathrm{tr}}^{\textup c}(1-E)
}.
\]
Since
\(
\bar e_{\mathrm{ATT}}^{\textup c}
=
\E_{\mathrm{tr}}^{\textup c}(E)
\),
a direct calculation gives
\[
\bar e_{\mathrm{ATT}}^{\textup c}
-
\bar e_{\mathrm{ATO}}^{\textup c}
=
\frac{
\var_{\mathrm{tr}}^{\textup c}(E)
}{
\E_{\mathrm{tr}}^{\textup c}(1-E)
}.
\]
Similarly,
\[
\tau_{\mathrm{ATT}}^{\textup c}
-
\tau_{\mathrm{ATO}}^{\textup c}
=
\frac{
\cov_{\mathrm{tr}}^{\textup c}(T,E)
}{
\E_{\mathrm{tr}}^{\textup c}(1-E)
}.
\]
Consequently,
\[
\frac{
\tau_{\mathrm{ATT}}^{\textup c}
-
\tau_{\mathrm{ATO}}^{\textup c}
}{
\bar e_{\mathrm{ATT}}^{\textup c}
-
\bar e_{\mathrm{ATO}}^{\textup c}
}
=
\frac{
\cov_{\mathrm{tr}}^{\textup c}(T,E)
}{
\var_{\mathrm{tr}}^{\textup c}(E)
}
=
\beta_{\mathrm{tr}}^{\textup c}.
\]

Equivalently, the local ATO distribution can be obtained by tilting the
unencouraged-complier distribution by \(E\). Hence,
\[
\bar e_{\mathrm{ATO}}^{\textup c}
-
\bar e_{\mathrm{ATC}}^{\textup c}
=
\frac{
\var_{\mathrm{co}}^{\textup c}(E)
}{
\E_{\mathrm{co}}^{\textup c}(E)
},
\]
and
\[
\tau_{\mathrm{ATO}}^{\textup c}
-
\tau_{\mathrm{ATC}}^{\textup c}
=
\frac{
\cov_{\mathrm{co}}^{\textup c}(T,E)
}{
\E_{\mathrm{co}}^{\textup c}(E)
}.
\]
It follows that
\[
\frac{
\tau_{\mathrm{ATO}}^{\textup c}
-
\tau_{\mathrm{ATC}}^{\textup c}
}{
\bar e_{\mathrm{ATO}}^{\textup c}
-
\bar e_{\mathrm{ATC}}^{\textup c}
}
=
\frac{
\cov_{\mathrm{co}}^{\textup c}(T,E)
}{
\var_{\mathrm{co}}^{\textup c}(E)
}
=
\beta_{\mathrm{co}}^{\textup c}.
\]
This proves part~(ii).

It remains to establish the intersection results. Because
\(L_{\mathrm{all}}^{\textup c}\) passes through the local ATE centroid,
part~(i) implies
\[
\begin{aligned}
L_{\mathrm{all}}^{\textup c}
(\bar e_{\mathrm{ATT}}^{\textup c})
&=
\tau_{\mathrm{ATE}}^{\textup c}
+
\beta_{\mathrm{all}}^{\textup c}
\left(
\bar e_{\mathrm{ATT}}^{\textup c}
-
\bar e_{\mathrm{ATE}}^{\textup c}
\right)=
\tau_{\mathrm{ATT}}^{\textup c},
\end{aligned}
\]
and
\[
\begin{aligned}
L_{\mathrm{all}}^{\textup c}
(\bar e_{\mathrm{ATC}}^{\textup c})
&=
\tau_{\mathrm{ATE}}^{\textup c}
+
\beta_{\mathrm{all}}^{\textup c}
\left(
\bar e_{\mathrm{ATC}}^{\textup c}
-
\bar e_{\mathrm{ATE}}^{\textup c}
\right)=
\tau_{\mathrm{ATC}}^{\textup c}.
\end{aligned}
\]
Since
\[
L_{\mathrm{tr}}^{\textup c}
(\bar e_{\mathrm{ATT}}^{\textup c})
=
\tau_{\mathrm{ATT}}^{\textup c},
\qquad
L_{\mathrm{co}}^{\textup c}
(\bar e_{\mathrm{ATC}}^{\textup c})
=
\tau_{\mathrm{ATC}}^{\textup c},
\]
the local ATT point lies on both
\(L_{\mathrm{all}}^{\textup c}\) and
\(L_{\mathrm{tr}}^{\textup c}\), whereas the local ATC point lies on both
\(L_{\mathrm{co}}^{\textup c}\) and
\(L_{\mathrm{all}}^{\textup c}\).

Finally, part~(ii) gives
\[
\begin{aligned}
L_{\mathrm{tr}}^{\textup c}
(\bar e_{\mathrm{ATO}}^{\textup c})
&=
\tau_{\mathrm{ATT}}^{\textup c}
+
\beta_{\mathrm{tr}}^{\textup c}
\left(
\bar e_{\mathrm{ATO}}^{\textup c}
-
\bar e_{\mathrm{ATT}}^{\textup c}
\right)=
\tau_{\mathrm{ATO}}^{\textup c},
\end{aligned}
\]
and
\[
\begin{aligned}
L_{\mathrm{co}}^{\textup c}
(\bar e_{\mathrm{ATO}}^{\textup c})
&=
\tau_{\mathrm{ATC}}^{\textup c}
+
\beta_{\mathrm{co}}^{\textup c}
\left(
\bar e_{\mathrm{ATO}}^{\textup c}
-
\bar e_{\mathrm{ATC}}^{\textup c}
\right)=
\tau_{\mathrm{ATO}}^{\textup c}.
\end{aligned}
\]
Therefore,
\[
\left(
\bar e_{\mathrm{ATT}}^{\textup c},
\tau_{\mathrm{ATT}}^{\textup c}
\right)
\in
L_{\mathrm{all}}^{\textup c}
\cap
L_{\mathrm{tr}}^{\textup c},
\]
\[
\left(
\bar e_{\mathrm{ATO}}^{\textup c},
\tau_{\mathrm{ATO}}^{\textup c}
\right)
\in
L_{\mathrm{tr}}^{\textup c}
\cap
L_{\mathrm{co}}^{\textup c},
\]
and
\[
\left(
\bar e_{\mathrm{ATC}}^{\textup c},
\tau_{\mathrm{ATC}}^{\textup c}
\right)
\in
L_{\mathrm{co}}^{\textup c}
\cap
L_{\mathrm{all}}^{\textup c}.
\]
Whenever each corresponding pair of lines is distinct, two affine lines have
at most one common point. Hence, the displayed common points are their unique
intersections.
\end{proof}

\subsection{Proof of Theorem \ref{thm:sample-local-cp-geometry}}
\begin{proof}
For a generic nonnegative weight vector \(w\), define
\[
\bar e_w
=
\frac{\sum_{i=1}^n w_i\hat e_i}{\sum_{i=1}^n w_i},
\qquad
\hat V_w
=
\sum_{i=1}^n w_i(\hat e_i-\bar e_w)^2.
\]
For the three local weight choices in the theorem, write
\(\hat V_j^{\textup c}=\hat V_{w_j^{\textup c}}\) for
\(j\in\{\mathrm{all},\mathrm{tr},\mathrm{co}\}\). The fitted slope satisfies
\[
\hat\beta_w^{\textup c}
=
\frac{
\sum_{i=1}^n w_i(\hat e_i-\bar e_w)\hat\tau_i^{\textup c}
}{
\sum_{i=1}^n w_i(\hat e_i-\bar e_w)^2
}
=
\frac{
\sum_{i=1}^n w_i(\hat e_i-\bar e_w)\hat\tau_i^{\textup c}
}{
\hat V_w
},
\]
because \(\sum_i w_i(\hat e_i-\bar e_w)=0\).

First consider the complier-weighted fit, \(w_i=\hat\pi_i^{\textup c}\). Let
\[
\bar e_{\mathrm{all}}^{\textup c}
=
\frac{
\sum_i \hat\pi_i^{\textup c}\hat e_i
}{
\sum_i \hat\pi_i^{\textup c}
}.
\]
Then
\[
\hat\tau^{\textup c}_{\mathrm{ATT}}
-
\hat\tau^{\textup c}_{\mathrm{ATE}}
=
\frac{\sum_i \hat e_i\hat\pi_i^{\textup c}\hat\tau_i^{\textup c}}
{\sum_i \hat e_i\hat\pi_i^{\textup c}}
-
\frac{\sum_i \hat\pi_i^{\textup c}\hat\tau_i^{\textup c}}
{\sum_i \hat\pi_i^{\textup c}}
=
\frac{
\sum_i
\hat\pi_i^{\textup c}
(\hat e_i-\bar e_{\mathrm{all}}^{\textup c})
\hat\tau_i^{\textup c}
}{
\sum_i \hat e_i\hat\pi_i^{\textup c}
}
=
\frac{
\hat\beta_{\mathrm{all}}^{\textup c}
\hat V_{\mathrm{all}}^{\textup c}
}{
\sum_i \hat e_i\hat\pi_i^{\textup c}
}.
\]
Similarly,
\[
\hat\tau^{\textup c}_{\mathrm{ATE}}
-
\hat\tau^{\textup c}_{\mathrm{ATC}}
=
\frac{\sum_i \hat\pi_i^{\textup c}\hat\tau_i^{\textup c}}
{\sum_i \hat\pi_i^{\textup c}}
-
\frac{\sum_i (1-\hat e_i)\hat\pi_i^{\textup c}\hat\tau_i^{\textup c}}
{\sum_i (1-\hat e_i)\hat\pi_i^{\textup c}}
=
\frac{
\sum_i
\hat\pi_i^{\textup c}
(\hat e_i-\bar e_{\mathrm{all}}^{\textup c})
\hat\tau_i^{\textup c}
}{
\sum_i (1-\hat e_i)\hat\pi_i^{\textup c}
}
=
\frac{
\hat\beta_{\mathrm{all}}^{\textup c}
\hat V_{\mathrm{all}}^{\textup c}
}{
\sum_i (1-\hat e_i)\hat\pi_i^{\textup c}
}.
\]

Next consider the fit with \(w_i=\hat e_i\hat\pi_i^{\textup c}\). Let
\[
\bar e_{\mathrm{tr}}^{\textup c}
=
\frac{
\sum_i \hat e_i^2\hat\pi_i^{\textup c}
}{
\sum_i \hat e_i\hat\pi_i^{\textup c}
}.
\]
Then
\[
\hat\tau^{\textup c}_{\mathrm{ATT}}
-
\hat\tau^{\textup c}_{\mathrm{ATO}}
=
\frac{\sum_i \hat e_i\hat\pi_i^{\textup c}\hat\tau_i^{\textup c}}
{\sum_i \hat e_i\hat\pi_i^{\textup c}}
-
\frac{
\sum_i \hat e_i(1-\hat e_i)\hat\pi_i^{\textup c}\hat\tau_i^{\textup c}
}{
\sum_i \hat e_i(1-\hat e_i)\hat\pi_i^{\textup c}
}.
\]
Rearranging the right-hand side gives
\[
\hat\tau^{\textup c}_{\mathrm{ATT}}
-
\hat\tau^{\textup c}_{\mathrm{ATO}}
=
\frac{
\sum_i
\hat e_i\hat\pi_i^{\textup c}
(\hat e_i-\bar e_{\mathrm{tr}}^{\textup c})
\hat\tau_i^{\textup c}
}{
\sum_i \hat e_i(1-\hat e_i)\hat\pi_i^{\textup c}
}
=
\frac{
\hat\beta_{\mathrm{tr}}^{\textup c}
\hat V_{\mathrm{tr}}^{\textup c}
}{
\sum_i \hat e_i(1-\hat e_i)\hat\pi_i^{\textup c}
}.
\]

Finally consider the fit with \(w_i=(1-\hat e_i)\hat\pi_i^{\textup c}\). Let
\[
\bar e_{\mathrm{co}}^{\textup c}
=
\frac{
\sum_i (1-\hat e_i)\hat e_i\hat\pi_i^{\textup c}
}{
\sum_i (1-\hat e_i)\hat\pi_i^{\textup c}
}.
\]
Then
\[
\hat\tau^{\textup c}_{\mathrm{ATO}}
-
\hat\tau^{\textup c}_{\mathrm{ATC}}
=
\frac{
\sum_i \hat e_i(1-\hat e_i)\hat\pi_i^{\textup c}\hat\tau_i^{\textup c}
}{
\sum_i \hat e_i(1-\hat e_i)\hat\pi_i^{\textup c}
}
-
\frac{
\sum_i (1-\hat e_i)\hat\pi_i^{\textup c}\hat\tau_i^{\textup c}
}{
\sum_i (1-\hat e_i)\hat\pi_i^{\textup c}
}.
\]
Rearranging gives
\[
\hat\tau^{\textup c}_{\mathrm{ATO}}
-
\hat\tau^{\textup c}_{\mathrm{ATC}}
=
\frac{
\sum_i
(1-\hat e_i)\hat\pi_i^{\textup c}
(\hat e_i-\bar e_{\mathrm{co}}^{\textup c})
\hat\tau_i^{\textup c}
}{
\sum_i \hat e_i(1-\hat e_i)\hat\pi_i^{\textup c}
}
=
\frac{
\hat\beta_{\mathrm{co}}^{\textup c}
\hat V_{\mathrm{co}}^{\textup c}
}{
\sum_i \hat e_i(1-\hat e_i)\hat\pi_i^{\textup c}
}.
\]

The corresponding horizontal differences are
\[
\widehat{\bar e}_{\mathrm{ATT}}^{\textup c}
-
\widehat{\bar e}_{\mathrm{ATE}}^{\textup c}
=
\frac{
\hat V_{\mathrm{all}}^{\textup c}
}{
\sum_i\hat e_i\hat\pi_i^{\textup c}
},
\qquad
\widehat{\bar e}_{\mathrm{ATE}}^{\textup c}
-
\widehat{\bar e}_{\mathrm{ATC}}^{\textup c}
=
\frac{
\hat V_{\mathrm{all}}^{\textup c}
}{
\sum_i(1-\hat e_i)\hat\pi_i^{\textup c}
},
\]
and
\[
\widehat{\bar e}_{\mathrm{ATT}}^{\textup c}
-
\widehat{\bar e}_{\mathrm{ATO}}^{\textup c}
=
\frac{
\hat V_{\mathrm{tr}}^{\textup c}
}{
\sum_i\hat e_i(1-\hat e_i)\hat\pi_i^{\textup c}
},
\qquad
\widehat{\bar e}_{\mathrm{ATO}}^{\textup c}
-
\widehat{\bar e}_{\mathrm{ATC}}^{\textup c}
=
\frac{
\hat V_{\mathrm{co}}^{\textup c}
}{
\sum_i\hat e_i(1-\hat e_i)\hat\pi_i^{\textup c}
}.
\]
Dividing each vertical difference by its corresponding horizontal difference
gives the slope identities.

The weighted least-squares normal equations imply
\[
\widehat L_{\mathrm{all}}^{\textup c}
(\widehat{\bar e}_{\mathrm{ATE}}^{\textup c})
=
\hat\tau_{\mathrm{ATE}}^{\textup c},
\qquad
\widehat L_{\mathrm{tr}}^{\textup c}
(\widehat{\bar e}_{\mathrm{ATT}}^{\textup c})
=
\hat\tau_{\mathrm{ATT}}^{\textup c},
\]
\[
\widehat L_{\mathrm{co}}^{\textup c}
(\widehat{\bar e}_{\mathrm{ATC}}^{\textup c})
=
\hat\tau_{\mathrm{ATC}}^{\textup c}.
\]
The slope identities imply that the local ATT centroid also lies on
\(\widehat L_{\mathrm{all}}^{\textup c}\), the local ATO centroid lies on
both \(\widehat L_{\mathrm{tr}}^{\textup c}\) and
\(\widehat L_{\mathrm{co}}^{\textup c}\), and the local ATC centroid also
lies on \(\widehat L_{\mathrm{all}}^{\textup c}\). These are the displayed
common points, and they are unique whenever the corresponding pairs of lines
are distinct.
\end{proof}

\subsection{Proof of Proposition \ref{prop:linear-ow}}\label{subsec:proof propos linear}
\begin{proof}
If \(e(X)\) is constant, the result is immediate. Hence suppose that \(e(X)\)
is not constant. If \(b=0\), then \(g(e)\) is constant and
\(\tauow=\tauatt=\tauatc\), so \eqref{eq:ow= l atc+ 1-l att} holds for the
displayed \(\lambda\). Now suppose that \(b\ne0\). To derive this weight, write
\[
\frac{\E[e(X)\{1 - e(X)\} \tau(X)]}{\E[e(X)\{1 - e(X)\}]} = \lambda \cdot \frac{\E[e(X) \tau(X)]}{\E[e(X)]} + (1 - \lambda) \cdot \frac{\E[\{1 - e(X)\} \tau(X)]}{\E[1 - e(X)]}.
\]
Substituting \(\E\{\tau(X)\mid e(X)\}=a+be(X)\), canceling the common
intercept, and dividing by \(b\) yields
\[
\frac{\E[e^2(X)\{1 - e(X)\}]}{\E[e(X)\{1 - e(X)\}]} = \lambda \cdot \frac{\E[e^2(X)]}{\E[e(X)]} + (1 - \lambda) \cdot \frac{\E[\{1 - e(X)\} e(X)]}{\E[1 - e(X)]}.
\]
Then, we have 
\begin{align*}
    \lambda = \frac{ \dfrac{\E[e^2(X)(1 - e(X))]}{\E[e(X)\{1 - e(X)\}]} - \dfrac{\E[\{1 - e(X)\} e(X)]}{\E[1 - e(X)]} }{ \dfrac{\E[e^2(X)]}{\E[e(X)]} - \dfrac{\E[\{1 - e(X)\} e(X)]}{\E[1 - e(X)]} }.
\end{align*}
This simplifies to:
\begin{align}\label{eq:lambda,abc}
    \lambda = \frac{ \dfrac{e_2 - e_3}{e_1 - e_2} - \dfrac{e_1 - e_2}{1 - e_1} }{ \dfrac{e_2}{e_1} - \dfrac{e_1 - e_2}{1 - e_1} }.
\end{align}
After algebraic manipulation, we obtain:
\begin{align*}
\lambda = \frac{ (1 - e_1)e_1(e_2 - e_3) - e_1(e_1 - e_2)^2 }{ (e_1 - e_2)(e_2 - e_1^2) } = \frac{e_1[ (1 - e_1)(e_2 - e_3) - (e_1 - e_2)^2 ]}{ (e_1 - e_2)(e_2 - e_1^2) }.
\end{align*}
Because $e_2-e_1^2=\var\{e(X)\}$, we have
\begin{align*}
\lambda=&\frac{\E\{ e(X)\} \left[\E\{1-e(X)\} \E [e^2(X)\{1-e(X)\}]-\E [e(X)\{1-e(X)\}]^2\right]}{\E [e(X)\{1-e(X)\}]\var\{e(X)\}}.
\end{align*}

Let \(A=1-e(X)\) and \(B=e(X)\). By the definition
\[
c^*=\frac{\E(AB)}{\E(A)},
\]
we have
\begin{align*}
\E(A)\E(AB^2)-\{\E(AB)\}^2
&=\E(A)\left\{\E(AB^2)-2c^*\E(AB)+(c^*)^2\E(A)\right\}\\
&=\E(A)\E\{A(B-c^*)^2\}.
\end{align*}
Returning to \(A=1-e(X)\) and \(B=e(X)\), this gives
\[
\E\{1-e(X)\}\E[e^2(X)\{1-e(X)\}]
-\E[e(X)\{1-e(X)\}]^2
=
\E\{1-e(X)\}\E[\{1-e(X)\}\{e(X)-c^*\}^2].
\]
Then we can finally obtain 
\begin{align*}
\lambda=&\frac{\E\{ e(X)\}\E\{ 1-e(X)\} \E[\{1-e(X)\}\{e(X)-c^*\}^2]}{\E \{e(X)(1-e(X))\}\var\{e(X)\}}.
\end{align*}
\end{proof}

\subsection{Proof of Proposition \ref{prop:linear beta}}

\begin{proof}
If \(e(X)\) is constant, the result is immediate. Suppose that \(e(X)\) is
not constant, and define
\[
m_{r,s}
=
\frac{\E\{\beta_{r+1,s}(X)\}}{\E\{\beta_{r,s}(X)\}}.
\]
Tilting a beta-weighted distribution by \(e(X)\) strictly increases its mean
propensity score, whereas tilting it by \(1-e(X)\) strictly decreases that
mean. Therefore,
\[
m_{u,v+1}<m_{u+1,v+1}<m_{u+1,v},
\]
so the value in \eqref{eq:lambda beta} lies in \((0,1)\).

If \(b=0\), all three estimands in
\eqref{eq:tau u+1 v+1 lambda tau u+1 v +(1-lambda) tau u, v+1} are equal,
and the identity follows. Now suppose that \(b\ne0\). Writing the desired
identity in weighted-average form gives
$$
\frac{\E[\beta_{u+1,v+1}(X) \tau(X)]}{\E[\beta_{u+1,v+1}(X)]} = \lambda \cdot \frac{\E[\beta_{u+1,v}(X) \tau(X)]}{\E[\beta_{u+1,v}(X)]} + (1 - \lambda) \cdot \frac{\E[\beta_{u,v+1}(X) \tau(X)]}{\E[\beta_{u,v+1}(X)]}.
$$
Substituting \(\E\{\tau(X)\mid e(X)\}=a+be(X)\), canceling the common
intercept, and dividing by \(b\) yields
$$
\frac{\E[\beta_{u+2,v+1}(X)]}{\E[\beta_{u+1,v+1}(X)]} = \lambda \cdot \frac{\E[\beta_{u+2,v}(X)]}{\E[\beta_{u+1,v}(X)]} + (1 - \lambda) \cdot \frac{\E[\beta_{u+1,v+1}(X)]}{\E[\beta_{u,v+1}(X)]}.
$$
It leads to \eqref{eq:lambda beta}.
\end{proof}

\subsection{Proof of Theorem~\ref{thm:local-beta-one-coordinate-diff}}

\begin{proof}
We first prove part (i). Let \(a=u'-u>0\) and write
\(r=r^{\textup c}_{u,u';v}\). By the definition of
\(r^{\textup c}_{u,u';v}\),
\[
r^a
=
\frac{\E\{\beta_{u',v}(X)\pi^{\textup c}(X)\}}
{\E\{\beta_{u,v}(X)\pi^{\textup c}(X)\}}.
\]
Because
\[
\beta_{u',v}(X)=\beta_{u,v}(X)e(X)^a,
\]
and \(e(X)\in(0,1)\), we have \(r\in(0,1)\). The function
\(t\mapsto t^a\) is strictly increasing on \((0,1)\), so
\[
\frac{e(X)^a-r^a}{e(X)-r}>0
\]
when \(e(X)\ne r\), and its value at \(e(X)=r\) equals the derivative
\(a r^{a-1}>0\). Therefore \(w^{\textup c}_{u,u';v}(X)>0\).

By the definition of \(w^{\textup c}_{u,u';v}(X)\),
\[
w^{\textup c}_{u,u';v}(X)\{e(X)-r\}
=
\beta_{u,v}(X)\pi^{\textup c}(X)
\{e(X)^a-r^a\}.
\]
Taking expectations gives
\[
\begin{aligned}
\E\!\left[
w^{\textup c}_{u,u';v}(X)\{e(X)-r\}
\right]
=
\E\{\beta_{u',v}(X)\pi^{\textup c}(X)\}
-
r^a\E\{\beta_{u,v}(X)\pi^{\textup c}(X)\}=0.
\end{aligned}
\]
Hence
\[
\E_{w^{\textup c}_{u,u';v}}\{e(X)\}=r.
\]

Now,
\[
\begin{aligned}
\tau^{\textup c}_{u',v}-\tau^{\textup c}_{u,v}
&=
\E\left[
\left\{
\frac{\beta_{u',v}(X)\pi^{\textup c}(X)}
{\E\{\beta_{u',v}(X)\pi^{\textup c}(X)\}}
-
\frac{\beta_{u,v}(X)\pi^{\textup c}(X)}
{\E\{\beta_{u,v}(X)\pi^{\textup c}(X)\}}
\right\}
\tau^{\textup c}(X)
\right]\\
&=
\frac{1}{\E\{\beta_{u',v}(X)\pi^{\textup c}(X)\}}
\E\left[
\beta_{u,v}(X)\pi^{\textup c}(X)
\{e(X)^a-r^a\}
\tau^{\textup c}(X)
\right]\\
&=
\frac{1}{\E\{\beta_{u',v}(X)\pi^{\textup c}(X)\}}
\E\left[
w^{\textup c}_{u,u';v}(X)\{e(X)-r\}
\tau^{\textup c}(X)
\right]\\
&=
\frac{\E\{w^{\textup c}_{u,u';v}(X)\}}
{\E\{\beta_{u',v}(X)\pi^{\textup c}(X)\}}
\E_{w^{\textup c}_{u,u';v}}
\left[
\{e(X)-\E_{w^{\textup c}_{u,u';v}}(e(X))\}
\tau^{\textup c}(X)
\right]\\
&=
\frac{
\E\{w^{\textup c}_{u,u';v}(X)\}
\cov_{w^{\textup c}_{u,u';v}}\{\tau^{\textup c}(X),e(X)\}
}{
\E\{\beta_{u',v}(X)\pi^{\textup c}(X)\}
}.
\end{aligned}
\]
This proves part (i).

We next prove part (ii). Let \(b=v'-v>0\) and write
\(s=r^{\textup c}_{u;v,v'}\). By the definition of
\(r^{\textup c}_{u;v,v'}\),
\[
s^b
=
\frac{\E\{\beta_{u,v'}(X)\pi^{\textup c}(X)\}}
{\E\{\beta_{u,v}(X)\pi^{\textup c}(X)\}}.
\]
Because
\[
\beta_{u,v'}(X)=\beta_{u,v}(X)\{1-e(X)\}^b,
\]
and \(e(X)\in(0,1)\), we have \(s\in(0,1)\). The function
\(t\mapsto \{1-t\}^b\) is strictly decreasing on \((0,1)\), so
\[
\frac{s^b-\{1-e(X)\}^b}{e(X)-1+s}>0
\]
when \(e(X)\ne 1-s\), and its value at \(e(X)=1-s\) equals
\(b s^{b-1}>0\). Therefore \(w^{\textup c}_{u;v,v'}(X)>0\).

By the definition of \(w^{\textup c}_{u;v,v'}(X)\),
\[
w^{\textup c}_{u;v,v'}(X)\{e(X)-1+s\}
=
\beta_{u,v}(X)\pi^{\textup c}(X)
\left[
s^b-\{1-e(X)\}^b
\right].
\]
Taking expectations gives
\[
\begin{aligned}
\E\!\left[
w^{\textup c}_{u;v,v'}(X)\{e(X)-1+s\}
\right]
=
s^b\E\{\beta_{u,v}(X)\pi^{\textup c}(X)\}
-
\E\{\beta_{u,v'}(X)\pi^{\textup c}(X)\}=0.
\end{aligned}
\]
Hence
\[
\E_{w^{\textup c}_{u;v,v'}}\{e(X)\}=1-s.
\]

Finally,
\[
\begin{aligned}
\tau^{\textup c}_{u,v}-\tau^{\textup c}_{u,v'}
&=
\E\left[
\left\{
\frac{\beta_{u,v}(X)\pi^{\textup c}(X)}
{\E\{\beta_{u,v}(X)\pi^{\textup c}(X)\}}
-
\frac{\beta_{u,v'}(X)\pi^{\textup c}(X)}
{\E\{\beta_{u,v'}(X)\pi^{\textup c}(X)\}}
\right\}
\tau^{\textup c}(X)
\right]\\
&=
\frac{1}{\E\{\beta_{u,v'}(X)\pi^{\textup c}(X)\}}
\E\left[
\beta_{u,v}(X)\pi^{\textup c}(X)
\left\{
s^b-\{1-e(X)\}^b
\right\}
\tau^{\textup c}(X)
\right]\\
&=
\frac{1}{\E\{\beta_{u,v'}(X)\pi^{\textup c}(X)\}}
\E\left[
w^{\textup c}_{u;v,v'}(X)\{e(X)-1+s\}
\tau^{\textup c}(X)
\right]\\
&=
\frac{\E\{w^{\textup c}_{u;v,v'}(X)\}}
{\E\{\beta_{u,v'}(X)\pi^{\textup c}(X)\}}
\E_{w^{\textup c}_{u;v,v'}}
\left[
\{e(X)-\E_{w^{\textup c}_{u;v,v'}}(e(X))\}
\tau^{\textup c}(X)
\right]\\
&=
\frac{
\E\{w^{\textup c}_{u;v,v'}(X)\}
\cov_{w^{\textup c}_{u;v,v'}}\{\tau^{\textup c}(X),e(X)\}
}{
\E\{\beta_{u,v'}(X)\pi^{\textup c}(X)\}
}.
\end{aligned}
\]
This proves part (ii).
\end{proof}

\subsection{Proof of Corollary~\ref{coro:cate beta weight}}

\begin{proof}
Recall \(\pi^{\textup{c}}(X)=\p(U=\textup{c}\mid X)\) in Section \ref{sec:iv}.
Define \(\E^{\textup{c}}\) as expectation with respect to the distribution of \(X\) among compliers:
\begin{align*}
\E^{\textup{c}}\{\varphi(X)\}
=
\frac{\E[\varphi(X)\pi^{\textup{c}}(X)]}{\E\{\pi^{\textup{c}}(X)\}}
\end{align*}
for any integrable function \(\varphi(X)\).
Then
\[
\tau^{\textup{c}}_{u,v}
=
\frac{\E^{\textup{c}}\{\beta_{u,v}(X)\tau^{\textup{c}}(X)\}}
{\E^{\textup{c}}\{\beta_{u,v}(X)\}}.
\]
Moreover,
\[
g^{\textup{c}}(e)
=
\E^{\textup{c}}\{\tau^{\textup{c}}(X)\mid e(X)=e\}
=
\E\{Y(1)-Y(0)\mid U=\textup{c}, e(X)=e\}.
\]
Thus the local beta-weighted estimands have the same form as the beta-weighted estimands in Corollary~\ref{coro:beta weight}, with the population distribution replaced by the complier distribution. We give the details for completeness.

Define the normalized beta weight under the complier distribution by
\[
\omega^{\textup{c}}_{u,v}(X)
=
\frac{\beta_{u,v}(X)}
{\E^{\textup{c}}\{\beta_{u,v}(X)\}},
\qquad
\E^{\textup{c}}\{\omega^{\textup{c}}_{u,v}(X)\}=1.
\]
Then
\[
\tau^{\textup{c}}_{u,v}
=
\E^{\textup{c}}\{\omega^{\textup{c}}_{u,v}(X)\tau^{\textup{c}}(X)\}.
\]
Since \(\omega^{\textup{c}}_{u,v}(X)\) is a function of \(e(X)\),
\[
\tau^{\textup{c}}_{u,v}
=
\E^{\textup{c}}\{\omega^{\textup{c}}_{u,v}(X)g^{\textup{c}}(e(X))\}.
\]

We first prove monotonicity in \(u\). Fix \(v\ge1\) and take \(u_2>u_1\ge1\). Define
\[
h(X)=\omega^{\textup{c}}_{u_2,v}(X),
\qquad
k(X)=\omega^{\textup{c}}_{u_1,v}(X).
\]
Then \(\E^{\textup{c}}\{h(X)\}=\E^{\textup{c}}\{k(X)\}=1\), and
\[
\tau^{\textup{c}}_{u_2,v}
-
\tau^{\textup{c}}_{u_1,v}
=
\E^{\textup{c}}\!\left[\{h(X)-k(X)\}g^{\textup{c}}(e(X))\right].
\]
Because \(\E^{\textup{c}}\{h(X)-k(X)\}=0\), for any constant \(C\),
\[
\tau^{\textup{c}}_{u_2,v}
-
\tau^{\textup{c}}_{u_1,v}
=
\E^{\textup{c}}\!\left[
\{h(X)-k(X)\}\{g^{\textup{c}}(e(X))-g^{\textup{c}}(C)\}
\right].
\]
Moreover,
\[
\frac{h(X)}{k(X)}
=
\frac{\E^{\textup{c}}\{\beta_{u_1,v}(X)\}}
{\E^{\textup{c}}\{\beta_{u_2,v}(X)\}}
e(X)^{u_2-u_1},
\]
which is non-decreasing in \(e(X)\). Let
\[
C
=
\left[
\frac{\E^{\textup{c}}\{\beta_{u_2,v}(X)\}}
{\E^{\textup{c}}\{\beta_{u_1,v}(X)\}}
\right]^{1/(u_2-u_1)}.
\]
Then
\[
h(X)\ge k(X)
\quad\Longleftrightarrow\quad
e(X)\ge C.
\]
If \(g^{\textup{c}}\) is non-decreasing, then
\[
\{h(X)-k(X)\}\{g^{\textup{c}}(e(X))-g^{\textup{c}}(C)\}\ge0
\]
almost surely under the complier distribution. Therefore,
\[
\tau^{\textup{c}}_{u_2,v}
\ge
\tau^{\textup{c}}_{u_1,v}.
\]
If \(g^{\textup{c}}\) is non-increasing, the same argument gives the reverse inequality. Hence \(\tau^{\textup{c}}_{u,v}\) is non-decreasing in \(u\) under Assumption~\ref{assump:iv mono}(i) and non-increasing in \(u\) under Assumption~\ref{assump:iv mono}(ii).

We next prove monotonicity in \(v\). Fix \(u\ge1\) and take \(v_2>v_1\ge1\). Define
\[
h(X)=\omega^{\textup{c}}_{u,v_2}(X),
\qquad
k(X)=\omega^{\textup{c}}_{u,v_1}(X).
\]
Then
\[
\tau^{\textup{c}}_{u,v_2}
-
\tau^{\textup{c}}_{u,v_1}
=
\E^{\textup{c}}\!\left[
\{h(X)-k(X)\}\{g^{\textup{c}}(e(X))-g^{\textup{c}}(C')\}
\right]
\]
for any constant \(C'\). Moreover,
\[
\frac{h(X)}{k(X)}
=
\frac{\E^{\textup{c}}\{\beta_{u,v_1}(X)\}}
{\E^{\textup{c}}\{\beta_{u,v_2}(X)\}}
\{1-e(X)\}^{v_2-v_1},
\]
which is non-increasing in \(e(X)\). Let
\[
C'
=
1-
\left[
\frac{\E^{\textup{c}}\{\beta_{u,v_2}(X)\}}
{\E^{\textup{c}}\{\beta_{u,v_1}(X)\}}
\right]^{1/(v_2-v_1)}.
\]
Then
\[
h(X)\ge k(X)
\quad\Longleftrightarrow\quad
e(X)\le C'.
\]
If \(g^{\textup{c}}\) is non-decreasing, then
\[
\{h(X)-k(X)\}\{g^{\textup{c}}(e(X))-g^{\textup{c}}(C')\}\le0
\]
almost surely under the complier distribution. Therefore,
\[
\tau^{\textup{c}}_{u,v_2}
\le
\tau^{\textup{c}}_{u,v_1}.
\]
If \(g^{\textup{c}}\) is non-increasing, the same argument gives the reverse inequality. Hence \(\tau^{\textup{c}}_{u,v}\) is non-increasing in \(v\) under Assumption~\ref{assump:iv mono}(i) and non-decreasing in \(v\) under Assumption~\ref{assump:iv mono}(ii).

Combining the two parts completes the proof.
\end{proof}
\end{document}